\newcommand{\InConfVer}[1]{}
\newcommand{\InFullVer}[1]{#1}
\newcommand{\InConfVer}[1]{#1}
\newcommand{\InFullVer}[1]{}
\newcommand{\xparagraph}[1]{{\smallskip\textsf{#1}}}
\newcommand{\Opt}{\mathrm{opt}}%
\newcommand{\OptLP}{\mathrm{opt}_{\text{LP}}}
\newcommand{\PTAS}{\Term{P{T}AS}\xspace}
\newcommand{\ts}{\hspace{0.6pt}}%
\renewcommand{\th}{\si{th}\xspace}
\newcommand{\scl}{\Delta}%
\newcommand{\sclC}{\alpha}
\providecommand{\si}[1]{#1}
\newcommand{\AlinaThanks}[1]{%
   \thanks{Department of Computer Science; %
      University of Illinois; %
      201 N. Goodwin Avenue; %
      Urbana, IL, 61801, USA; %
      {\tt \si{ene}1\atgen{}\si{uiuc}.\si{edu}}; {\tt
         \si{\url{http://www.cs.uiuc.edu/\string~ene1/}}.} #1}}
\newcommand{\BenThanks}[1]{\thanks{Department of Computer Science;
      University of Illinois; %
      201 N. Goodwin Avenue; %
      Urbana, IL, 61801, USA; %
      {\tt \si{raichel}2\atgen{}\si{uiuc}.\si{edu}}; %
      {\tt \url{\si{http://www.cs.uiuc.edu/\string~\si{raichel2}}}}%
      . %
      #1}}
\newcommand{\atgen}{\symbol{'100}}
\newcommand{\SarielThanks}[1]{\thanks{Department of Computer Science;
      University of Illinois; 201 N. Goodwin Avenue; Urbana, IL,
      61801, USA; {\tt \si{sariel}\atgen{}\si{uiuc.edu}}; %
      {\tt \url{http://www.uiuc.edu/\string~sariel/}.} #1}}
\definecolor{blue25}{rgb}{0,0,0.55}%
\newcommand{\emphic}[2]{%
   \textcolor{blue25}{%
      \textbf{\emph{#1}}}%
   \index{#2}}
\newcommand{\emphi}[1]{\emphic{#1}{#1}}
\newcommand{\pth}[2][\!]{#1\left({#2}\right)}
\definecolor{red25}{rgb}{0.4,0,0.0}%
\newcommand{\PStyle}[1]{\textcolor{red25}{\textsc{#1}}}
\newcommand{\PackRegions}       {\PStyle{{Pack{}Regions}}\xspace}
\newcommand{\PackPoints}        {\PStyle{{Pack{}Points}}\xspace}
\newcommand{\PackHGraph}        {\PStyle{{HGraph{}Packing}}\xspace}%
\newcommand{\PackHalfspaces}    {\PStyle{{Pack{}Halfspaces}}\xspace}
\newcommand{\PackRaysInPlanes}  {\PStyle{{Pack{}Rays{}In{}Planes{}}}\xspace}
\newcommand{\PackPointsInDisks} {\PStyle{{Pack{}Points{}In{}Disks}}\xspace}
\newcommand{\PackRectsInPnts}   {\PStyle{{Pack{}Re{}ct{}s{}In{}Points}}\xspace}
\newcommand{\PackBoxesInPnts}   {\PStyle{{Pack{}Boxes{}In{}Points}}\xspace}
\newcommand{\PackPntsInSkyline} {\PStyle{{Pack{}Pnts{}In{}Skyline}}\xspace}
\newcommand{\PackPntsInRects}   {\PStyle{\si{PackPntsInRects}}\xspace}
\newcommand{\PackPntsInFTri}    {\PStyle{\si{PackPntsInFatTriangs}}\xspace}
\newtheorem{theorem}{Theorem}[section]
\newtheorem{lemma}[theorem]{Lemma}%
\newtheorem{rexample}[theorem]{Running Example}
\newtheorem{definition}[theorem]{Definition}
\newtheorem{corollary}[theorem]{Corollary}
\newtheorem{problem}[theorem]{Problem}%
\newtheorem{remark}[theorem]{Remark}}
\newcommand{\brc}[1]{\left\{ {#1} \right\}}
\newcommand{\ObjSet}{\EuScript{D}}%
\newcommand{\PlaneSet}{\EuScript{H}}%
\newcommand{\plane}{\mathsf{h}}%
\newcommand{\HalfspaceSet}{\EuScript{S}}%
\newcommand{\RaySet}{\EuScript{R}}%
\newcommand{\RectSet}{\EuScript{B}}
\newcommand{\RectSetA}{\EuScript{D}}
\newcommand{\BoxSet}{\EuScript{B}}
\newcommand{\rect}{\mathsf{b}}
\newcommand{\VSetA}{X}%
\newcommand{\VSetB}{Y}%
\newcommand{\VSetD}{U}%
\newcommand{\VOpt}{V_{\Opt}}%
\newcommand{\Term}[1]{\textsf{#1}}%
\newcommand{\VC}{\Term{VC}\xspace}
\newcommand{\LP}{\Term{LP}\xspace}%
\newcommand{\LCA}{\Term{LCA}\xspace}%
\newcommand{\HyperLP}{\textsc{Hypergraph-LP}\xspace}
\newcommand{\APX}{\Term{AP{X}}\xspace}%
\newcommand{\APXHard}{\Term{AP{X}-hard}\xspace}%
\newcommand{\NPHard}{\Term{NP-Hard}\xspace}%
\renewcommand{\P}{\Term{P}\xspace}%
\newcommand{\NP}{\Term{NP}\xspace}%
\newcommand{\hedge}{f}%
\newcommand{\hedgeA}{z}%
\newcommand{\conflict}{h}%
\newcommand{\conflictSet}{\mathcal{H}}%
\newcommand{\RSample}{\mathsf{R}}%
\newcommand{\graph}{{G}}%
\newcommand{\hgraph}{\mathsf{G}}%
\newcommand{\VSet}{\mathsf{V}}%
\newcommand{\HESet}{\mathsf{E}}%
\newcommand{\vertex}{v}%
\newcommand{\weight}[1]{w\pth{#1}}%
\newcommand{\RSet}{\mathcal{C}}%
\newcommand{\OSet}{\mathcal{O}}
\newcommand{\Energy}{\EuScript{E}}
\newcommand{\EnergyX}[2][\!]{\EuScript{E}\pth[#1]{#2}}
\newcommand{\cRelax}{\phi}
\newcommand{\forceX}[2][\!]{\rho\pth[#1]{#2}}
\newcommand{\forceY}[2]{\rho_{#1}\pth{#2}}%
\newcommand{\resistC}{\eta}%
\newcommand{\resistZ}[3]{\resistC_{#1}\pth{#2, #3}}
\newcommand{\Union}[2][\!]{\mathsf{U}\pth[#1]{#2}}
\newcommand{\union}[2][\!]{\mathsf{u}\pth[#1]{#2}}
\renewcommand{\Re}{{\rm I\!\hspace{-0.025em} R}}
\newcommand{\mc}{\nu}
\newcommand{\mbb}     {\rule[0.0cm]{0.0cm}{0.38cm}} 
\newcommand{\MakeBig} {\rule[-.2cm]{0cm}{0.4cm}}
\newcommand{\MakeSBig}{\rule[0.0cm]{0.0cm}{0.38cm}} 
\newcommand{\seclab}[1]{\label{sec:#1}}
\newcommand{\secref}[1]{Section~\ref{sec:#1}}
\newcommand{\problab}[1]{\label{prob:#1}}
\newcommand{\probref}[1]{Problem~\ref{prob:#1}}
\newcommand{\thmlab}[1]{{\label{theo:#1}}}%
\newcommand{\thmref}[1]{Theorem~\ref{theo:#1}}%
\providecommand{\deflab}[1]{\label{def:#1}}
\newcommand{\defref}[1]{Definition~\ref{def:#1}}
\newcommand{\corlab}[1]{\label{cor:#1}}
\newcommand{\corref}[1]{Corollary~\ref{cor:#1}}
\newcommand{\itemlab}[1]{\label{item:#1}}
\newcommand{\itemref}[1]{(\ref{item:#1})}
\newcommand{\lemlab}[1]{\label{lemma:#1}}
\newcommand{\lemref}[1]{Lemma~\ref{lemma:#1}}
\newcommand{\remlab}[1]{\label{rem:#1}}
\newcommand{\remref}[1]{Remark~\ref{rem:#1}}
\newcommand{\figlab}[1]{\label{figure:#1}}
\newcommand{\figref}[1]{Figure~\ref{figure:#1}}
\newcommand{\TwoFigures}[6]{\begin{tabular}{cc}
       \begin{minipage}{0.48\linewidth}
           \begin{center}
               \centerline{\includegraphics[#1]{{#2}}}
           \end{center}
       \end{minipage}
       &
       \begin{minipage}{0.48\linewidth}
           \centerline{\includegraphics[#4]{{#5}}}
       \end{minipage}
       ~\\[-0.2cm]
       {#3} & {#6}
   \end{tabular}}
\newenvironment{proof}%
  {\trivlist\item[]\emph{Proof}:}%
  {\unskip\nobreak\hskip 1em plus 1fil\nobreak%
   \myqedsymbol
   \parfillskip=0pt%
   \endtrivlist}
  {\trivlist\item[]\emph{\textbf{Proof of #1}}:}%
  {\unskip\nobreak\hskip 1em plus 1fil\nobreak%
   \myqedsymbol
   \parfillskip=0pt%
   \endtrivlist}
\newcommand{\myqedsymbol}{\rule{2mm}{2mm}}
\newcommand{\cardin}[1]{\left| {#1} \right|}
\newcommand{\pbrcx}[1]{\left[ {#1} \right]}
\newcommand{\Ex}[2][\!]{\mathop{\mathbf{E}}#1\pbrcx{#2}}
\newcommand{\Prob}[1]{\mathop{\mathbf{Pr}}\!\pbrcx{#1}}
\newcommand{\sep}[1]{\,\left|\, {#1} \MakeBig\right.}
\newcommand{\permut}[1]{\left\langle {#1} \right\rangle}%
\newcommand{\capacityX}[1]{\#(#1)}
\newcommand{\Pnum}{\psi}%
\newcommand{\eqlab}[1]{\label{equation:#1}}
\newcommand{\Eqref}[1]{\si{Eq}.~(\ref{equation:#1})}
\providecommand{\ds}{\displaystyle}
\newcommand{\local}{\mathsf{L}}%
\newcommand{\eps}{\varepsilon}%
\newcommand{\seg}{\mathsf{s}}%
\newcommand{\SegSet}{\mathsf{S}}%
\newcommand{\feq}[2]{F_{{#1}} \pth{#2}}%
\newcommand{\FamilyA}{\mathcal{Z}}
\newcommand{\InducedX}[2]{#1_{#2}}
\newcommand{\Alg}{\ensuremath{\mathtt{alg}}\xspace}
\newcommand{\iright}{i_{\mathrm{right}}}
\newcommand{\ileft}{i_{\mathrm{left}}}
\newcommand{\remove}[1]{}
\newcommand{\xkappa}{\EuScript{M}}
\newcommand{\ceil}[1]{\left\lceil {#1} \right\rceil}
\newcommand{\floor}[1]{\left\lfloor {#1} \right\rfloor}
\newcommand{\Turan}{T{u}r\'an\xspace}
\newcommand{\etal}{\textit{et~al.}\xspace}
\newcommand{\DLeftX}[1]{#1_{\leftarrow}}
\newcommand{\DLeftTopX}[1]{#1_{\nwarrow}}
\newcommand{\DLeftBottomX}[1]{#1_{\swarrow}}
\newcommand{\DRightX}[1]{#1_{\rightarrow}}
\newcommand{\Tri}{\triangle}%
\newcommand{\TriL}{\DLeftX{\Tri}}%
\newcommand{\TriLT}{\DLeftTopX{\Tri}}%
\newcommand{\TriLB}{\DLeftBottomX{\Tri}}%
\newcommand{\TriR}{\DRightX{\Tri}}%
\newcommand{\TriSet}{\EuScript{T}}%
\newcommand{\TriSetA}{\EuScript{S}}%
\newcommand{\TriSetLB}{\DLeftBottomX{\TriSet}}%
\newcommand{\TriSetLT}{\DLeftTopX{\TriSet}}
\newcommand{\TriSetR} {\DRightX{\TriSet}}
\newcommand{\pnt}{{\mathsf{p}}}%
\newcommand{\pntA}{{\mathsf{q}}}%
\newcommand{\Line}{\ell}
\newcommand{\LineA}{\wp}
\newcommand{\PntSet}{\mathsf{P}}%
\newcommand{\regionA}{r}
\newcommand{\FTriConst}{9}
\newcommand{\apndlab}[1]{\label{apnd:#1}}
\newcommand{\apndref}[1]{Appendix~\ref{apnd:#1}}
\newcommand{\ParPicBeforeParagraph}[1]{ \vspace{0.8cm}

   #1

   \vspace{-0.8cm} }
\begin{document}

\title{Geometric Packing under Non-uniform Constraints%
   \footnote{The full version of the paper is available from the
      \si{arxiv} \cite{ehr-gpnuc-11}.}  }

\author{%
   Alina Ene%
   \AlinaThanks{Work on this paper was partially supported by NSF
      grants CCF-0728782 and CCF-1016684.}%
   \and%
   Sariel Har-Peled%
   \SarielThanks{Work on this paper was partially supported by a NSF
      AF award CCF-0915984.}%
   \and%
   Benjamin Raichel%
   \BenThanks{Work on this paper was partially supported by a NSF AF
      award CCF-0915984.}%
}

\date{\today}

\maketitle
\InConfVer{%
   \setcounter{page}{0}

   \thispagestyle{empty}}

\addtocounter{footnote}{3}


\begin{abstract}
    We study the problem of discrete geometric packing. Here, given
    weighted regions (say in the plane) and points (with capacities),
    one has to pick a maximum weight subset of the regions such that
    no point is covered more than its capacity. We provide a general
    framework and an algorithm for approximating the optimal solution
    for packing in hypergraphs arising out of such geometric
    settings. Using this framework we get a flotilla of results on
    this problem (and also on its dual, where one wants to pick a
    maximum weight subset of the points when the regions have
    capacities). For example, for the case of fat triangles of similar
    size, we show an $O(1)$-approximation and prove that no \PTAS is
    possible.
\end{abstract}

\InConfVer{\newpage}

\section{Introduction}
\seclab{intro}

\paragraph{Motivation and examples.}

Consider the problem of \emphi{obnoxious facility location}
\cite{t-oflg-91, C-sofl-99}; that is, you have to place several
facilities, but these facilities are undesired (i.e.,
obnoxious). Facilities of this type include nuclear reactors, wind
farms, airports, power plants, factories, prisons, universities,
etc. Facilities can also be semi-desirable -- a customer might want to
have supermarkets close to their home, but they do not want to have
too many of them close by as they increase traffic, noise, etc. One
natural way to model this geometrically is to associate each obnoxious
facility with its region of undesirability. We also have customers
(modeled as points), and each customer has a threshold of how many
obnoxious facilities it is willing to accept covering it. Different
customers may have different thresholds, for example because more
affluent people have stronger political power and it is harder to
place obnoxious facilities near their homes.

Naturally, if you allow only a single region to cover each customer,
then this is a classical packing problem, and much work has been done
on packing disks/balls \cite{smcscg-nacps-07}. However, there are many
cases where allowing limited interaction between the packed regions is
allowed (after all, these facilities are required for modern existence).
As a concrete example of this type of problem, consider the placement
of radio stations/cellphone towers. While airports allow only very
limited levels of interference\footnote{See
   \si{\url{\si{http://tinyurl.com/7td67v3}}} for a story of an
   airport closing down because of radio interference.}, higher levels 
of such interference is acceptable in residential neighborhoods.  However, 
at a certain point there is going to be resistance to placing 
more wireless towers in residential areas, as these towers are viewed as 
causing cancer (this fear might be baseless, but it does not change the political reality of the difficulty of placing such towers). On the other 
hand, there is little resistance to placing such towers along highways in sparsely populated areas.

In this paper, we are interested in the modeling of such problems and
in the computation of an efficient approximation to the optimal
solution of such problems.

\subsection*{Modeling.}
As hinted by the above, perhaps the most natural way to model this problem is 
as a generalization of the well known independent set problem.

\paragraph{Independent set} %
is a fundamental discrete optimization problem.  Unfortunately, it is
not only computationally hard, but it is even hard to approximate to
within a factor of $n^{1-\eps}$, for any constant $\eps$
\cite{h-chaw-99} (under the assumption that
$\textsf{NP}\neq\textsf{P}$). Surprisingly, the problem is
considerably easier in some geometric settings. For example, there is
a \PTAS\footnote{Polynomial time approximation scheme.}
\cite{c-ptasp-03,ejs-ptasg-05} for the following problem: Given a set
of unit disks in the plane, find a maximum cardinality subset of the
disks whose interiors are disjoint.  Furthermore, a simple local
search algorithm yields the desired approximation: For any $\eps > 0$,
the local search algorithm that tries to swap subsets of size $O(1 /
\eps^2)$ yields a $(1 - \eps)$-approximation in $n^{O(1/\eps^2)}$ time
\cite{ch-aamis-09, ch-aamis-11}.

\paragraph{The discrete independent set problem.} %
In this paper, we consider packing problems in geometric settings that
are natural extensions of the geometric independent set problem
described above.  As a starting point, motivated by practical
applications, we consider the discrete version of the geometric
independent set problem in which, in addition to a set of weighted
regions, we are given a set of points, and the goal is to select a
maximum weight subset of the regions so that each point is contained
in at most one of the selected regions.  We refer to this problem as
the \emphi{discrete independent set} problem.  Chan and Har-Peled
\cite{ch-aamis-11} studied this discrete variant and proved that one
can get a good approximation if the union complexity of the regions is
small.

\parpic[r]{\includegraphics{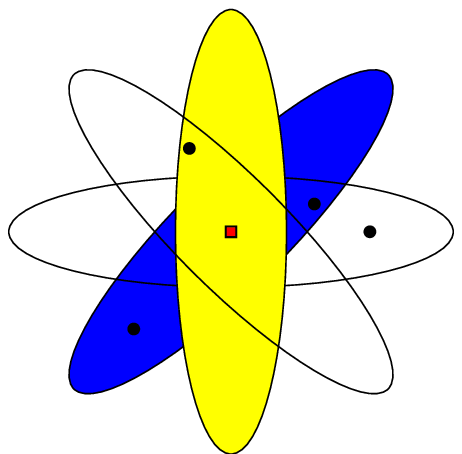}}%

Note that the discrete independent set problem captures the continuous
version of the independent set problem, since we can place a point in
each face of the induced arrangement of the given regions.  In fact,
the discrete version is considerably harder (in some cases) than the
continuous variant.  The difficulty lies in that several regions
forming a valid solution to an instance of a discrete independent set
problem may contain a common point that is not part of the set of
points given as input; the figure on the right shows an example in
which the middle point, marked as a square, is covered twice by the
given valid solution.

\parpic[r]{\includegraphics{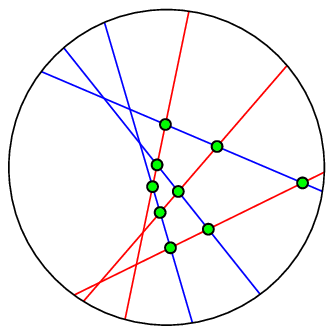}}%

To illustrate the difference in difficulty, consider the case when the
input consists of a set $\SegSet$ of segments (in general position)
with their endpoints on a circle, such that every pair of segments
intersect.  Clearly, in the continuous version, the maximum
independent set of segments is a single segment.  However, in this
case, the discrete version captures the graph independent set
problem. More precisely, we can encode any instance of independent set
(i.e., a graph $G=(V,E)$) as an instance of this problem as
follows. Every vertex $v \in V$ is mapped to a segment $\seg_v$ of
$\SegSet$, and every edge $uv \in E$, is mapped to the point $\seg_u
\cap \seg_v$ (which is added to a set of points $\PntSet$). Clearly,
an independent set of segments of $\SegSet$ (in relation to the point
set $\PntSet$) corresponds to an independent set in $G$. That is, the
geometric discrete version is sometimes as hard as the graph
independent set problem.  For example, the figure on the right depicts
the resulting instance encoding independent set for $K_{3,3}$.

\paragraph{The packing problem.} %
In this paper, we are interested in the natural extension of the
discrete independent set problem to the case where every point has a
capacity and might be covered several times (but not exceeding its
capacity). The resulting problem has a flavor of a packing problem,
and is defined formally as follows.

\begin{problem} {\rm{(\PackRegions.)}} %
    Given a set $\ObjSet$ of regions and a set $\PntSet$ of points
    such that each region $r$ has a weight $\weight{r}$ and each point
    $\pnt$ has a capacity $\capacityX{\pnt}$, find a maximum weight
    subset $X$ of the regions such that, for each point $\pnt$, the
    number of regions in $X$ that contain $\pnt$ is at most its
    capacity $\capacityX{\pnt}$.
    
    \problab{pack:regions}
\end{problem}

We emphasize that different points might have different capacities,
which makes the problem considerably more challenging to solve than
the unit capacities case (i.e., the discrete independent set problem).
We also consider the following dual problem in which the points have
weights and the regions have capacities.

\begin{problem}{\rm{(\PackPoints.)}}
    Given a set $\ObjSet$ of regions and a set $\PntSet$ of points
    such that each region $r$ has a capacity $\capacityX{r}$ and each
    point $\pnt$ has a weight $\weight{\pnt}$, find a maximum weight
    subset $X$ of the points such that each region $r$ contains at
    most $\capacityX{r}$ points of $X$.
    
    \problab{pack:points}
\end{problem}

\noindent
\textbf{Hypergraph framework.} %
These two problems can be stated in a unified way in the language of
hypergraphs\footnote{A \emphi{hypergraph} $\hgraph$ is a pair
   $(\VSet,\HESet)$, where $\VSet$ is a set of vertices and $\HESet$
   is a collection of subsets of $\VSet$ which are called
   \emphi{hyperedges}.}. Given an instance of \PackRegions, we
construct a hypergraph as follows: Each weighted region is a vertex,
and all the regions containing a given point of capacity $k$ become a
hyperedge (consisting of these regions) of capacity $k$.  A similar
reduction works for \PackPoints, where the given weighted points are
the vertices, and each region of capacity $k$ becomes a hyperedge of
capacity $k$ consisting of all of the points contained in this
region. Therefore the previous two problems are special cases of the
following problem.

\begin{problem}{\rm{(\PackHGraph.)}}
    Given a hypergraph $\hgraph = (\VSet,\HESet)$ with a weight
    function $\weight{\cdot}$ on the vertices and a capacity function
    $\capacityX{\cdot}$ on the hyperedges, find a maximum weight
    subset $X\subseteq \VSet$, such that $\forall \hedge\in \HESet$ we
    have $|X\cap \hedge|\leq \capacityX{\hedge}$.
\end{problem}

We will be interested primarily in hypergraphs with certain hereditary
properties. A hypergraph property is \emph{hereditary} if the
sub-hypergraph induced by any subset of the vertices has the property;
an example of a hereditary property of hypergraphs is having bounded
\VC dimension. Roughly, we are interested in hypergraphs having the
\emphi{bounded growth property}: For any induced sub-hypergraph on $t$
vertices the number of its hyperedges that contain exactly $k$
vertices is near linear in $t$ and its dependency on $k$ is bounded by
$2^{O(k)}$, see \defref{bounded}. Such hypergraphs arise naturally
when considering points and ``nice'' regions in the plane.


\subsection*{Our results.}
\begin{compactitem}[$\bullet$]
    \item \xparagraph{Main result.} Our main result is an algorithm
    that provides a good approximation for \PackHGraph as a function
    of the growth of the hypergraph, see \thmref{p:hypergraph}.  Our
    result can be viewed as an extension of the work of Chan and
    Har-Peled \cite{ch-aamis-11} to these considerably more general
    and intricate settings.
    
    \item \xparagraph{Regions with low union complexity.}  In
    \secapndref{applications}, we apply our main result to regions that
    have low union complexity, and we get the following results:
    
    \begin{compactenum}[(A)]
        \item If the union complexity of $n$ regions is $O( n u(n))$
        then we get an $O\pth{ u(n)^{1/\mc} }$-approximation for
        \PackRegions, where $\mc$ is the minimum capacity of any point
        in the given instance. (That is, the problem becomes easier as
        the minimum capacity increases.)  For the case where all the
        capacities are one, this is the discrete independent set
        problem, and our algorithm specializes to the algorithm of
        Chan and Har-Peled \cite{ch-aamis-11}, which gives an
        $O(u(n))$-approximation.
        
        \item More specifically, we get a constant factor approximation 
        for \PackRegions if the union complexity of the regions is
        linear. This holds for
        \begin{inparaenum}[(i)]
            \item fat-triangles of similar size,
            \item unit axis-parallel cubes in 3d, and
            \item pseudo-disks.
        \end{inparaenum}
        See \corref{pseudo}.
        
        \item Similarly, since the union complexity of fat triangles 
        in the plane is $O(n \log^* n )$ \cite{eas-ibuft-11, abes-ibufop-11}, 
        we get an $O\pth{ \pth{\log^* n}^{1/\mc} }$ approximation for
        such instances of \PackRegions.
    \end{compactenum}
    
    \item \xparagraph{Bi-criteria approximation.} %
    Our main result also implies a bi-criteria approximation
    algorithm. That is, we can improve the quality of the solution, at
    the cost of potentially violating low capacity regions.  Formally,
    if the input instance $\hgraph = (\VSet, \HESet)$ of \PackHGraph
    has at most $\feq{k}{t} = 2^{O(k)} F(t)$ edges of size $k$ when
    restricted to any subset of $t$ vertices, then for any integer
    $\cRelax \geq 1$, our algorithm yields an $\pth{
       O\pth{(F(n)/n)^{1/\cRelax}}, \cRelax} $-approximation to the
    given instance $\hgraph$ of \PackHGraph.  Specifically, the value
    of the generated solution $\VSetA$ is at least $\Omega\pth{ \Opt/
       (F(n)/n)^{1/\cRelax}}$, where $\Opt$ is the value of the
    optimal solution, and for every hyperedge $\hedge \in \HESet$, we
    have $\cardin{\hedge \cap \VSetA} \leq \max \pth{ \cRelax,
       \capacityX{\hedge}}$.
    
    As an example, for any set of $n$ regions in the plane such that
    the boundaries of any pair of them intersects $O(1)$ times, the
    above implies that one can get an $\pth{O\pth{n^{1/\cRelax}}, \,
       \cRelax}$-approximation for \PackRegions.
    
    \item \xparagraph{Axis-parallel boxes.} The union complexity of
    axis-parallel rectangles can be as high as quadratic, and
    therefore we cannot immediately apply our main result to get a
    good approximation. Instead, we decompose the union of
    axis-parallel rectangles into regions of low union complexity, and
    this decomposition together with our main result gives us an
    $O(\log n)$ approximation for instances of \PackRegions in which
    the regions are axis-parallel rectangles in the plane%
    (see \lemref{p:rect:in:pnts}).
    
    A more involved analysis also applies to the three dimensional
    case, where we get an $O(\log^3n)$ approximation for \PackRegions
    for axis parallel boxes (see \lemref{p:boxes:in:pnts}).
    
    \item \xparagraph{Dual problem.}  We show in
    \secapndref{halfRayDisk} that, by standard lifting techniques, we
    can apply our result for \PackRegions, where the regions are
    disks, to the dual problem of \PackPointsInDisks.  However, for
    other regions, the dual problem \PackPoints seems to be more
    challenging.  Specifically, this is true for the case of
    axis-parallel rectangles.  For this case, we first provide a
    constant factor approximation for \emph{skyline} instances of the
    problem; a skyline instance is a set of rectangles that lie on the
    $x$-axis. Interestingly, if the set of rectangles are defined in
    relation to a set of points (and each rectangles contains only a
    few points), then one can define a near-linear (in the number of
    points) sized set of rectangles such that each original rectangle
    is the union of two new rectangles. Combining this with the
    skyline result and a sparsifying technique, we get an $\pth{ O(
       \log n), 2}$-approximation; that is, every rectangle $\rect$
    contains at most $\max(2, \capacityX{\rect})$ points of the
    solution constructed, and the total weight of the solution is
    $\Omega(\Opt/\log n)$%
    (see \thmref{pack:points:in:rects}). (Note that, by applying our
    general framework directly to this setting, we only get an $\pth{
       O\pth{n^{1/\cRelax}}, \cRelax}$-approximation, for any integer
    $\cRelax > 0$.)
    
    \item \xparagraph{Packing points into fat triangles.}  %
    We provide a polylog bi-criteria approximation for the problem of
    packing points into fat triangles. This requires proving that one
    can compute, for a given point set, a small number of canonical
    subsets, such that the point set covered by any fat-triangle (if
    the set is sufficiently small), is the union of constant number of
    these canonical subsets. Proving this requires non-trivial
    modifications of the result of Aronov \etal
    \cite{aes-ssena-10}. In addition, we show that a measure defined
    over a fat triangle can be covered by a few fat triangles, each
    one of them containing only a constant fraction of the original
    measure. We believe these two results are of independent
    interest. Plugging this into the machinery, previously developed
    for axis parallel rectangles, yields the new approximation
    algorithm. See \secapndref{p:points:in:triangles} for details.
    
    
    \item \xparagraph{\PTAS for disks and planes.} We adapt the
    techniques of Mustafa and Ray \cite{mr-irghs-10} in order to get a
    PTAS for instances consisting of unweighted disks and
    unit-capacity points: we lift the problem to 3d, we construct an
    approximate conflict graph (as done by Mustafa and Ray), and we
    use a local search algorithm. This result also implies a \PTAS for
    \PackPoints for unweighted points and uniform capacity halfspaces
    in $\Re^3$.  See \secapndref{PTAS} for the details.
    
    \item \xparagraph{Hardness.} %
    We show some hardness results for our problems. In particular, we
    show that \PackPoints for fat triangles in the plane is as hard as
    independent set in general graphs (see
    \lemref{l:b:pack:pnts:into:triangles}).  We also show that
    \PackRegions is \APXHard (and thus there is no \PTAS) for
    similarly sized fat triangles in the plane (thus ``matching'' the
    result of \corref{pseudo}).
\end{compactitem}

\paragraph{Main technical contribution.}
Besides the results mentioned above, our work further develops and
extends the techniques for rounding \LP{}s that rise out of low
dimensional geometric constraints. Such work relies on finding the
right order of making decisions about regions as they are being added,
usually initially picking the elements to be considered randomly
according to the value assigned to them by the associated \LP. Such
work in the context of \LP rounding in geometric settings includes
\cite{v-wgscq-10, ch-aamis-09, cch-smcpg-09}.  The basic idea is to
build a conflict graph, on the appropriate random sample, and argue
that there exists a vertex of low degree that can be added without
throwing away too many conflicting vertices.  Our work extends this
approach to more involved settings where conflicts are not just
whether two regions intersect or not (i.e., independent set in a
graph), but rather involve a larger number of regions. To this end, we
prove a combinatorial bound on the expected number of conflicts
realized if we round the associated \LP.  A special easier case of
this was addressed by Chan and Har-Peled \cite{ch-aamis-09} when the
\LP is an independent set \LP. Naturally, in our case the analysis is
considerably more involved.

\paragraph{Previous work.} %
Fox and Pach \cite{fp-cinig-11} presented an $n^{\eps}$ approximation
for independent set for segments in the plane. The usage of \LP
relaxations for approximating such problems is becoming more popular.
In particular, Chalermsook and Chuzhoy \cite{cc-misr-09} use a natural
LP relaxation to get an $O(\log \log m)$-approximation for independent
set of axis parallel rectangles in the plane. The geometric set cover
problem and the more general problem, the geometric set multi-cover
problem, have approximation algorithms that use $\eps$-nets to round
the natural \LP relaxation; see \cite{cch-smcpg-09} and references
therein. Chan and Har-Peled \cite{ch-aamis-09} used local search to
get a \PTAS for independent set of pseudo-disks.  Independently,
Mustafa and Ray \cite{mr-irghs-10} used similar ideas to get a \PTAS
for hitting set of pseudo-disks in the plane. There is not much work
on the hardness of optimization problems in the geometric settings we
are interested in.  \cite{cc-ccopd-07} shows that the problem of
independent set of axis-parallel boxes in three dimensions is \APXHard
(the problem is known to be \NPHard in the plane).  See also
\cite{gc-eaahr-11, h-bffne-09} and references therein for some recent
hardness results.  Naturally, in non-geometric settings, there is a
vast literature on the problems and techniques we use, see
\cite{ws-daa-11}. Our algorithms use the randomized rounding with
alteration technique to round a fractional solution. This technique
was used in \cite{s-nacpp-01} to find an approximate solution to
packing integer programs (PIPs) of the form $\{\max w x : Ax \leq b, x
\in \mathbb{Z}_+^{n}\}$, where $A$ is a matrix whose entries are
either $0$ or $1$. The approximation guarantee given in
\cite{s-nacpp-01} is $O(n^{1 / B})$, where $B = \min_i b_i$.

\paragraph{Organization.}
In \secref{prelims} we define the problem and the associated \LP
relaxation, and describe some basic tools used throughout the
paper. In \secref{hypergraph} we present the approximation algorithm
for the hypergraph case. %
\InConfVer{We conclude in \secref{conclusions}. Due to minor space
   limitations, we move many of the technical parts of the paper to
   the appendices. } %
In \secapndref{applications} we present various applications of our
main result.  In \secapndref{p:points:in:triangles} we present the
algorithm for packing points into fat triangles.  In \secapndref{PTAS}
we present a \PTAS for some restricted cases.  In
\secapndref{hardness} we present some hardness results.  \InFullVer{
   We conclude in \secref{conclusions}.}

\section{Preliminaries}
\seclab{prelims}

For a maximization problem, an algorithm provides an
\emphi{$\alpha$-approximation} if it outputs a solution of value at
least $\Opt/\alpha$, where $\Opt$ is the value of the optimal
solution. An \emphi{$(\alpha, \beta)$-approximation} algorithm for
\PackHGraph is an algorithm that returns a (potentially infeasible)
solution of value at least $\Opt/\alpha$ such that each hyperedge
$\hedge$ contains at most $\max( \capacityX{\hedge}, \beta)$ vertices
of the solution.

\InFullVer{%
\paragraph{$\alpha$-fat triangles.}
For $\alpha \geq 1$, a triangle $\triangle$ is \emphi{$\alpha$-fat} if
the ratio between its longest edge and its height on this edge is
bounded by $\alpha$ (there are several equivalent definitions of this
concept). A set of triangles is $\alpha$-fat if all the triangles in
the set are $\alpha$-fat.  The union complexity of $n$ $\alpha$-fat
triangles is $O(n \log^* n )$ \cite{eas-ibuft-11, abes-ibufop-11} (the
constant in the $O$ depends on $\alpha$, which is assumed to be a
constant).  }

\subsection{\LP Relaxation and the Rounding Scheme}
\seclab{relax}

We consider the following natural \LP relaxation for the \PackHGraph
problem. For each vertex $v$, we have a variable $x_v$ with the
interpretation that $x_v$ is $1$ if $v$ is selected, and $0$
otherwise. For each hyperedge $\hedge$, we have a constraint that
enforces that the number of vertices of $\hedge$ that are selected is
at most the capacity of $\hedge$.

\centerline{%
   \begin{minipage}{0.7\textwidth}%
       \begin{align*}
           \HyperLP: \qquad \max \quad & \sum_{v \in \VSet} w_v x_v\\
           &\sum_{v \in \hedge} x_v \leq \capacityX{\hedge} \qquad &\forall
           \hedge
           \in \HESet\\
           & 0 \leq x_v \leq 1 \qquad &\forall v \in \VSet.
       \end{align*}
   \end{minipage}%
}

\bigskip
\noindent
The \emphi{energy} of a subset $\VSetA \subseteq \VSet$ is $\ds
\EnergyX{\VSetA} = \sum_{\vertex \in \VSetA} x_\vertex$.  In the
following, $\Energy$ denotes the \emphi{energy} of the \LP solution;
that is $\Energy =\EnergyX{\VSet} = \sum_{v \in \VSet} x_v$. Note that
the energy is at most the number of vertices of the hypergraph.  Also,
we assume that $\Energy\geq 1$ (which is always true since all the
capacities are at least one).

\begin{definition}
    Let $\hgraph = (\VSet, \HESet)$ be a hypergraph. For any integer
    $k$, let $\feq{k}{\cdot}$ denote the function %
    \InFullVer{%
       \begin{align*}
           \feq{k}{t} = \max_{X \subseteq \VSet, |X| \leq t}
           \cardin{\brc{\hedge\sep{ \hedge \in \HESet \text{ and }
                    \cardin{X \cap \hedge} = k+1}}};
       \end{align*}%
    }%
    \InConfVer{%
       $ \feq{k}{t} = \max_{X \subseteq \VSet, |X| \leq t}
       \cardin{\brc{\hedge\sep{ \hedge \in \HESet \text{ and }
                \cardin{X \cap \hedge} = k+1}}}$; }
    that is, $\feq{k}{t}$ is the maximum number of hyperedges of size
    $k+1$ of a sub-hypergraph of $\hgraph$ that is induced by a subset
    of at most $t$ vertices. We say that $\hgraph$ has \emphi{bounded
       growth} if the following conditions are satisfied:
    \begin{compactenum}[~~~(A)]
        \item There exists a non-decreasing function $\gamma(\cdotp)$
        such that $\feq{k}{t} \leq 2^{O(k)} t \gamma(t) $ for any $k$
        and $t$.
        \item There exists a constant $c$ such that $\feq{k}{x t} \leq
        c \feq{k}{t}$ for any $t, k$ and $x$ such that $1 \leq x \leq
        2$.
    \end{compactenum}
    
    \deflab{bounded}%
\end{definition}

\noindent
This notion of bounded growth is a hereditary property of the
hypergraph, and it is somewhat similar to the bounds on the size of
set systems with bounded \VC dimension.  Hypergraphs with bounded
growth arise naturally in geometric settings.

The minimum capacity of a packing instance is a useful measure of how
hard the instance is; formally, the \emphi{minimum capacity} of a
given instance $\hgraph$ is
\begin{equation}
    \mc = \mc(\hgraph) = \min_{\hedge \in \HESet} \capacityX{\hedge}.
    \eqlab{m:c}
\end{equation}



\InFullVer{\subsection{Basic tools}}

Let $\hgraph = (\VSet,\HESet)$ be a hypergraph, and let $\VSetA
\subseteq \VSet$ be a subset of its vertices. The sub-hypergraph of
$\hgraph$ \emphi{induced} by $\VSetA$ is $\InducedX{\hgraph}{\VSetA} =
\pth{\VSetA, \brc{ \hedge \cap \VSetA \sep{ \hedge \in \HESet}}}$.

The following lemma testifies that the packing problem can be solved
in a straightforward fashion if all the capacities are the same (i.e.,
uniform capacities). This is done by repeatedly applying a procedure  
to find and remove an independent set in the remaining induced 
sub-hypergraph (for example the procedure in \cite{ch-aamis-11}).

\InConfVer{ For some basic tools about these settings, see
   \apndref{basic:tools}.}

\newcommand{\BasicTools}{%
   \begin{lemma}
       Let $\hgraph$ be a hypergraph for which there is a polynomial
       time algorithm \Alg that takes as input a fractional solution
       to \HyperLP for an \PackHGraph instance on $\hgraph$, or any
       induced subgraph of $\hgraph$, with unit capacities --- i.e.,
       an independent set instance --- and it constructs an integral
       solution whose value is at least an $\alpha$ fraction of the
       value of the fractional solution.  Then one can compute in
       polynomial time a $2\alpha$-approximation for any instance of
       \PackHGraph on $\hgraph$ with uniform capacities (i.e., all the
       hyperedges have the same capacity, say $k$).
       
       \lemlab{basic}
   \end{lemma}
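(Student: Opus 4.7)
The plan is to apply $\Alg$ iteratively $k$ times, removing the selected independent set at each round, and output the union of the $k$ selected sets. We first solve the $\HyperLP$ relaxation for the uniform-capacity-$k$ instance on $\hgraph$, obtaining an optimal fractional solution $x^*$ whose value $\OptLP$ upper bounds $\Opt$. The key observation is that $y_v = x^*_v/k$ is a feasible fractional solution for the unit-capacity $\HyperLP$ on $\hgraph$ and on every induced subgraph, since each capacity constraint $\sum_{v \in \hedge} x^*_v \leq k$ rescales to $\sum_{v \in \hedge} y_v \leq 1$ while $y_v \in [0, 1/k] \subseteq [0,1]$.

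Initialize $\VSet_1 = \VSet$. In round $i$, feed $\Alg$ the fractional solution $y|_{\VSet_i}$ on $\InducedX{\hgraph}{\VSet_i}$; by hypothesis it returns an independent set $I_i \subseteq \VSet_i$ of weight at least $(1/\alpha)\sum_{v \in \VSet_i} w_v y_v = L_i/(k\alpha)$, where $L_i = \sum_{v \in \VSet_i} w_v x^*_v$. Set $\VSet_{i+1} = \VSet_i \setminus I_i$ and iterate, finally outputting $\VSetA = I_1 \cup \dots \cup I_k$. Since each $I_j$ is independent in $\InducedX{\hgraph}{\VSet_j}$, it meets every hyperedge in at most one vertex, so $\cardin{\VSetA \cap \hedge} \leq k$ for all $\hedge \in \HESet$, and $\VSetA$ is feasible for the uniform-capacity-$k$ instance.

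For the approximation guarantee, let $W_i = \sum_{j \leq i} \weight{I_j}$. Using $x^*_v \leq 1$ gives $L_{i+1} = L_i - \sum_{v \in I_i} w_v x^*_v \geq L_i - \weight{I_i}$, which telescopes to $L_{i+1} \geq \OptLP - W_i$. Plugging this into $\weight{I_{i+1}} \geq L_{i+1}/(k\alpha)$ produces the linear recurrence $W_{i+1} \geq W_i\pth{1 - 1/(k\alpha)} + \OptLP/(k\alpha)$, whose $k$-step unrolling is $W_k \geq \OptLP\pth{1 - (1 - 1/(k\alpha))^k}$. Combining the estimate $(1-z)^k \leq e^{-zk}$ with $z = 1/(k\alpha)$ and the chord bound $1 - e^{-t} \geq t(1 - 1/e) \geq t/2$ on $t \in [0, 1]$ yields $1 - (1 - 1/(k\alpha))^k \geq 1/(2\alpha)$ whenever $\alpha \geq 1$, so $\weight{\VSetA} \geq \OptLP/(2\alpha) \geq \Opt/(2\alpha)$, as required. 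The main obstacle is setting up the recurrence so that the LP objective on $\VSet_i$ drops by at most $\weight{I_i}$ in each round; the remaining estimates are routine.
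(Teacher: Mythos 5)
Your proposal is correct and uses the same underlying algorithm (iteratively extracting an independent set via $\Alg$ and removing it, $k$ times), but the analysis follows a different route from the paper's.

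The paper's proof effectively runs $\Alg$ on the \emph{optimal} fractional solution of the fresh independent set \LP on each residual hypergraph $\hgraph_{i-1}$, and then argues by a simple dichotomy: either the union $\VSetD_k$ already captures half the weight of the optimal integral solution $\VOpt$, or else $\VOpt\setminus\VSetD_{i-1}$ remains a large feasible witness in every round, so each $\VSetB_i$ has weight at least $w(\VOpt)/(2k\alpha)$ and the bound follows by summing $k$ terms. Your proof instead solves the uniform-capacity \LP once, scales by $1/k$, and feeds the \emph{same} (restricted) fractional solution to $\Alg$ in every round; you then track the residual \LP mass $L_i$, derive the recurrence $W_{i+1}\ge W_i(1-1/(k\alpha))+\OptLP/(k\alpha)$, and unroll it. The key step --- that $L_i$ drops by at most $\weight{I_i}$ because $x^*_v\le 1$ --- is correct, and your chord estimate $1-(1-1/(k\alpha))^k \ge (1-1/e)/\alpha \ge 1/(2\alpha)$ for $\alpha\ge 1$ is valid. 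Your route buys two small things: it avoids re-solving or re-optimizing the \LP in every round (one \LP solve suffices), and it implicitly yields a slightly sharper factor $1/(1-e^{-1/\alpha}) < 2\alpha$. The paper's route buys brevity and avoids any asymptotic estimates. Both arguments establish the same statement.
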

   
   \begin{proof}
       Let $\hgraph=(\VSet, \HESet)$ be an instance of \PackHGraph in
       which all hyperedges have the same capacity, say $k$. Let
       $\hgraph_0 = \hgraph$, and in the $i$\th iteration, for
       $i=1,\ldots, k$, compute a maximum weight independent set
       $\VSetB_i$ in $\hgraph_{i-1}$ using \Alg, and let $\hgraph_{i}
       = \InducedX{\hgraph}{\VSetA \setminus \VSetD_i }$, where
       $\VSetD_i = \VSetB_1\cup \ldots \cup \VSetB_{i}$. We claim that
       $\VSetD_k$ is the required approximation.
       
       Clearly, no hyperedge of $\hgraph$ contains more than $k$
       vertices of $\VSetD_k$ as it is the union of $k$ independent
       sets, and as such it is a valid solution. Now, let $\VOpt$ be
       the optimal solution. If $w( \VOpt \cap \VSetD_k) \geq
       w(\VOpt)/2$ then we are done. Otherwise, consider the
       hypergraph $\hgraph_{i-1}$, and observe that $\VOpt \setminus
       \VSetD_{i-1}$ is a valid solution for \PackHGraph for this
       graph (with uniform capacities $k$). Interpreting this integral
       solution as a solution to the \LP, and scaling it down by $k$,
       we get a factional solution to the independent set \LP of this
       hypergraph of value $w(\VOpt \setminus \VSetD_{i-1})/k$.  Since
       $\VSetB_i$ was constructed using \Alg on the optimal fractional
       solution to the independent set \LP of this hypergraph, we have
       that
       \[
       w(\VSetB_i) \geq \OptLP(\hgraph_{i-1})/\alpha \geq w( \VOpt
       \setminus \VSetD_{i-1})/k\alpha \geq w( \VOpt \setminus
       \VSetD_{k})/k\alpha \geq w(\VOpt)/2k\alpha.
       \]
       Which implies that $w(\VSetD_k) \geq w(\VOpt)/2\alpha$.
   \end{proof}
   
   A hypergraph $\hgraph = (\VSet,\HESet)$ \emphi{shatters} $\VSetA
   \subseteq \VSet$ if the number of hyperedges in
   $\InducedX{\hgraph}{\VSetA}$ is $2^{\cardin{\VSetA}}$.  The
   \emphi{\VC dimension} of $\hgraph$ is the size of the largest set
   of vertices it shatters.
   
   The following is a ``sparsification'' lemma. Here we get better
   bounds than the standard technique, as we are using stronger
   sampling results known for spaces with bounded \VC dimension.
   
   \begin{lemma}
       Let $\hgraph = (\VSet, \HESet)$ be an instance of \PackHGraph
       with \VC dimension $d$, and consider its fractional \LP
       solution of value $\Opt$ and with energy $\Energy$. Then, one
       can compute, in polynomial time, a valid fractional solution
       for the \LP of $\hgraph$ such that:
       \begin{compactenum}[(A)]
           \item The value of the new fractional solution is $\geq
           \Opt/12$.
           \item The number of vertices with non-zero value is $O( d
           \Energy \log \Energy)$.
           
           \item The value of each non-zero variable is equal to $i/M$
           for some integer $i \leq M$, where $M = O( d \log \Energy
           )$.
           
           \item The total energy in the new solution is $\Theta(
           \Energy)$.
       \end{compactenum}
       
       \lemlab{sparsify}
   \end{lemma}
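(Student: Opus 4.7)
The plan is to produce the sparse fractional solution by a scale-then-sample procedure, and then verify the four properties using a \VC-based uniform concentration argument.

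First, I halve the fractional solution: let $x'_v = x_v/2$, so $\sum_{v \in \hedge} x'_v \leq \capacityX{\hedge}/2$ for every hyperedge, the value is $\Opt/2$, and the energy $\Energy' = \Energy/2$. Pick $M = c_1 d \log \Energy$ with $c_1$ a sufficiently large absolute constant. Draw $k = M \Energy'$ independent samples from $\VSet$, where each draw selects vertex $v$ with probability $x'_v/\Energy'$. Let $N_v$ denote the number of times $v$ is drawn, and set $y_v = \min(N_v,M)/M$.

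Properties (C), (D), and (B) are essentially by construction: every $y_v$ is an integer multiple of $1/M$ in $[0,1]$; the support has size at most $k = O(d\Energy\log \Energy)$; and the expected total energy is $\sum_v \Ex{\min(N_v,M)}/M = \sum_v x'_v - o(\Energy) = \Theta(\Energy)$, since the capping at $1$ is shown to affect only a negligible fraction of vertices by the same Chernoff bound used below. Moreover $\Ex{\sum_v w_v y_v} = \sum_v w_v x'_v = \Opt/2$, so by Markov a constant fraction of realizations attain weight at least $\Opt/12$; thus a constant number of independent repetitions (or derandomization via conditional probabilities) yields property (A).

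The nontrivial step is feasibility, which I expect to be the main obstacle. For a fixed hyperedge $\hedge$, the number $N_\hedge$ of samples landing in $\hedge$ satisfies $\Ex{N_\hedge} = k \sum_{v \in \hedge} x'_v/\Energy' \leq M \capacityX{\hedge}/2$, and $\sum_{v \in \hedge} y_v \leq N_\hedge/M$; so a multiplicative Chernoff bound gives
\[
\Prob{\sum_{v \in \hedge} y_v > \capacityX{\hedge}} \;\leq\; \Prob{N_\hedge > M \capacityX{\hedge}} \;\leq\; 2^{-\Omega(M\capacityX{\hedge})} \;\leq\; 2^{-\Omega(M)}.
\]
A naive union bound over $\HESet$ is useless since $\cardin{\HESet}$ is unbounded, and this is where \VC dimension is essential: two hyperedges with the same trace on the drawn sample produce the same random variable $N_\hedge$, and the number of distinct traces on a sample of size $k$ is at most $O(k^d)$ by the Sauer--Shelah lemma. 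Choosing $c_1$ so that $M = \Omega(d \log k) = \Omega(d \log \Energy)$ drives the total failure probability to $o(1)$, so with positive probability all constraints hold simultaneously, giving a valid fractional solution with the claimed properties.
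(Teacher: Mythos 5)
Your overall strategy is the same as the paper's: rescale to create slack, sample proportionally to the LP values, assign multiplicities divided by $M$, and invoke a \VC-based uniform concentration to certify feasibility simultaneously for all hyperedges. (The paper packages the concentration into a ``relative $(\eps,1/2)$-approximation'' theorem; your Sauer--Shelah union bound, done with the usual double-sampling symmetrization, is the same tool unpacked.) Properties (B), (C), (D) and the feasibility bound are all essentially right.

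The gap is in property (A). You write ``by Markov a constant fraction of realizations attain weight at least $\Opt/12$; thus a constant number of independent repetitions \ldots{} yields property (A).'' Markov's inequality gives an \emph{upper} bound on $\Prob{Z \geq a}$ for a nonnegative $Z$; it says nothing usable about a lower bound on $\Prob{Z \geq \Ex{Z}/6}$. To go from ``the expectation of $Z$ is $\Omega(\Opt)$'' to ``$Z \geq \Opt/12$ with noticeable probability'' you need an anti-concentration step. The paper does this by establishing a hard upper bound $Z \leq O(\Energy)\cdot\OptLP$ (using that no single vertex can have $w_v > \OptLP$ and that $|\RSample|$ is bounded w.h.p.), then running a reverse-Markov computation: if $\Prob{Z \geq \OptLP/12}$ were smaller than $\Theta(1/\Energy)$, then $\Ex{Z}$ would be strictly below $\OptLP/6$, a contradiction. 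This yields only $\Prob{Z \geq \OptLP/12} = \Omega(1/\Energy)$, so the paper must take $O(\Energy\log\Energy)$ independent trials, not a constant number.

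An alternative route that would actually give you a constant success probability is to bound $\mathrm{Var}[Z]$: with $w_v \leq \OptLP$ one gets $\mathrm{Var}[Z] = O(\OptLP^2/M)$, and Chebyshev then gives $\Prob{Z < \Ex{Z}/2} = O(1/M)$, which is $o(1)$ for $M = \Omega(d\log\Energy)$. That would salvage your ``constant number of repetitions'' claim, but it is a genuinely different argument from what you wrote (Chebyshev, not Markov, with a variance computation you did not supply), and in your scheme one must also argue that the negative correlations among the multinomial $N_v$'s do not hurt. As written, the value argument has a hole.
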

   
   \begin{proof}
       Let $\eps = 1/\Energy$, where $\Energy =\sum_\vertex
       x_\vertex$, and $x_\vertex$ is the value the \LP assigns to
       $\vertex \in \VSet$ in the optimal \LP solution.
       Let $T = O\pth{ d \log \Energy }$ and $\RSample$ be a random
       sample of $\VSet$ of (expected) size $\tau = \Energy T = O(
       (d/\eps) \log(1/\eps))$, created by picking each vertex
       $\vertex$ independently with probability $ x_\vertex \cdot T$%
       \footnote{%
          A minor technicality is that $x_\vertex T$ might be larger
          than one. In this case, we put $\floor{ x_\vertex T}$ copies
          of $\vertex$ into $\RSample$, and we put an extra copy of
          $\vertex$ into $\RSample$ with probability $x_\vertex T -
          \floor{x_\vertex T}$. It is straightforward to verify that
          our argumentation goes through in this case. Observe that
          such large values work in our favor by decreasing the
          probability of failure.
       }. %
       This sample is a relative $(\eps, 1/2)$-approximation
       \cite{h-gaa-11, hs-rag-11}, with probability of failure $\leq
       \rho_1 = \eps^{O(d)}$. That implies that for any hyperedge
       $\hedge \in \HESet$ such that $x(\hedge) = \sum_{\vertex \in
          \hedge} x_\vertex$ we have
       \[
       \frac{ \cardin{\RSample \cap \hedge} }{ \cardin{\RSample} }%
       \leq%
       (1+1/2)\pth{ \frac{x(\hedge)}{\Energy} + \eps }.
       \]
       To interpret the above, observe that $\Ex{\cardin{\RSample \cap
             \hedge} \MakeSBig\!\ts } = x(\hedge) T$ and
       $\Ex{\cardin{\RSample} \MakeSBig\!\ts} = \Energy T$, as such, a
       rough estimate of the expectation of $\ds \cardin{\RSample \cap
          \hedge} / \cardin{\RSample} $ is ${x(\hedge)} /
       {\Energy}$. Thus, the above states (somewhat opaquely) that no
       hyperedge is being over-sampled by $\RSample$.
       
       Since the expected size of $\RSample$ is $\tau$, by Chernoff's
       inequality, we know that $\cardin{\RSample} \leq 2 \tau$ with
       probability at least $1- \rho_2$, where $\rho_2 =
       \eps^{O(d\Energy)}$ (as $\Energy\geq 1$).  Now, consider a
       hyperedge $\hedge$ with capacity $k$, and observe that
       $x(\hedge) \leq k$. As such, $\cardin{\RSample \cap \hedge}
       \leq (1+1/2)(x(\hedge)/\Energy + \eps) \cardin{\RSample} \leq
       (3/2) (k+1)\eps 2\tau \leq 6k T$. In particular, if $\vertex$
       appears $t_\vertex$ times in $\RSample$ ($\RSample$ is a
       multiset), then we assign it the fractional value $y_\vertex =
       t_\vertex/6 T$. We then have that $y(\hedge) = \sum_{\vertex
          \in \hedge} y_\vertex \leq \cardin{\RSample \cap \hedge} / 6
       T \leq k$ (and this holds for all hyperedges with probability
       $\geq 1- \rho_1$). As such, the fractional solution defined by
       the $y$'s is valid.
       
       As for the value of this fractional solution, consider the
       random variable $Z = \sum_{\vertex} y_\vertex w(\vertex)$,
       which is a function of the random sample $\RSample$. Observe
       that
       \[
       \Ex{Z \MakeBig \ts }
       =%
       \sum_{\vertex} w(\vertex) \Ex{ \frac{t_\vertex}{6 T} \MakeBig
       }%
       =%
       \sum_{\vertex} \frac{ w(\vertex)}{6 T} \Ex{ t_\vertex \MakeBig
       }%
       =%
       \sum_{\vertex} \frac{ w(\vertex)}{6 T} x_\vertex T
       =%
       \frac{1}{6}\sum_{\vertex} w(\vertex) x_\vertex%
       =%
       \frac{\OptLP}{6},
       \]
       as $\OptLP = \sum_{\vertex} w(\vertex) x_\vertex$.  In
       particular, since no vertex can have $w(\vertex) > \OptLP$
       (otherwise, we would choose it as the solution), it follows
       that
       \[
       Z = \sum_{\vertex} y_\vertex w(\vertex)%
       \leq%
       \OptLP \sum_{\vertex} \frac{t_\vertex}{ 6 T}%
       \leq%
       \OptLP \, \frac{\cardin{\RSample}}{ 6 T}%
       \leq%
       \OptLP \frac{2\tau}{6T}%
       = \OptLP \frac{2\Energy T}{6T}%
       \leq%
       \frac{\Energy}{3}\OptLP.
       \]
       This implies that $\Prob{ Z \geq \OptLP/12 \MakeSBig\ts } \geq
       1/4\Energy = \eps/4$. Indeed, if not,
       \begin{align*}
           \Ex{Z \MakeSBig\ts}%
           &\leq%
           \frac{\OptLP}{12}\Prob{Z \leq \frac{\OptLP}{12}}%
           +%
           \Prob{\MakeBig Z \geq \frac{\OptLP}{12} } \frac{\Energy}{3}
           \OptLP%
           \\
           &<%
           \frac{\OptLP}{12} + \frac{1}{4\Energy} \cdot
           \frac{\Energy}{3} \OptLP %
           \leq%
           \frac{\OptLP}{6},
       \end{align*}
       a contradiction. As such, a random sample $\RSample$
       corresponds to a valid assignment with value at least
       $\OptLP/12$ with probability at least $\Prob{ Z \geq \OptLP/12
       } - \rho_1 - \rho_2 \geq \eps/8$, as $\rho_1 + \rho_2$ is an
       upper bound on the sample $\RSample$ failing to have the
       desired properties. As such, taking $u =O( \Energy \log
       \Energy)$ independent random samples one of them is the
       required assignment, with probability $\geq 1 - (1-\eps/8)^u
       \geq 1 - 1/\Energy^{O(1)}$. We take this good sample together
       with its associated \LP values as the desired fractional
       solution to the \LP. Also, note that the total energy of the
       new solution is $\Theta(\Energy)$, since by Chernoff's
       inequality $\tau/2 \leq \cardin{\RSample}\leq 2\tau$ with
       probability at least $1-2\rho_2$.
   \end{proof}
}   

\InFullVer{%
   \BasicTools%
}


\section{Approximate packing for hypergraphs}
\seclab{hypergraph}

In this section, we present the algorithm for computing a packing for
a given hypergraph $\hgraph = (\VSet, \HESet)$. We assume that
$\cardin{\HESet}$ is polynomial in $\cardin{\VSet}$ and that $\hgraph$
has the bounded growth property introduced in \defref{bounded}
(properties which both hold for hypergraphs arising out natural
geometric settings). Let $x$ be a solution to the \HyperLP relaxation
described in \secref{relax}.

\subsection{The algorithm}
\seclab{algorithm}
We round the fractional solution to an integral solution using a
standard randomized rounding with alteration approach. The first step
is to choose an appropriate ordering of the vertices. We will see
later how to choose a good ordering; for now, we assume that we are
given the ordering. The rounding then proceeds in two phases, the
\emphi{selection phase} and the \emphi{alteration phase}. In the
selection phase, we pick a random sample $\RSet$ of the vertices by
picking each vertex $\vertex$ independently at random with
probability $x_{\vertex} / \scl$, where $\scl$ is a parameter that we
will determine later. In the alteration phase, we pick a subset of
$\RSet$ as follows: We consider the sampled vertices in the order
chosen and we add the current vertex to our solution if the resulting
solution remains feasible. We say that a vertex is \emphi{selected}
if it is present in the sample, and we say that it is
\emphi{accepted} if it is present in the solution. The main insight
is that we can take advantage of the bounded growth property of the
hypergraph to show that there is an ordering such that each vertex
is accepted with constant probability, provided that it is selected.
This will immediately imply that the algorithm achieves a
$O(\scl)$-approximation.

\medskip

The main challenge is to prove that a good ordering for the alteration phase exists, that is an ordering such that we accept each selected vertex with constant probability.  We now proceed to give such a proof.  This proof will suggest a natural $O(n^{C+O(1)})$ time brute force algorithm to actually compute this good ordering, where $C$ is the maximum capacity of an edge in the given instance.  In \secref{ordering:polytime} we show how one can remove this exponential dependence on $C$.

\begin{rexample}
    To keep the presentation accessible, we interpret this algorithm
    for instances of \PackRegions in which the regions are disks. 
    Specifically, we are given a weighted set of disks
    $\ObjSet$ and set of points $\PntSet$ with capacities. The disks
    of $\ObjSet$ form the set of vertices of the hypergraph and every
    point $\pnt \in \PntSet$ forms a hyperedge; that is the hyperedge
    $\hedge_\pnt$ is the set of all disks of $\ObjSet$ that contain
    $\pnt$.
    
    In this case, the mysterious quantity $\feq{k}{t}$ (see
    \defref{bounded}) is bounded by the number of faces in an
    arrangement of $t$ disks that are of depth exactly $k+1$.  Since
    the union complexity of $t$ disks is linear, standard application
    of the Clarkson technique implies that $\feq{k}{t} = O(kt)$. In
    particular, for this case $\gamma(t) =O(1)$.
\end{rexample}

\subsubsection{Constructing a good ordering}
Before we describe how to construct a good ordering of the vertices,
it is useful to understand what will force a vertex to be rejected in
the alteration phase. With this goal in mind, consider an ordering of
the vertices. Let $\RSet$ be a sample of the vertices in $\VSet$ such
that each vertex $\vertex$ is in $\RSet$ independently at random with
probability $x_{\vertex} / \scl$. Let $\vertex$ be a vertex in
$\RSet$. When we consider $\vertex$ in the alteration phase, we will
reject $\vertex$ iff there exists a hyperedge $\hedge$ of capacity
$\capacityX{\hedge}$ such that $\hedge$ contains $\vertex$ and we have
already accepted $\capacityX{\hedge}$ vertices of $\hedge$.  The event
that we already accepted $\capacityX{\hedge}$ vertices of $\hedge$ is
difficult to analyze.  However, as we will see, we can settle for a
more conservative analysis that upper bounds the probability that
$\vertex$ is rejected, given that \emph{all} of the vertices in
$\RSet$ that appear before $\vertex$ in the ordering are accepted. (In
the alteration phase, it is possible that not all vertices in $\RSet$
that appear before $\vertex$ will be accepted, but this can only help
us.) Since we are only interested in the event that $\RSet$ contains
$k + 1$ vertices --- the vertex $\vertex$ and $k$ other vertices that
appear before $\vertex$ in the ordering --- that are contained in a
hyperedge of capacity $k$, only the set of vertices that appear before
$\vertex$ in the sample matter, and not the actual ordering of the
vertices. With this observation in mind, we define a
\emphi{$k$-conflict} to be a set of $k + 1$ vertices that are
contained in a hyperedge of capacity $k$.  In the following,
$\conflictSet_k$ denotes the set of all $k$-conflicts, and
$\conflictSet = \cup_k \conflictSet_k$ denotes the set of all
conflicts. We are interested in the probability of the event that all
of the vertices of a $k$-conflict, $\conflict$, are present in the 
sample, and we refer to this probability as the \emphi{$\scl$-potential} 
of the conflict, $\forceY{\scl}{\conflict}$.  For the analysis it will 
also be useful to define the unscaled version of this quantity, that 
is the probability that all the vertices of a conflict are present 
given that we sampled each vertex with probability $x_\vertex$ and not 
$x_\vertex/\scl$.  We refer to this quantity as simply the 
\emphi{potential} of the conflict, $\forceX{\conflict}$.  
Formally we have, 
\[
\forceY{\scl}{\conflict} = \prod_{\vertex \in \conflict}
\frac{x_{\vertex}}{\scl} \qquad \text{ and } \qquad \forceX{\conflict}
= \prod_{\vertex \in \conflict} x_{\vertex}.
\]
Another quantity of interest is the expected number of conflicts in
which a vertex $\vertex$ participates, given that $\vertex$ is in the
sample. We refer to this quantity as the \emphi{$\scl$-resistance} of
a vertex $\vertex$ in a set of vertices $\VSetA \subseteq \VSet$, and
we use $\resistZ{\scl}{\vertex}{ \VSetA}$ to denote it: %
\InFullVer{%
   \[
   \resistZ{\scl}{\vertex}{ \VSetA} %
   =%
   \frac{\scl}{x_{\vertex} } \sum_{\conflict \in \conflictSet,
      \conflict \subseteq \VSetA, \vertex \in \conflict}
   \forceY{\scl}{\conflict}.
   \]%
}%
\InConfVer{%
   $\ds \resistZ{\scl}{\vertex}{ \VSetA} %
   =%
   \frac{\scl}{x_{\vertex} } \sum_{\conflict \in \conflictSet,
      \conflict \subseteq \VSetA, \vertex \in \conflict}
   \forceY{\scl}{\conflict}$.
}%

\paragraph{The ordering.}
Note that, if the $\scl$-resistance of $\vertex$ with respect to the
set $X$ of vertices that come before it in the ordering is small, the
probability of rejecting $\vertex$ is also small. This suggests that
the vertex with least resistance (with respect to $\VSet$) should be
the last vertex in the ordering. This gives us the following algorithm
for constructing an ordering: We compute the vertex of least
resistance and put it last in our ordering (i.e., it is
$\vertex_n$). We then recursively consider the remaining vertices and
we compute an ordering for them. In the following, we assume for
simplicity that the resulting ordering is $\vertex_1,\ldots,
\vertex_n$.

Note that computing the resistance of a vertex by brute force takes
$O\pth{ n^{C+O(1)} }$ time, where $C$ is the maximum capacity of a
hyperedge, and therefore this algorithm is not efficient. We give a
polynomial time algorithm for constructing the ordering in
\secref{ordering:polytime}.


\subsection{Analysis}

Our main insight is that, if the hypergraph satisfies the bounded
growth property defined in \defref{bounded}, then for any set $\VSetA
\subseteq \VSet$ there exists a vertex $\vertex \in \VSetA$ such that
$\resistZ{\scl}{\vertex}{ \VSetA} \leq 1/4$.  We prove this below in
\secref{proof:resistance} (see \lemref{resistance}).  This proof 
requires that we set $\scl= \sclC\gamma(\Energy)^{1/\mc}$, where 
$\sclC$ is some sufficiently large constant.  As such, in the 
remainder of this section we assume $\scl= \sclC\gamma(\Energy)^{1/\mc}$.

We now show that given $\resistZ{\scl}{\vertex}{ \VSetA} \leq 1/4$, 
proving the quality of approximation of the algorithm is
straightforward.


\begin{lemma}
    Let $\RSet$ and $\OSet$ be the set of vertices that were selected and
    accepted by the algorithm, respectively. For each $i$, we have
    $\Prob{\vertex_i \in \OSet \sep{\vertex_i \in \RSet}} \geq 3/4$.
    
    \lemlab{prob}
\end{lemma}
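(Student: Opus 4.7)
\textbf{Proof plan for Lemma~\ref{lemma:prob}.}
The plan is to upper-bound the rejection probability by a union bound over conflicts, and then recognize the resulting sum as exactly the $\scl$-resistance of $\vertex_i$ in the prefix $\{\vertex_1,\ldots,\vertex_i\}$, which the ordering was designed to control.

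First, I would fix the conservative analysis announced in the text: given that $\vertex_i\in\RSet$, the vertex $\vertex_i$ is rejected in the alteration phase only if, at the moment it is processed, some hyperedge $\hedge\ni\vertex_i$ already contains $\capacityX{\hedge}$ previously accepted vertices. Since accepted vertices are a subset of selected vertices, this is implied by the weaker event: there exists a hyperedge $\hedge$ of some capacity $k$ with $\vertex_i\in\hedge$ and $k$ other vertices of $\hedge\cap\{\vertex_1,\ldots,\vertex_{i-1}\}$ that all belong to $\RSet$. Equivalently, some $k$-conflict $\conflict\subseteq\{\vertex_1,\ldots,\vertex_i\}$ containing $\vertex_i$ has all of its $k+1$ vertices selected.

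Next I would apply a union bound over all such conflicts. Because each vertex $\vertex$ is placed in $\RSet$ independently with probability $x_\vertex/\scl$, and because $\vertex_i$ is itself already assumed selected, the probability that every other vertex of a specific conflict $\conflict\ni\vertex_i$ is in $\RSet$ equals $\prod_{\vertex\in\conflict\setminus\{\vertex_i\}} x_\vertex/\scl = \forceY{\scl}{\conflict}\cdot \scl/x_{\vertex_i}$. Summing over conflicts $\conflict\subseteq\{\vertex_1,\ldots,\vertex_i\}$ with $\vertex_i\in\conflict$ gives exactly the quantity
\[
\frac{\scl}{x_{\vertex_i}}\sum_{\substack{\conflict\in\conflictSet,\ \vertex_i\in\conflict\\ \conflict\subseteq\{\vertex_1,\ldots,\vertex_i\}}} \forceY{\scl}{\conflict} \;=\; \resistZ{\scl}{\vertex_i}{\{\vertex_1,\ldots,\vertex_i\}}.
\]

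Finally, I would invoke the construction of the ordering: $\vertex_i$ was chosen by the recursive procedure as the last (i.e.\ minimum-resistance) vertex in the set $\{\vertex_1,\ldots,\vertex_i\}$, and by \lemref{resistance} applied to this subset (with $\scl=\sclC\gamma(\Energy)^{1/\mc}$ chosen exactly to make that lemma applicable), some vertex in any subset has $\scl$-resistance at most $1/4$; thus $\resistZ{\scl}{\vertex_i}{\{\vertex_1,\ldots,\vertex_i\}}\leq 1/4$. Combining with the union bound yields $\Prob{\vertex_i\notin\OSet\sep{\vertex_i\in\RSet}}\leq 1/4$, proving the claim.

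The only subtle step is the very first one -- justifying that conditioning on $\vertex_i\in\RSet$ does not disturb the independence used in the union bound, and that ``all previously-accepted'' can legitimately be replaced by ``all previously-selected.'' Both are clean: the sampling of the other vertices is independent of whether $\vertex_i\in\RSet$, and dropping the ``accepted $\subseteq$ selected'' refinement only inflates the bound, exactly as the text flagged.
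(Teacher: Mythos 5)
Your proposal is correct and follows essentially the same route as the paper: the paper also reduces rejection to the event that some conflict in $\{\vertex_1,\ldots,\vertex_i\}$ containing $\vertex_i$ is fully selected, bounds the probability of that event by the conditional expectation of the number of realized conflicts (your ``union bound'' is the same first-moment calculation), identifies that expectation as $\resistZ{\scl}{\vertex_i}{\{\vertex_1,\ldots,\vertex_i\}}$, and closes via \lemref{resistance} and the choice of ordering. No material differences.
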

\begin{proof:in:appendix}{\lemref{prob}}
	Let $\VSetA_i = \permut{\vertex_1, \ldots,\vertex_i}$. Note that,
	if we selected $\vertex_i$, we rejected $\vertex_i$ in the
	alteration phase only if $\vertex_i$ participates in a conflict
	with some of the vertices in $\{\vertex_1, \ldots, \vertex_{i -
	1}\} \cap \RSet$. Let $Z_i$ be the number of conflicts of
	$\VSetA_i$ that contain $\vertex_i$ and are realized in $\RSet$,
	i.e., $\conflict \subseteq \RSet$. In the following, we show that
	the probability that $Z_i$ is non-zero is at most $1/4$, which
	implies the lemma.
    
	Consider a $k$-conflict $\conflict = \brc{ \vertex_{j_1}, \ldots,
	\vertex_{j_k}, \vertex_i}$, where each vertex of $\conflict$ is
	in $\VSetA_i$ and $\conflict$ contains $\vertex_i$.  The
	probability that all of the vertices of $\conflict$ are selected,
	given that $\vertex_i$ is selected, is equal to
    $\ds \frac{x_{j_1}}{\scl} \cdot \frac{x_{j_2}}{\scl} \cdots
    \frac{x_{j_k}}{\scl} = \frac{\scl}{x_i} \forceY{\scl}{\conflict}$.
    Therefore we have
    \[
    \Ex{Z_i \sep{ \vertex_i \in \RSet} } =%
    \sum_{\conflict \in \conflictSet, \conflict \subseteq \VSetA_i,
       \vertex_i \in \conflict} \frac{\scl}{x_i} \forceY{\scl}{\conflict}%
    =%
    \resistZ{\scl}{\vertex_i}{ \VSetA_i} \leq \frac{1}{4},
    \]
    where the last inequality follows from \lemref{resistance}
    and the fact that $\vertex_i$ is the vertex of minimum resistance in
    $\VSetA_i$. Thus
    \[
    \Prob{\vertex_i \notin \OSet \sep{\vertex_i \in \RSet}}%
    \leq%
    \Prob{Z_i > 0 \sep{\vertex_i \in \RSet}} \leq%
    \Ex{Z_i \sep{ \vertex_i \in \RSet} }%
    \leq%
    \frac{1}{4}.
    \]
    Therefore, if $\vertex_i$ is selected, it is accepted with
    probability at least $3/4$.
\end{proof:in:appendix}

\begin{corollary}
	The total expected weight of the set of vertices output by the
	algorithm is $\Omega \pth{\Opt /\gamma(\Energy)^{1/\mc}}$, where
	$\Opt$ is the weight of the optimal solution, and $\mc$ is the
	minimum capacity of the given instance.
    
    \corlab{result}
\end{corollary}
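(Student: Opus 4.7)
The plan is to chain together linearity of expectation with the pointwise probability bound from \lemref{prob}, and then relate the resulting expression back to $\Opt$ via LP duality (i.e., LP optimum dominates integer optimum).

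First I would write the expected weight of the output as a sum over vertices. For each vertex $\vertex$, the event $\vertex \in \OSet$ decomposes as the intersection of $\vertex \in \RSet$ (the selection phase) and $\vertex \in \OSet$ conditioned on $\vertex \in \RSet$ (the alteration phase). Since the selection phase picks $\vertex$ with probability $x_\vertex/\scl$, and \lemref{prob} gives $\Prob{\vertex \in \OSet \sep{\vertex \in \RSet}} \geq 3/4$, I obtain
\[
\Ex{w(\OSet)} = \sum_{\vertex \in \VSet} w_\vertex \Prob{\vertex \in \OSet} \geq \sum_{\vertex \in \VSet} w_\vertex \cdot \frac{x_\vertex}{\scl} \cdot \frac{3}{4} = \frac{3}{4\scl} \sum_{\vertex \in \VSet} w_\vertex x_\vertex = \frac{3}{4\scl} \OptLP.
\]

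Next I would use that the \LP is a relaxation of the integer problem, so $\OptLP \geq \Opt$. Substituting the prescribed scale $\scl = \sclC \gamma(\Energy)^{1/\mc}$ (where $\sclC$ is a constant) yields
\[
\Ex{w(\OSet)} \geq \frac{3\,\Opt}{4\sclC\,\gamma(\Energy)^{1/\mc}} = \Omega\!\pth{\frac{\Opt}{\gamma(\Energy)^{1/\mc}}},
\]
which is the claimed bound.

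There is no real obstacle here — the work has already been done in \lemref{prob} (which required the delicate resistance bound from \lemref{resistance}). The only small subtlety is verifying that the output $\OSet$ is always feasible, so that $w(\OSet)$ indeed lower-bounds the value of a feasible solution; this is immediate from the alteration phase, which only accepts $\vertex$ if doing so keeps every hyperedge within its capacity. A standard derandomization remark (or a Markov-style repetition argument) can also be appended if one wants to convert the expectation bound into a deterministic guarantee, but the statement of the corollary is about expected weight and so the two displayed inequalities above suffice.
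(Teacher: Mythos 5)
Your proof is correct and follows essentially the same route as the paper: decompose $\Prob{\vertex \in \OSet}$ as $\Prob{\vertex\in\RSet}\cdot\Prob{\vertex\in\OSet\mid\vertex\in\RSet}$, apply the $3/4$ bound from \lemref{prob} and the sampling probability $x_\vertex/\scl$, sum by linearity of expectation, and finish with $\OptLP\geq\Opt$ and the choice $\scl=\sclC\gamma(\Energy)^{1/\mc}$. One small terminological quibble: the final step is not ``LP duality'' but simply the observation that the LP is a relaxation of the integer program (as you yourself clarify parenthetically), so the phrasing in your opening sentence is slightly off but the argument itself is sound.
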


\begin{proof:in:appendix}{\corref{result}}
    By \lemref{prob}, for each vertex $\vertex \in \VSet$, we have
    \begin{align*}
        \Prob{ \mbb \vertex \in \OSet}%
        &=%
        \Prob{ \mbb \MakeSBig \pth{\vertex \in \OSet} \cap \pth{
		\vertex
              \in \RSet}}%
        =%
        \Prob{\vertex \in \OSet \sep{\vertex \in \RSet}} \Prob{ \mbb
		\vertex
           \in \RSet}%
        \geq%
        \frac{3}{4} \Prob{ \MakeSBig \vertex \in \RSet}\\
        &\geq%
        \frac{3x_{\vertex}}{4 \scl},
    \end{align*}
	where $\scl = O \pth{ \gamma(\Energy)^{1/\mc} }$.  By linearity
	of expectation, we have that the expected weight of the generated
	solution is at least
    \[
    \sum_{\vertex \in \VSet} \frac{3x_{\vertex}}{4 \scl} w_{\vertex}%
    =%
    \Omega\pth{ \frac{\sum_{\vertex} x_{\vertex}
          w_{\vertex}}{\gamma(\Energy)^{1/\mc}}}%
    =%
    \Omega\pth{ \frac{\Opt}{\gamma(\Energy)^{1/\mc}}},
    \]
    as $\sum_{\vertex} x_{\vertex} w_{\vertex}$ is the value of the
    fractional \LP solution, which is bigger than (or equal to) the
    weight of the optimal solution.
\end{proof:in:appendix}


\subsubsection{On the expected number of conflicts being realized}

To analyze the algorithm we need to understand how conflicts might
form during its execution, and show that the damage of such conflicts
to the generated solution is limited. To this end, consider the
quantity
\begin{align*}
    \feq{k}{t} = \max_{A \subseteq \VSetA, |A| \leq t}
    \cardin{\brc{\hedge\sep{ \hedge \in \HESet \text{ and } \cardin{A
                \cap \hedge} = k+1}}}.
\end{align*}
This is the maximum number of $k$-conflicts that can be realized for a
set of $t$ vertices.  The quantity of interest in the following is $
\sum_{\conflict \in \conflictSet_k} \forceX{\conflict}$, as it is the
expected number of conflicts that would be realized if we sample
according to the \LP solution. Our purpose is to prove that this
quantity is bounded by a function of the energy of the \LP (the bound
would involve the function $\feq{k}{\cdot}$ defined above).


With this goal in mind, we let $\RSample$ be a random sample of
$\VSetA$ such that each vertex $\vertex \in \VSetA$ is in $\RSample$
independently at random with probability $x_{\vertex} / 2$. We stress
that $\RSample$ is a random sample that we use for the purposes of
defining a quantity $\xkappa$ (i.e., the expected number of conflicts
realized in $\RSample$), and it should not be confused with the random
sample $\RSet$ that is used by the algorithm. In the following, we
bound $\xkappa$ from above in \lemref{upper-bound} and from below in
\lemref{lower-bound}. Putting these two bounds together imply the
desired bound on $\sum_{\conflict \in \conflictSet_k}
\forceX{\conflict}$.

A conflict $\conflict \in \conflictSet$ is \emphi{realized} in
$\RSample$ if there is a hyperedge $\hedge \in \HESet$ such that $
\conflict = \hedge \cap \RSample$ and $\cardin{\conflict} =
\capacityX{\hedge} + 1$.  

The following is similar in spirit to the Clarkson technique (a
similar but simpler argument was used by Chan and Har-Peled
\cite{ch-aamis-11}).

\begin{lemma}
    The expected number of $k$-conflicts realized in $\RSample$ is
    $\xkappa = O(\feq{k}{\Energy(X)})$, where $\RSample$ is a random
    sample of $\VSetA$ such that each vertex $\vertex \in \VSetA$ is
    in $\RSample$ independently at random with probability
    $x_{\vertex} / 2$.
    
    \lemlab{upper-bound}
\end{lemma}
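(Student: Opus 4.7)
The plan is to reduce the probabilistic statement to a deterministic combinatorial bound on any fixed realization of $\RSample$, and then control the expectation of this bound using the growth property together with a Chernoff-type concentration argument.

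\textbf{Deterministic upper bound.} First I would show that, for any fixed realization of $\RSample$, the number of realized $k$-conflicts is at most $\feq{k}{|\RSample|}$. By definition, every realized $k$-conflict $\conflict$ has the form $\conflict = \hedge \cap \RSample$ for some hyperedge $\hedge$ with $\capacityX{\hedge} = k$ and $|\hedge \cap \RSample| = k+1$. Two distinct realized conflicts cannot be witnessed by the same hyperedge, since the conflict is uniquely determined by $\hedge \cap \RSample$. Thus the number of realized $k$-conflicts is at most the number of hyperedges $\hedge \in \HESet$ satisfying $|\hedge \cap \RSample| = k+1$, and applying the definition of $\feq{k}{\cdot}$ with $X = \RSample$ bounds this by $\feq{k}{|\RSample|}$.

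\textbf{Concentration argument.} Taking expectations, it remains to show $\Ex{\feq{k}{|\RSample|}} = O(\feq{k}{\Energy(X)})$. Let $\mu = \Ex{|\RSample|} = \Energy(X)/2$. I split dyadically:
\[
\Ex{\feq{k}{|\RSample|}} \leq \sum_{j \geq 0} \feq{k}{2^{j+1}\mu} \cdot \Prob{|\RSample| \geq 2^j \mu}.
\]
Applying property (B) of bounded growth iteratively yields $\feq{k}{2^{j+1}\mu} \leq c^{j+1}\feq{k}{\mu} \leq c^{j+1}\feq{k}{\Energy(X)}$. For $j = 0$ the probability is trivially at most $1$, contributing $O(\feq{k}{\Energy(X)})$. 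For $j \geq 1$, a Chernoff tail bound gives $\Prob{|\RSample| \geq 2^j \mu} \leq (e/2^j)^{2^j \mu}$, which decays super-exponentially in $j$ and easily dominates the $c^{j+1}$ factor. The resulting series converges to a constant multiple of $\feq{k}{\Energy(X)}$, finishing the proof.

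\textbf{Main obstacle.} The only subtlety lies in the regime where $\mu$ is near its lower bound (we only know $\Energy(X) \geq 1$), where the Chernoff tail is comparatively weak. I would handle this by treating the first few terms $j = O(1)$ of the series separately, using the trivial probability bound $1$ and absorbing the constant factors $c^{j+1}$ into the final $O(\cdot)$ constant, while the super-exponential decay handles all larger $j$ uniformly. The rest of the argument is a clean Clarkson-style combination of a deterministic intersection-count bound with a doubling-plus-Chernoff estimate, and should be straightforward to carry out.
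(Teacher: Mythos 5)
Your proposal is correct and follows essentially the same route as the paper's proof: a deterministic bound that the number of realized $k$-conflicts is at most $\feq{k}{|\RSample|}$, followed by a Chernoff-plus-bounded-growth estimate for $\Ex{\feq{k}{|\RSample|}}$. The only cosmetic difference is in the splitting: you use a dyadic decomposition of the tail ($|\RSample|\in[2^j\mu,2^{j+1}\mu)$), incurring a factor $c^{j+1}$ from property (B), while the paper uses a linear decomposition ($|\RSample|\in[t\Energy/2,(t+1)\Energy/2)$), incurring $c^{O(\log t)}$; in either case Chernoff's super-exponential decay dominates, and your note about handling the first $O(1)$ terms separately when $\Energy(X)$ is near its lower bound of $1$ correctly addresses the same minor subtlety that the paper implicitly relies on.
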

\begin{proof:in:appendix}{\lemref{upper-bound}}
    Each $k$-conflict $\conflict$ that is realized corresponds to a
    hyperedge $\hedge$ with capacity $k$ such that $\conflict = \hedge
    \cap \RSample$. Additionally, two realized conflicts that are
    distinct correspond to different hyperedges.  Therefore the number
    of $k$-conflicts that are realized in $\RSample$ is at most the
    number of hyperedges $\hedge$ such that the capacity of $\hedge$
    is $k$ and $|\hedge \cap \RSample| = k + 1$.  It follows from the
    definition of $\feq{k}{\cdotp}$ that the number of $k$-conflicts
    is at most $\feq{k}{|\RSample|}$. Therefore it suffices to upper
    bound the expected value of $\feq{k}{|\RSample|}$.
    
    Note that $\Ex{\cardin{\RSample}} = \Energy(\VSetA)/2$. We have
    \begin{align*}
        \Ex{ \feq{k}{|\RSample|} \MakeBig\! }%
        &\leq%
        \sum_{t=0}^\infty \Prob{\cardin{\RSample} \geq t
           \frac{\Energy(\VSetA)}{2}} \feq{k}{ \MakeSBig
           (t+1)\frac{\Energy(\VSetA)}{2}}%
        \leq%
        \sum_{t=0}^\infty 2^{-(t+1)/2} \feq{k}{ \MakeSBig
           (t+1)\frac{\Energy(\VSetA)}{2}}%
        \\
        &\leq%
        \sum_{t=0}^\infty 2^{-(t+1)/2} c^{O( \log t)} \, \feq{k}{
           \MakeSBig \Energy(\VSetA)}%
        =%
        O\pth{ \MakeBig \feq{k}{ \Energy(\VSetA) } },
    \end{align*}
    since $\hgraph$ has the bounded growth property (see
    \defref{bounded}), and by the Chernoff inequality (we use here
    implicitly that $\Energy(\VSetA) \geq 1$).
\end{proof:in:appendix}

\begin{lemma}
    For each $k$-conflict $\conflict$, the probability that
    $\conflict$ is realized in $\RSample$ is at least
    $\forceX{\conflict} / 2(2e)^{k}$. Therefore the expected number of
    $k$-conflicts realized in $\RSample$ is $\xkappa =
    \Omega\left(\left(\sum_{\conflict \in \conflictSet_k, \conflict
               \subseteq X} \forceX{\conflict}\right) /
        (2e)^k\right)$.
    
    \lemlab{lower-bound}
\end{lemma}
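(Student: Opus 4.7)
The plan is to prove the pointwise bound on $\Prob{\conflict \text{ realized in } \RSample}$ for each fixed $k$-conflict $\conflict$, and then sum via linearity of expectation. By the very definition of a $k$-conflict, there exists some hyperedge $\hedge \in \HESet$ with $\capacityX{\hedge} = k$ and $\conflict \subseteq \hedge$; fix one such $\hedge$. I will then lower bound the realization probability by the probability of the more restrictive event that $\conflict \subseteq \RSample$ \emph{and} $(\hedge \setminus \conflict) \cap \RSample = \emptyset$, since when this event occurs we have $\hedge \cap \RSample = \conflict$ with $\cardin{\conflict} = k+1 = \capacityX{\hedge} + 1$, witnessing the realization of $\conflict$.

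Since vertices are sampled independently with probabilities $x_{\vertex}/2$, this restrictive event has probability exactly
\[
\prod_{\vertex \in \conflict} \frac{x_{\vertex}}{2} \cdot \prod_{\vertex \in \hedge \setminus \conflict} \pth{1 - \frac{x_{\vertex}}{2}} = \frac{\forceX{\conflict}}{2^{k+1}} \cdot \prod_{\vertex \in \hedge \setminus \conflict} \pth{1 - \frac{x_{\vertex}}{2}}.
\]
The key step is to lower bound the second product. Here I would use the elementary inequality $1 - y \geq e^{-2y}$, valid for $y \in [0,1/2]$, which applies since $x_{\vertex} \leq 1$ implies $x_{\vertex}/2 \leq 1/2$. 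This yields $\prod_{\vertex \in \hedge \setminus \conflict}(1 - x_{\vertex}/2) \geq \exp\pth{-\sum_{\vertex \in \hedge \setminus \conflict} x_{\vertex}}$. Now I invoke the LP feasibility constraint for $\hedge$, namely $\sum_{\vertex \in \hedge} x_{\vertex} \leq \capacityX{\hedge} = k$, which in particular gives $\sum_{\vertex \in \hedge \setminus \conflict} x_{\vertex} \leq k$, so the product is at least $e^{-k}$.

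Combining these bounds shows that $\Prob{\conflict \text{ realized in } \RSample} \geq \forceX{\conflict} / (2^{k+1} e^k) = \forceX{\conflict}/(2(2e)^k)$, which is the first assertion. The second assertion then follows immediately by linearity of expectation: the expected number of $k$-conflicts realized is $\xkappa = \sum_{\conflict \in \conflictSet_k, \conflict \subseteq X} \Prob{\conflict \text{ realized}} \geq \sum_{\conflict} \forceX{\conflict}/(2(2e)^k)$, giving the $\Omega$ bound.

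There is no real obstacle here: the argument is essentially a one-shot calculation. The only subtle point is conceptual, namely the switch from ``$\conflict$ is realized by \emph{some} hyperedge'' to ``$\conflict$ is realized by a particular fixed witness hyperedge $\hedge$'', which can only decrease the probability and is what lets us cleanly apply the LP constraint $\sum_{\vertex \in \hedge} x_{\vertex} \leq k$. Together with \lemref{upper-bound}, this sandwich bound will produce the desired combinatorial estimate on $\sum_{\conflict \in \conflictSet_k} \forceX{\conflict}$ in terms of $\feq{k}{\Energy(X)}$, losing only a factor exponential in $k$.
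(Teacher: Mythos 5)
Your proof is correct and follows essentially the same approach as the paper's: fix a witness hyperedge $\hedge$ of capacity $k$ generating the conflict, lower bound the realization probability by the event that $\conflict \subseteq \RSample$ and $(\hedge \setminus \conflict) \cap \RSample = \emptyset$, bound each factor $1 - x_\vertex/2$ by $e^{-x_\vertex}$, and invoke LP feasibility to bound $\sum_{\vertex \in \hedge \setminus \conflict} x_\vertex \leq k$. The only cosmetic difference is that you state the auxiliary inequality as $1-y \geq e^{-2y}$ for $y\in[0,1/2]$ and substitute $y=x_\vertex/2$, while the paper states the same inequality directly in the form $1 - x_\vertex/2 \geq e^{-x_\vertex}$ for $x_\vertex \leq 1$.
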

\begin{proof:in:appendix}{\lemref{lower-bound}}
    Let $\hedge \in \HESet$ be a hyperedge with capacity $k$ that
    generated the conflict $\conflict$. Since $x$ is a feasible
    solution for the \LP, we have that $\sum_{\vertex \in \hedge -
       \conflict} x_{\vertex} \leq \sum_{\vertex \in \hedge}
    x_{\vertex} \leq \capacityX{\hedge} = k$.  Clearly, the conflict
    $\conflict$ is realized if we pick all the vertices of
    $\conflict$, and none of the vertices of $\hedge - \conflict$, and
    the probability of that event is
    \begin{align*}
        \prod_{\vertex \in \conflict} \frac{x_{\vertex}}{2}
        \prod_{\vertex \in \hedge - \conflict} \pth{1 -
           \frac{x_{\vertex}}{2}} &\geq%
        \frac{1}{2^{k+1}} \prod_{\vertex \in \conflict} x_{\vertex}
        \prod_{\vertex \in \hedge - \conflict} \exp \pth{ -
           x_{\vertex}}%
        \\
        &=%
        \frac{\forceX{\conflict}}{2^{k+1}} \cdot \exp \pth{ -
           \sum_{\vertex \in \hedge - \conflict} x_{\vertex}}%
        \geq%
        \frac{\forceX{\conflict}}{2(2e)^k},
    \end{align*}
    In the first line we used the inequality $1 - x_{\vertex} / 2 \geq
    \exp(-x_{\vertex})$, which holds since $x_{\vertex} \leq 1$.
\end{proof:in:appendix}

Putting the above two lemmas together, we get the following.

\begin{lemma}
    For any non-negative integer $k$ we have $\ds\sum_{h \in
       \conflictSet_k, h \subseteq \VSetA} \forceX{\conflict} =
    O\pth{(2e)^{k} \feq{k}{\Energy(\VSetA)} \MakeBig\! }$.
    
    \lemlab{total-energy}
\end{lemma}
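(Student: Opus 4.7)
The plan is to simply combine the two preceding lemmas, since they sandwich the same quantity $\xkappa$ (the expected number of $k$-conflicts realized in the random sample $\RSample$, where each vertex is kept independently with probability $x_\vertex/2$) from both sides.

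Concretely, by \lemref{upper-bound} we have the upper bound
\[
\xkappa = O\pth{ \feq{k}{\Energy(\VSetA)} },
\]
while by \lemref{lower-bound} we have the lower bound
\[
\xkappa = \Omega\pth{ \frac{1}{(2e)^k} \sum_{\conflict \in \conflictSet_k,\, \conflict \subseteq \VSetA} \forceX{\conflict} }.
\]
Chaining these two estimates yields
\[
\frac{1}{(2e)^k} \sum_{\conflict \in \conflictSet_k,\, \conflict \subseteq \VSetA} \forceX{\conflict}
= O\pth{\feq{k}{\Energy(\VSetA)} },
\]
and multiplying through by $(2e)^k$ gives exactly the claimed bound.

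There is essentially no obstacle: all the work has already been done in establishing the two one-sided bounds on $\xkappa$. The only thing to double-check is that both bounds really refer to the same random process on $\RSample$ (sampling each $\vertex$ with probability $x_\vertex/2$) and the same index set $\conflictSet_k$ restricted to $\conflict \subseteq \VSetA$, which is indeed the case by inspection of the two statements.
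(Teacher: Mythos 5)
Your proof is correct and is exactly the paper's argument: the text right before the lemma says ``Putting the above two lemmas together,'' and the intended derivation is precisely the sandwich of $\xkappa$ between the upper bound of \lemref{upper-bound} and the lower bound of \lemref{lower-bound}, then rearranging.
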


\begin{rexample}
    In our running example, we have that the expected number of
    $k$-conflicts that are being realized by a random sample (sampling
    more or less according to the \LP values) is $ \sum_{\conflict \in
       \conflictSet_k} \forceX{\conflict} = O\pth{(2e)^{k} k\Energy
       \MakeBig\! }$. This is a hefty quantity,
    but the key observation is that if we sample according to the \LP
    values scaled down by a large enough constant, then the
    probability of such a conflict to be realized drops exponentially
    with $k$. In particular, for a sufficiently large constant, the
    expected number of realized $k$-conflicts in such a smple is going
    to be $\leq \Energy/\pth[]{10 \cdot 2^k}$. Intuitively, this
    implies that such conflicts can only cause the algorithm to drop
    very few vertices during the rounding stage, thus guaranteeing a
    good solution.
\end{rexample}

\subsubsection{Resistance is futile, if you pick the right vertex}

\seclab{proof:resistance}

In the following, we consider a subset $\VSetA$ of the vertices and we
show that there exists a vertex $\vertex \in \VSetA$ whose
$\scl$-resistance $\resistZ{\scl}{\vertex}{\VSetA}$ is at most $1/4$.
Recall that $\conflictSet_k$ is the set of all $k$-conflicts involving
vertices in $\VSet$. We can rewrite the $\scl$-resistance of $\vertex$
in $\VSetA$ as
\begin{align*}
    \resistZ{\scl}{\vertex}{\VSetA} =
    \frac{\scl}{x_{\vertex} } \sum_{\conflict \in \conflictSet, \conflict
       \subseteq \VSetA, \vertex \in \conflict}
       \forceY{\scl}{\conflict} =
    \frac{1}{x_{\vertex}} \sum_{k =
       \mc}^{\infty} \frac{1}{\scl^k} \sum_{\conflict \in
       \conflictSet_k, h \subseteq \VSetA, \vertex \in \conflict}
    \forceX{\conflict}.
\end{align*}
As shown in \lemref{total-energy}, we can relate the total potential
of the conflicts of $\conflictSet_k$ that are contained in $\VSetA$ to
the maximum number of $k$-conflicts contained in a set of at most
$\Energy(\VSetA)$ vertices, where $\Energy(\VSetA) = \sum_{\vertex \in
   \VSetA} x_{\vertex}$.

Recall that the hypergraph has the bounded growth
property (see \defref{bounded}) and this property is
hereditary. Therefore the function $\feq{k}{\cdotp}$ in the lemma
above has the two properties described in \defref{bounded} and we get
the following corollary.

\begin{corollary}
    We have $\sum_{h \in \conflictSet_k, h \subseteq \VSetA}
    \forceX{\conflict} = O\pth{2^{ck} \Energy(\VSetA)
       \gamma(\Energy(\VSetA)})$, where $c$ is a constant.
    
    \corlab{total-energy}
\end{corollary}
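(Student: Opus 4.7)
The plan is to derive this corollary as an essentially immediate consequence of the preceding \lemref{total-energy} combined with property (A) of the bounded growth property from \defref{bounded}. No new probabilistic argument is needed.

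First, I would invoke \lemref{total-energy}, which gives
\[
\sum_{h \in \conflictSet_k, h \subseteq \VSetA} \forceX{\conflict}
= O\pth{(2e)^{k}\, \feq{k}{\Energy(\VSetA)} }.
\]
Next, since $\hgraph$ has the bounded growth property and this property is hereditary, \defref{bounded}(A) guarantees the existence of a non-decreasing $\gamma(\cdot)$ and a constant $c_0$ such that $\feq{k}{t} \leq 2^{c_0 k}\, t\, \gamma(t)$ for every $k$ and every $t$. Plugging $t = \Energy(\VSetA)$ into this estimate yields
\[
\feq{k}{\Energy(\VSetA)} \leq 2^{c_0 k}\, \Energy(\VSetA)\, \gamma(\Energy(\VSetA)).
\]

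Substituting this bound back and combining the two exponential factors in $k$, we obtain
\[
\sum_{h \in \conflictSet_k, h \subseteq \VSetA} \forceX{\conflict}
= O\pth{(2e)^{k}\, 2^{c_0 k}\, \Energy(\VSetA)\, \gamma(\Energy(\VSetA))}
= O\pth{2^{c k}\, \Energy(\VSetA)\, \gamma(\Energy(\VSetA))},
\]
for a constant $c = c_0 + \log_2(2e)$, which is the claimed bound.

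The only minor subtlety is that $\feq{k}{\cdot}$ is naturally indexed by an integer (the size of a vertex subset) while $\Energy(\VSetA)$ may be fractional; this is easily handled by interpreting $\feq{k}{t}$ for real $t \geq 1$ as $\feq{k}{\lceil t \rceil}$ and using property (B) of \defref{bounded} (with $x = \lceil t\rceil/t \in [1,2]$ since $\Energy(\VSetA) \geq 1$) to absorb the rounding into the constant. I do not expect this or any other step to be an obstacle — the corollary is really a restatement of \lemref{total-energy} in terms of the structural parameter $\gamma$ rather than the combinatorial parameter $\feq{k}{\cdot}$.
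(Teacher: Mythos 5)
Your proof is correct and matches the paper's own (implicit) argument exactly: both derive the corollary by substituting property~(A) of \defref{bounded} into the bound of \lemref{total-energy} and absorbing $(2e)^k \cdot 2^{O(k)}$ into $2^{ck}$. Your remark about handling fractional $\Energy(\VSetA)$ via property~(B) is a reasonable tidying of a detail the paper leaves implicit.
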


We can use \corref{total-energy} to complete the
proof of \lemref{resistance} as follows.

\begin{lemma}
    Suppose that the hypergraph $\hgraph$ satisfies the bounded growth
    property (see \defref{bounded}). Let $\scl = \sclC\,
    \gamma(\Energy)^{1/\mc}$, where $\sclC > 0$ is a sufficiently
    large constant and $\mc$ is the minimum capacity of the given
    instance (see \Eqref{m:c}).  Then, for any set $\VSetA \subseteq
    \VSet$, there exists a vertex $\vertex \in \VSetA$ such that
    $\resistZ{\scl}{\vertex}{ \VSetA} \leq 1/4$.
    
    \lemlab{resistance}
\end{lemma}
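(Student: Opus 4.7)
The plan is to establish the lemma by a weighted averaging argument: I will show that the $x_{\vertex}$-weighted sum of the resistances over $\VSetA$ is at most $\Energy(\VSetA)/4$, which forces some single vertex to have resistance at most $1/4$.

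First, I unfold the definition of $\resistZ{\scl}{\vertex}{\VSetA}$ and swap the order of summation. Since a $k$-conflict $\conflict$ has exactly $k+1$ vertices, each of which contributes the term $x_{\vertex} \cdot (\scl/x_{\vertex})\forceY{\scl}{\conflict} = \scl \forceY{\scl}{\conflict}$, and since $\forceY{\scl}{\conflict} = \forceX{\conflict}/\scl^{k+1}$, this gives
\[
\sum_{\vertex \in \VSetA} x_{\vertex}\, \resistZ{\scl}{\vertex}{\VSetA}
\;=\;
\sum_{k \geq \mc} \frac{k+1}{\scl^{k}} \sum_{\conflict \in \conflictSet_k,\, \conflict \subseteq \VSetA} \forceX{\conflict}.
\]
The inner sum is controlled by \corref{total-energy}, which bounds it by $O(2^{ck} \Energy(\VSetA)\, \gamma(\Energy(\VSetA)))$.

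Next, I use the scaling $\scl = \sclC \gamma(\Energy)^{1/\mc}$ to absorb the $\gamma$ factor. Since $\gamma$ is non-decreasing and $\VSetA \subseteq \VSet$, we have $\gamma(\Energy(\VSetA)) \leq \gamma(\Energy)$, and for every $k \geq \mc$,
\[
\frac{\gamma(\Energy(\VSetA))}{\scl^{k}}
\;\leq\;
\frac{\gamma(\Energy)}{\sclC^{k}\, \gamma(\Energy)^{k/\mc}}
\;\leq\;
\frac{1}{\sclC^{k}},
\]
because $k/\mc \geq 1$. Substituting this back yields
\[
\sum_{\vertex \in \VSetA} x_{\vertex}\, \resistZ{\scl}{\vertex}{\VSetA}
\;=\;
O\!\left( \Energy(\VSetA) \sum_{k \geq \mc} \frac{(k+1)\, 2^{ck}}{\sclC^{k}} \right).
\]
Taking $\sclC$ a sufficiently large constant (larger than $2^{c+O(1)}$) makes the geometric tail converge and bounds the total by $\Energy(\VSetA)/4$.

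Finally, since $\sum_{\vertex \in \VSetA} x_{\vertex}\, \resistZ{\scl}{\vertex}{\VSetA} \leq \tfrac{1}{4} \sum_{\vertex \in \VSetA} x_{\vertex}$ with all $x_{\vertex} \geq 0$, at least one vertex $\vertex \in \VSetA$ must satisfy $\resistZ{\scl}{\vertex}{\VSetA} \leq 1/4$, as required. The main obstacle is getting the two exponential factors under control simultaneously: the $2^{ck}$ coming from the bounded-growth bound on the number of $k$-conflicts, and the $\gamma(\Energy)$ factor. The choice $\scl = \sclC \gamma(\Energy)^{1/\mc}$ is precisely calibrated so that $\scl^{\mc} \geq \sclC^{\mc} \gamma(\Energy)$ wipes out the $\gamma$ term on the very first (and worst) term of the series, while the remaining factor $\sclC$ (chosen once and for all, independent of $\mc$) dominates the $2^{c}$ and $(k+1)$ factors to make the tail summable.
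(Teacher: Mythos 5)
Your proof is correct and follows essentially the same route as the paper's: you form the $x_\vertex$-weighted sum $T=\sum_{\vertex\in\VSetA}x_\vertex\,\resistZ{\scl}{\vertex}{\VSetA}$, swap summation order so each $k$-conflict contributes $(k+1)$ copies of $\forceX{\conflict}/\scl^k$, invoke \corref{total-energy}, and use $\gamma(\Energy(\VSetA))\leq\gamma(\Energy)$ together with $\scl^k\geq\sclC^k\gamma(\Energy)^{k/\mc}\geq\sclC^k\gamma(\Energy)$ for $k\geq\mc$ to reduce to a convergent geometric series, concluding by averaging. This matches the paper's argument step for step, including the observation that the choice of $\scl$ is calibrated so $\scl^{\mc}$ kills the $\gamma$ factor at the leading term.
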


\begin{proof}
    Let $T = \sum_{\vertex \in \VSetA} x_{\vertex}
    \resistZ{\scl}{\vertex}{\VSetA}$. The quantity $T /
    \Energy(\VSetA)$ is the weighted average of the resistances of the
    vertices in $\VSetA$, where the weight of a vertex $\vertex$ is
    $x_{\vertex} / \Energy(\VSetA)$.  Therefore it suffices to show
    that $T \leq \Energy(\VSetA) / 4$, since the minimum resistance is
    at most the weighted average. We have
    \begin{align*}
        T &=%
        \sum_{k=\mc}^\infty \frac{1}{\scl^k} \sum_{\vertex \in \VSetA}
        \sum_{\substack{\conflict \in \conflictSet_k\\
              \conflict \subseteq
              \VSetA\\
              \vertex \in \conflict}} \forceX{\conflict} =%
        \sum_{k=\mc}^\infty \frac{k + 1}{\scl^k}
        \sum_{\substack{\conflict \in \conflictSet_k\\
              \conflict \subseteq \VSetA }} \forceX{\conflict} =%
        \sum_{k=\mc}^\infty
        \frac{k+1}{\scl^k} \sum_{\substack{\conflict \in \conflictSet_k\\
              \conflict \subseteq \VSetA}} \forceX{\conflict}%
        \\ &=%
        \sum_{k=\mc}^\infty \frac{k+1}{\scl^k} O\pth{ 2^{ck}
           \Energy(\VSetA) \gamma(\Energy(\VSetA)) \MakeBig\!}
        %
        \leq %
        \Energy(\VSetA) \cdot \underbrace{\beta \sum_{k=\mc}^\infty {
              \pth{\frac{2^c}{\scl}}^k {} (k+1)
              \gamma(\Energy(\VSetA)})}_{=S},
    \end{align*}
    by \corref{total-energy}, where $\beta$ is some constant.  Since
    $\scl = \sclC\gamma(\Energy)^{1/\mc}$, we have
    \begin{align*}
        S &=%
        { \sum_{k=\mc}^\infty \beta \pth{\frac{2^c}{\sclC}}^k {} (k+1)
           \pth{ \frac{1}{\gamma(\Energy)}}^{k/\mc}
           \gamma(\Energy(\VSetA))}%
        \leq%
        { \sum_{k=\mc}^\infty \beta \pth{\frac{2^c}{\sclC}}^k {} (k+1)
           \pth{ \frac{1}{\gamma(\Energy)}}^{k/\mc} \gamma(\Energy)}%
        \leq%
        \frac{1}{4},
    \end{align*}
    In the second to last inequality, we have used the fact that
    $\gamma(\cdotp)$ is non-decreasing. The last inequality follows if
    we pick $\sclC$ to be a sufficiently large constant.  Therefore $T
    \leq \Energy(\VSetA) / 4$, and the lemma follows.
\end{proof}


\newcommand{\ImproveRunningTime}{%
\subsection{Improving the running time}
\seclab{ordering:polytime}%
\apndlab{improve:r:t}%

In \secref{hypergraph}, we described an algorithm that
constructs an ordering of the vertices by repeatedly finding the
vertex of least resistance with respect to the set of remaining
vertices.  Computing the resistance of a vertex by brute force takes
$O(n^{C+O(1)})$ time, where $C$ is the maximum capacity of an edge in
$\HESet$.  However, for our analysis to go through, we only need to
find a vertex that is safe with respect to the set of remaining
vertices; informally, a vertex $\vertex$ is safe if the probability
that it participates in a conflict with a random
sample of the remaining vertices is smaller than some constant (that
is strictly smaller than one), where each remaining vertex $u$ is
included in the sample with probability $x_u / \scl$.  In this section
we show that there is a sampling algorithm that finds a safe vertex
with high probability and its running time is polynomial in the
maximum capacity $C$.

\begin{lemma}
	Computing a good ordering of the vertices can be done in
	polynomial time. Namely, the algorithm of \secref{hypergraph} can
	be implemented in polynomial time.
\end{lemma}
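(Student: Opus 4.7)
The plan is to replace the brute-force computation of $\resistZ{\scl}{\vertex}{\VSetA}$ by a Monte-Carlo estimation of the (more relevant) rejection probability. The key point, already hinted at in the informal definition, is that the analysis only needs a vertex $\vertex \in \VSetA$ whose probability of being rejected is bounded away from $1$, not a vertex attaining the exact minimum resistance; this is a much more robust property, and it is what \lemref{prob} actually used.

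Concretely, for each candidate $\vertex \in \VSetA$ I would set $p_\vertex = \Prob{\vertex \notin \OSet \sep \vertex \in \RSet}$ (for the current residual hypergraph) and estimate $p_\vertex$ by drawing $N = \Theta(\log n)$ independent samples $S_1,\ldots,S_N \subseteq \VSetA \setminus \{\vertex\}$, each $u$ being included in $S_j$ independently with probability $x_u/\scl$. For each sample $S_j$ I check, by iterating over all hyperedges $\hedge \ni \vertex$, whether $|\hedge \cap (S_j \cup \{\vertex\})| \geq \capacityX{\hedge}+1$; if so, $S_j$ is declared ``bad''. Since the paper assumes $|\HESet|$ is polynomial in $|\VSet|$, each such test costs polynomial time, so the whole estimation for a single $\vertex$ is $\mathrm{poly}(n)$. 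Let $\widehat{p}_\vertex$ be the fraction of bad samples; I pick the vertex $\vertex^{*}$ minimizing $\widehat{p}_\vertex$, place it last in the ordering, and recurse on $\VSetA \setminus \{\vertex^{*}\}$.

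For correctness, note that by Markov's inequality $p_\vertex \leq \resistZ{\scl}{\vertex}{\VSetA}$, so \lemref{resistance} still guarantees the existence of some $\vertex' \in \VSetA$ with $p_{\vertex'} \leq 1/4$. A Hoeffding bound together with a union bound over $\VSetA$ ensures that, with probability $1 - 1/\mathrm{poly}(n)$, all estimates satisfy $|\widehat{p}_\vertex - p_\vertex| \leq 1/20$ simultaneously; hence the chosen $\vertex^{*}$ has true rejection probability at most $1/4 + 1/10 < 1/2$. Plugging this slightly weaker bound into \lemref{prob} (replacing the constant $3/4$ by $1/2$) only changes the hidden multiplicative constant in \corref{result}; the overall approximation ratio $O(\gamma(\Energy)^{1/\mc})$ is preserved. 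A further union bound over the $\leq n$ rounds of the recursion keeps the overall failure probability inverse-polynomial, and it can be driven down by increasing $N$ by a constant factor.

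The obstacle I anticipate is the potentially large capacity $\capacityX{\hedge}$: realizing a conflict on $\hedge$ requires sampling at least $\capacityX{\hedge}+1$ vertices from $\hedge$, which can be exponentially unlikely, so multiplicative-accuracy estimation of $p_\vertex$ would need exponentially many samples (this is exactly the issue that made the brute-force calculation take $n^{C+O(1)}$ time). The resolution is that our selection rule only needs an \emph{additive} estimate of $p_\vertex$ to constant accuracy, for which additive Chernoff/Hoeffding bounds require only $O(\log n)$ samples, independent of any capacity; small values of $p_\vertex$ are indistinguishable from $0$ by our estimator, but that is fine because we are only trying to certify the inequality $p_{\vertex^{*}} < 1/2$.
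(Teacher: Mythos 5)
Your proposal matches the paper's own proof: both replace the brute-force resistance computation by a Monte-Carlo estimate of the violation probability $\mu(\vertex,\VSetA)$ (what you call $p_\vertex$), observe via \lemref{resistance} and Markov's inequality that some vertex has violation probability at most $1/4$, and argue by Chernoff/Hoeffding concentration that an additive-constant-accuracy estimate suffices to preserve the approximation ratio up to a constant factor. The only cosmetic difference is that you note $\Theta(\log n)$ samples per estimate are enough, whereas the paper conservatively invokes an unspecified ``sufficiently large polynomial'' number of samples.
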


\begin{proof}
    To get the same quality of approximation we do not need to take
    the vertex of least resistance in each round (of computing the
    ordering), but merely a vertex that is ``safe.'' More precisely,
    let $\VSetA$ be the current set of vertices, let $\vertex$ be a
    vertex of this set, and let $\RSample$ be a random sample of
    $\VSetA$ in which each vertex $u$ is included with probability
    $x_u / \scl$ (also we force $\vertex$ to be in $\RSample$).  We
    say that $\vertex$ is \emphi{violated} in $\RSample$ if $\vertex$
    is contained in a hyperedge $\hedge$ such that the number of
    vertices of $\hedge$ that are in $\RSample$ is larger than its
    capacity $\capacityX{\hedge}$. Let $\mu(\vertex, \VSetA)$ denote
    the probability that $\vertex$ is violated in $\RSample$. Note
    that $\mu(\vertex, \VSetA)$ is a (conservative) upper bound on the
    probability that $\vertex$ is rejected by our rounding algorithm
    if we started with an ordering in which $\VSetA\setminus
    \{\vertex\}$ is the set of all vertices that come before
    $\vertex$. Therefore, in order for our rounding to succeed, in
    each round we only need to find a vertex $\vertex$ for which the
    probability $\mu(\vertex, \VSetA)$ is low, where $\VSetA$ is the
    set of all vertices that still need to be ordered at the beginning
    of the round. (We remark that it follows from the argument of
    \lemref{resistance} that, for any set $\VSetA$, there is a vertex
    $\vertex$ for which $\mu(\vertex, \VSetA) \leq 1/4$.)
    
    Now we are ready to describe how to construct an ordering for our
    algorithm. Let $\VSetA$ be the set of vertices that still need to
    be ordered. As we will see shortly, for each vertex $\vertex \in
    \VSetA$, we can compute an estimate $\overline{\mu}(\vertex,
    \VSetA )$ of the probability $\mu(\vertex, \VSetA )$. We pick the
    vertex $\vertex$ with minimum estimated probability
    $\overline{\mu}(\vertex, \VSetA)$, we make $\vertex$ the last
    vertex (in the ordering of $\VSetA$) and we recursively order
    $\VSetA \setminus \{\vertex\}$.
    
    We can compute the estimates $\overline{\mu}(\vertex, \VSetA)$ in
    polynomial time as follows. Fix a vertex $\vertex$. Let $\Pnum$ be
    a sufficiently large polynomial in $n$. We pick $\Pnum$
    independent random samples of $\VSetA$ (again, forcing $\vertex$
    to be in each of these samples); in each random sample, each
    vertex $u$ is included with probability $x_u / \scl$. We set
    $\overline{\mu}(\vertex, \VSetA )$ to be the fraction of the
    samples in which the vertex $\vertex$ is violated. Using a
    standard argument based on the Chernoff inequality, we can show
    that our estimates are very close with high probability, and
    therefore our rounding algorithm achieves the required
    approximation with high probability as well; we omit the easy but
    tedious details.
\end{proof}
}

\InConfVer{See \apndref{improve:r:t} for details of how to improve the
   running time.}

\InFullVer{\ImproveRunningTime}

\subsection{The result}
\begin{theorem}
    Let $\hgraph = (\VSet, \HESet)$ be a hypergraph with a weight
    function $\weight{\cdot}$ on the vertices and a capacity function
    $\capacityX{\cdot}$ on the edges, such that $\cardin{\HESet}$ is
    polynomial in $\cardin{\VSet}$ and $\hgraph$ satisfies the bounded
    growth property (see \defref{bounded}); that is, $\feq{k}{t} =
    2^{O(k)} t \gamma(t)$. Then we can compute in polynomial time a
    subset $X \subseteq \VSet$ of vertices such that no hyperedge
    $\hedge$ contains more than its capacity $\capacityX{\hedge}$
    vertices of $X$.  Furthermore, in expectation, the total weight of
    the output set is $\Omega \pth{\Opt /\gamma(\Energy)^{1/\mc}}$,
    where $\Opt$ is the weight of the optimal solution, and $\mc$ is
    the minimum capacity of the given instance.
    
    \thmlab{p:hypergraph}
\end{theorem}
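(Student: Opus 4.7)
The plan is to assemble the pieces developed in \secref{hypergraph} and \secref{proof:resistance} into the full algorithm, and argue that each component runs in polynomial time. First I would solve the \HyperLP relaxation from \secref{relax} to obtain a fractional solution $x$ with energy $\Energy = \sum_{\vertex} x_\vertex$ and value $\OptLP \geq \Opt$. I would then set the scaling parameter $\scl = \sclC \gamma(\Energy)^{1/\mc}$ with $\sclC$ the constant from \lemref{resistance}, and build the ordering $\vertex_1,\dots,\vertex_n$ by repeatedly selecting a ``safe'' last vertex in the currently unordered set, as described in \secref{ordering:polytime}. The final rounding consists of the two phases from \secref{algorithm}: sample each $\vertex$ independently with probability $x_\vertex/\scl$ to form $\RSet$, then scan the sampled vertices in the order $\vertex_1,\dots,\vertex_n$ and accept each one if doing so does not violate any hyperedge capacity. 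By construction, the output set $\OSet$ is feasible.

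The next step is to verify the weight bound. Here I would invoke \lemref{resistance}, which guarantees that at every stage of the ordering construction there is a vertex whose $\scl$-resistance in the remaining set is at most $1/4$; this is precisely the property driving \lemref{prob}, which yields $\Prob{\vertex_i \in \OSet \mid \vertex_i \in \RSet} \geq 3/4$ for every $i$. Combined with $\Prob{\vertex_i \in \RSet} = x_{\vertex_i}/\scl$ and linearity of expectation, this gives
\[
\Ex{\sum_{\vertex \in \OSet} w_\vertex} \;\geq\; \frac{3}{4\scl}\sum_{\vertex} x_\vertex w_\vertex \;\geq\; \frac{3\,\Opt}{4\sclC\,\gamma(\Energy)^{1/\mc}},
\]
which is exactly the bound claimed in \corref{result} and in the theorem statement.

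The remaining task is to ensure that all of this runs in polynomial time. The LP can be solved in polynomial time (and its size is polynomial, since $\cardin{\HESet}$ is polynomial in $\cardin{\VSet}$ by hypothesis). The randomized rounding phase is trivially polynomial. The main obstacle, and the part most in need of care, is the ordering: a naive computation of $\resistZ{\scl}{\vertex}{\VSetA}$ enumerates conflicts of every size up to the maximum capacity $C$ and costs $n^{C+O(1)}$. I would handle this via the argument of \secref{ordering:polytime}: instead of computing resistance exactly, estimate for each candidate vertex $\vertex$ the probability $\mu(\vertex,\VSetA)$ that $\vertex$ is violated in a random sample of $\VSetA$ (conditioned on including $\vertex$), by drawing a polynomial number of independent samples and checking violation of each hyperedge containing $\vertex$ directly. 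A Chernoff bound shows these estimates concentrate tightly around the true values, so with high probability we pick a vertex whose true violation probability is close to the minimum, and in particular $\leq 1/4$ by \lemref{resistance}. The argument of \lemref{prob} only uses this upper bound on the violation probability, so the approximation guarantee carries over, yielding the claimed $\Omega(\Opt/\gamma(\Energy)^{1/\mc})$ bound in expectation within polynomial time.
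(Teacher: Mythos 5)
Your proposal assembles exactly the ingredients the paper uses: solve \HyperLP, set $\scl = \sclC\gamma(\Energy)^{1/\mc}$, build the ordering by repeatedly selecting a low-resistance (safe) vertex, round with alteration, invoke \lemref{resistance}, \lemref{prob}, and \corref{result} for the weight bound, and use the sampling-based estimation of the rejection probability from \secref{ordering:polytime} to make the ordering step polynomial. This is the same proof the paper gives, distributed across those lemmas; no gaps.
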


\InConfVer{
\section{Conclusions}
\seclab{conclusions}%

In this paper, we presented a general framework for approximating
geometric packing problems with non-uniform constraints. We then
applied this framework in a systematic fashion to get improved
algorithms for specific instances of this problem, many of which
required additional non-trivial ideas. There are several special cases
of this problem for which we currently do not know any useful
approximation; for example, the special case of packing axis-parallel
boxes into points, in which the boxes are in four dimensions is still
wide open. Making some progress on these special cases is an
interesting direction for future work.

\section*{Acknowledgments}

The authors thank Timothy Chan, Chandra Chekuri, and Esther Ezra for
several useful discussions.
}

\InConfVer{%
\bibliographystyle{alpha}%
\bibliography{packing}

\appendix
}

\InConfVer{%
\section{Approximate packing for hypergraphs}
}

Consider an integer constant $\cRelax >0$, and observe that one can
always relax the capacity constraints of a given instance of
\PackHGraph by replacing all capacities smaller than $\cRelax$ by
$\cRelax$. \thmref{p:hypergraph} thus implies the following.

\begin{corollary}
    Given an instance of \PackHGraph, with the bounded growth
    property, one can compute in polynomial time a $\pth[]{ O
       \pth{\gamma(\Energy)^{1/\cRelax}}, \cRelax}$-approximation to
    the optimal solution.
\end{corollary}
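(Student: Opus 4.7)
The plan is to reduce the bi-criteria problem to the feasibility problem already solved by \thmref{p:hypergraph} via a simple capacity-lifting trick, exactly as hinted in the paragraph preceding the corollary. Given an instance $\hgraph = (\VSet, \HESet)$ with capacity function $\capacityX{\cdot}$ and a target relaxation parameter $\cRelax$, I would define a new instance $\hgraph'$ on the same vertex/hyperedge sets but with modified capacities $\capacityX{\hedge}' = \max(\capacityX{\hedge}, \cRelax)$. The weight function on the vertices is unchanged.

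Next I would compare optimal values. Any feasible solution $X^* \subseteq \VSet$ for the original instance $\hgraph$ is automatically feasible for $\hgraph'$ (since we only relaxed constraints), so $\Opt(\hgraph') \geq \Opt(\hgraph)$. Moreover, $\hgraph'$ inherits the bounded growth property from $\hgraph$, since the vertex/edge incidence structure is unchanged, so the function $\feq{k}{\cdot}$ is identical; in particular the same $\gamma(\cdot)$ and the same energy bound $\Energy$ apply. The crucial gain is that the minimum capacity in $\hgraph'$ is $\mc(\hgraph') \geq \cRelax$.

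Then I would invoke \thmref{p:hypergraph} on $\hgraph'$ to obtain, in polynomial time, a subset $X \subseteq \VSet$ that is feasible for $\hgraph'$ and has expected weight
\[
w(X) = \Omega\!\pth{\frac{\Opt(\hgraph')}{\gamma(\Energy)^{1/\mc(\hgraph')}}} = \Omega\!\pth{\frac{\Opt(\hgraph)}{\gamma(\Energy)^{1/\cRelax}}},
\]
where I used $\mc(\hgraph') \geq \cRelax$, monotonicity of $\gamma$ (so $\gamma(\Energy)^{1/\mc(\hgraph')} \leq \gamma(\Energy)^{1/\cRelax}$ whenever $\gamma(\Energy) \geq 1$), and $\Opt(\hgraph') \geq \Opt(\hgraph)$. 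Feasibility for $\hgraph'$ means that for every $\hedge \in \HESet$,
\[
\cardin{X \cap \hedge} \leq \capacityX{\hedge}' = \max(\capacityX{\hedge}, \cRelax),
\]
which is exactly the $(\alpha,\beta)$-approximation condition with $\beta = \cRelax$.

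There is no real obstacle here; the only minor points to check are that (i) $\gamma(\Energy) \geq 1$ (which holds without loss of generality, since otherwise the bound is trivial and we may replace $\gamma$ by $\max(\gamma,1)$), and (ii) the reduction preserves the hypothesis that $\cardin{\HESet}$ is polynomial in $\cardin{\VSet}$, which is immediate because we only modified capacities. Combining these gives the claimed $\pth[]{O(\gamma(\Energy)^{1/\cRelax}), \cRelax}$-approximation.
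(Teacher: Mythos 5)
Your proof is correct and is exactly the paper's argument: the paper states, in the sentence immediately preceding the corollary, that one relaxes all capacities below $\cRelax$ up to $\cRelax$ and applies \thmref{p:hypergraph} to the modified instance. You have merely written out the short verification (optimum only grows, bounded-growth structure is unchanged because $\feq{k}{\cdot}$ depends only on the incidence structure, and the new minimum capacity is $\geq \cRelax$), so the two proofs coincide.
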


\subsection{Contention resolution schemes}
Chekuri et al. \cite{cvz-sfmmm-11} considered a broad class of
rounding schemes, which they called \emph{contention resolution
schemes} (CR schemes). Informally, the family of all CR schemes
consists of all rounding strategies based on randomized rounding with
alteration. The precise definition of a CR scheme is the following.

Let $N$ be a finite ground set of size $n$, and let $f: 2^{N}
\rightarrow \mathbb{R}_+$ be a submodular\footnote{A function $f: 2^N
\rightarrow \mathbb{R}$ is submodular if $f(A) + f(B) \geq f(A \cap
B) + f(A \cup B)$ for any two subsets $A, B$ of $N$. Additionally,
$f$ is monotone if $f(A) \leq f(B)$ for all subsets $A, B$ such that
$A \subseteq B$.} set function over $N$. Let $\mathcal{I} \subseteq
2^N$ be a downward-closed\footnote{A family $\mathcal{I}$ of subsets
of $N$ is downward-closed if $B \in \mathcal{I}$ and $A \subseteq B$
then $A \in \mathcal{I}$.} family of subsets of $N$. The problem of
maximizing $f(X)$ subject to the constraint that $X \in \mathcal{I}$
generalizes the hypergraph packing problem: the function $f$
satisfies $f(\VSetA) = \sum_{\vertex \in \VSetA} w_{\vertex}$, and
the family $\mathcal{I}$ is the family of all subsets $\VSetA
\subseteq \VSet$ such that, for each hyperedge $\hedge$, $|\hedge
\cap \VSetA| \leq \capacityX{\hedge}$. Let $P_{\mathcal{I}} \subseteq [0,
1]^n$ be a convex relaxation\footnote{$P_{\mathcal{I}}$ is the closure of 
the set of characteristic vectors of the sets in $\mathcal{I}$ under convex
combinations.} of the constraints imposed by $\mathcal{I}$; the set
of all feasible fractional solutions to the \HyperLP relaxation
described in \secref{relax} is a convex relaxation for the family of
all feasible solutions to the hypergraph packing problem. Let $F$ be
the multilinear extension\footnote{The multilinear extension $F: [0,
1]^{|N|} \rightarrow \mathbb{R}$ of a function $f: 2^N  \rightarrow
\mathbb{R}$ is the function $F(x) = \sum_{S \subseteq N} f(S)
\prod_{i \in S} x_i \prod_{j \notin S} (1 - x_j)$} of $f$. Let $x$ be a
feasible solution to the relaxation $\{\max F(x): x \in
P_{\mathcal{I}}\}$. The definition of the multilinear extension $F$
suggests the following natural rounding strategy: given $x$, we
construct a random set $R(x)$ by picking each $i \in N$ independently
at random with probability $x_i$. The expected value of $f(R(x))$ is
equal to $F(x)$, but it is unlikely that $R(x)$ is in $\mathcal{I}$.
To address this, we want to remove some elements from $R(x)$ in order
to get a subset $I \subseteq R(x)$ such that $I \in \mathcal{I}$. We
want this step to have the property that, for each $i \in N$, the
probability that $i$ is in $I$ is at least $cx_i$, for some parameter
$c > 0$. Chekuri et al. \cite{cvz-sfmmm-11} call such a rounding
strategy a \emph{$c$-balanced CR scheme for $P_{\mathcal{I}}$}. In
certain settings it is convenient to scale the fractional solution;
the rounding strategy described above for the hypergraph packing
problem is one such example. This motivates the following more
general CR scheme.

\begin{definition}[\cite{cvz-sfmmm-11}]
	 A $(b,c)$-balanced CR scheme for $P_{\mathcal{I}}$ is a scheme
	 such that for any $x \in P_{\mathcal{I}}$, the scheme selects an
	 independent subset $I \subseteq R(bx)$ with the following
	 property: $\Pr[i \in I \;|\; i \in R(bx)] \geq c$ for every
	 element $i \in N$.  The scheme is said to be monotone if $\Pr[i
	 \in I \;|\; R(bx) = R_1] \geq \Pr[i \in I \;|\; R(bx) = R_2]$
	 whenever $i \in R_1 \subseteq R_2$. A scheme is said to be
	 strict if $\Pr[i \in I \;|\; i \in R(bx)] = c$ for every $i$.
\end{definition}

\smallskip\noindent
Chekuri et al. showed that, if $I$ is the output of a monotone $(b,
c)$-balanced CR scheme, the expected value of $I$ is at least $c
\Ex{F(bx)}$.

\begin{theorem}
	Let $f: 2^N \rightarrow \mathbb{R}_+$ be a non-negative
	submodular function and let $x$ be a point in $P_{\mathcal{I}}$,
	where $P_{\mathcal{I}}$ is a convex relaxation for $\mathcal{I}
	\subseteq 2^N$. Let $I(x) \in \mathcal{I}$ be the random output
	of a monotone $(b,c)$-balanced CR scheme on $x \in
	P_{\mathcal{I}}$. If $f$ is non-monotone, let us assume in
	addition that the CR scheme is strict. Then $\Ex{f(I(x))} \geq c
	\Ex{F(bx)}$.
\end{theorem}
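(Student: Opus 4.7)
The target inequality $\Ex{f(I(x))} \geq c \Ex{F(bx)}$ simplifies immediately: since $x$ is fixed, $\Ex{F(bx)} = F(bx)$, and by definition of the multilinear extension $F(bx) = \Ex{f(R(bx))}$, the expectation being over the independent Bernoulli sample with marginals $bx_i$. So the goal reduces to $\Ex{f(I(x))} \geq c \Ex{f(R(bx))}$, a comparison between the random values of $f$ on the scheme's output and on its raw input sample.

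My plan is to condition on the first-phase sample $R = R(bx)$ and write $I(x) = \pi(R)$, where $\pi$ is the randomized contraction produced by the CR scheme ($\pi(R) \subseteq R$, monotone in $R$, with $\Prob{i \in \pi(R) \mid i \in R(bx)} \geq c$). After conditioning, it suffices to prove
\[
\Ex{f(\pi(R))} \geq c \cdot \Ex{f(R)},
\]
where both expectations are over the randomness in the first-phase sample and (on the left) also the internal randomness of the scheme.

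For the monotone submodular case, I would exploit concavity of $f$'s multilinear extension $F$ along any coordinate ray from the origin: if we replaced $\pi$ by the oblivious scheme $\pi_c$ that independently keeps each element of $R$ with probability exactly $c$, then $\Ex{f(\pi_c(R)) \mid R} = F_R(c\mathbf{1}) \geq c\, f(R)$, where $F_R$ is the multilinear extension of $f$ restricted to $R$, and the inequality is the standard concavity of $F$ along the diagonal for monotone submodular $f$. The actual CR scheme need not be oblivious, but the monotonicity property of $\pi$ (elements are less likely to survive when the ambient set grows) gives a positive-correlation/FKG-style coupling with $\pi_c$: conditioned on any ``large'' survival pattern, the scheme's output stochastically dominates the oblivious one, so monotonicity of $f$ propagates the $c f(R)$ lower bound.

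The main obstacle is the non-monotone submodular case, where the coordinate-ray concavity argument fails since removing elements can increase $f$. Here the strict scheme hypothesis is essential: because strictness forces $\Prob{i \in I \mid i \in R(bx)} = c$ exactly, one can decouple the contribution of each element and apply a direct exchange argument. I would iterate over the ground set $N$, and at each step replace the scheme's marginal survival probability by the oblivious value $c$ without changing the expected value (using strictness as an equality, not merely an inequality) and without decreasing the expected value of $f$ (using submodularity of $f$ plus monotonicity of the scheme to ensure the swap is favorable). Once the scheme has been made fully oblivious element by element, the right-hand side collapses to $F$ evaluated at $c \cdot bx$, which exceeds $c F(bx)$ by one final application of submodularity, closing the argument.
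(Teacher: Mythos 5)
The paper does not prove this statement; it is quoted verbatim from Chekuri, Vondr\'ak, and Zenklusen \cite{cvz-sfmmm-11} (see the sentence immediately preceding), so there is no in-paper proof to compare against and your proposal must stand on its own. Your opening reduction is correct ($\Ex{F(bx)} = F(bx) = \Ex{f(R(bx))}$), and the high-level plan of comparing the scheme to an oblivious $c$-thinning and finishing via cross-concavity of $F$ is a plausible route. The problem is the central coupling step: you assert that, \emph{conditioned on $R$}, the monotone scheme's output $\pi(R)$ stochastically dominates the oblivious thinning $\pi_c(R)$. Monotonicity of the CR scheme says exactly the opposite on large samples: $\Prob{i \in \pi(R) \sep{R}}$ is \emph{non-increasing} in $R$, so when $R$ is large the conditional survival probability of $i$ can be far below $c$, and $\pi(R)$ is then stochastically \emph{smaller} than $\pi_c(R)$. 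The guarantee $\Prob{i \in I \sep{i \in R(bx)}} \geq c$ is an average over $R$, not a per-$R$ bound, so no pointwise (per-$R$) domination exists; conditioning on $R$ first and only then invoking an FKG-type comparison is a dead end.

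The repair is to postpone conditioning and first expand $f$ along a fixed linear order on $N$: write $f(S) = \sum_{i \in S} \rho_i(S)$ with $\rho_i(S) = f(S \cap [i]) - f(S \cap [i-1])$, where $[i] = \{1,\dots,i\}$. Submodularity with $I \subseteq R$ gives $\rho_i(I) \geq \rho_i(R)$ whenever $i \in I$, so $\Ex{f(I)} \geq \sum_i \Ex{\mathbf{1}[i \in I]\, \rho_i(R)}$. Now, conditioned on $i \in R$, both $\Prob{i \in \pi(R) \sep{R}}$ and $\rho_i(R)$ are non-increasing functions of the remaining coordinates of $R$, which are independent Bernoullis, so the Harris/FKG inequality on this product measure gives $\Ex{\mathbf{1}[i \in I]\,\rho_i(R)} \geq \Prob{i \in I} \cdot \Ex{\rho_i(R) \sep{i \in R}}$. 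When $f$ is monotone, $\rho_i \geq 0$ and $\Prob{i \in I} \geq c\,\Prob{i \in R}$ close the argument; when $f$ is non-monotone, $\rho_i$ can be negative and the inequality $\Prob{i \in I} \geq c\,\Prob{i \in R}$ would flip the bound, so one needs the \emph{equality} $\Prob{i \in I} = c\,\Prob{i \in R}$ --- this is where strictness enters, which your ``element-by-element swap'' gestures at but never confronts. In short, your proposal has the right ingredients (oblivious comparison, concavity of $F$, a role for FKG and strictness) but applies them at the wrong level of conditioning; the marginal-increment decomposition above is the missing step that makes the correlation work in your favor.
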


\smallskip\noindent
The rounding scheme described in \secref{algorithm} is a monotone
$(\scl, 1/4)$-balanced CR scheme on $x \in P_{\mathcal{I}}$, where
$P_{\mathcal{I}}$ is the set of all feasible solutions to the
\HyperLP relaxation. Therefore \thmref{p:hypergraph} extends to the
setting in which the total weight of a set of vertices is a monotone
submodular function instead of a linear function.

\begin{corollary}
	Let $\hgraph = (\VSet, \HESet)$ be a hypergraph. Let $w:
	2^{\VSet} \rightarrow \mathbb{R}_+$ be a weight function on the
	vertices that is non-negative, monotone, and submodular.  Let
	$\capacityX{\cdot}$ be a capacity function on the edges such that
	$\cardin{\HESet}$ is polynomial in $\cardin{\VSet}$ and $\hgraph$
	satisfies the bounded growth property (see \defref{bounded});
	that is, $\feq{k}{t} = 2^{O(k)} t \gamma(t)$. Then we can compute
	in polynomial time a subset $X \subseteq \VSet$ of vertices such
	that no hyperedge $\hedge$ contains more than its capacity
	$\capacityX{\hedge}$ vertices of $X$.  Furthermore, in
	expectation, the total weight of the output set is $\Omega
	\pth{\Opt /\gamma(\Energy)^{1/\mc}}$, where $\Opt$ is the weight
	of the optimal solution, and $\mc$ is the minimum capacity of the
	given instance.

    \thmlab{p:hypergraph-submod}
\end{corollary}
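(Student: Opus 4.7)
The plan is to derive this corollary as a direct application of the contention resolution (CR) framework of Chekuri, Vondrák, and Zenklusen quoted above, using the rounding procedure of \secref{algorithm} essentially unchanged. Let $F$ denote the multilinear extension of $w$, let $P_{\mathcal{I}}$ denote the convex hull of characteristic vectors of feasible vertex sets (which contains the \HyperLP polytope), and let $x$ be a solution to the fractional relaxation $\max\{F(x) : x \in P_{\mathcal{I}}\}$ satisfying $F(x) \geq (1-1/e)\Opt$; such an $x$ can be computed in polynomial time by the continuous greedy algorithm.

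The first step is to verify that the algorithm of \secref{algorithm} is a monotone $(1/\scl, 3/4)$-balanced CR scheme for $P_{\mathcal{I}}$, where $\scl = \sclC\gamma(\Energy)^{1/\mc}$ as before. The sample $\RSet$ is precisely $R(x/\scl)$: each vertex $\vertex$ is included independently with probability $x_\vertex/\scl$. \lemref{prob} establishes that $\Prob{\vertex \in \OSet \sep{\vertex \in \RSet}} \geq 3/4$, which is exactly the $(1/\scl, 3/4)$-balanced condition. Monotonicity is immediate: the alteration phase rejects a vertex $\vertex$ only when some hyperedge containing $\vertex$ is already saturated by previously accepted vertices, and shrinking $\RSet$ (while keeping $\vertex$) can only shrink the set of accepted predecessors, hence can only increase $\vertex$'s chance of being accepted. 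Since the ordering of vertices is fixed once computed and does not depend on $\RSet$, the monotonicity argument goes through unambiguously.

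The second step is to lower bound $F(x/\scl)$ in terms of $F(x)$. Consider $g(t) = F(tx)$ on $[0,1]$. Because $F$ is multilinear, $\partial_\vertex^2 F \equiv 0$, so
\[
g''(t) = \sum_{u \neq \vertex} x_u x_\vertex \,\partial_u \partial_\vertex F(tx),
\]
and submodularity of $w$ implies $\partial_u \partial_\vertex F \leq 0$ for $u \neq \vertex$. Combined with $x_u, x_\vertex \geq 0$, this gives $g'' \leq 0$, so $g$ is concave on $[0,1]$. Since $g(0) = w(\emptyset) \geq 0$, concavity yields $g(t) \geq (1-t)g(0) + t g(1) \geq t F(x)$ for $t \in [0,1]$, so $F(x/\scl) \geq F(x)/\scl$.

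Combining the two steps via the quoted theorem of Chekuri et al.\ (applied in the monotone case, which does not require the CR scheme to be strict) yields
\[
\Ex{w(\OSet)} \geq \tfrac{3}{4}\, F(x/\scl) \geq \tfrac{3}{4\scl}\,F(x) \geq \tfrac{3(1-1/e)}{4\scl}\,\Opt = \Omega\pth{\Opt/\gamma(\Energy)^{1/\mc}},
\]
as claimed. The polynomial-time implementation of the ordering step (\secref{ordering:polytime}) applies verbatim, since it only depends on the hypergraph structure and the fractional solution $x$, not on whether $w$ is linear or submodular. The main conceptual obstacle is verifying the monotonicity of the CR scheme in the formal sense required by the framework; once this is in hand, the rest is standard manipulation of the multilinear extension.
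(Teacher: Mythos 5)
Your overall route --- casting the rounding of \secref{algorithm} as a monotone balanced contention resolution scheme, then bounding $F(x/\scl) \geq F(x)/\scl$ via concavity of the multilinear extension along rays, and feeding in a fractional solution from continuous greedy --- is exactly the paper's, and the concavity step and the final arithmetic are correct (you even fix a likely typo: the paper writes $(\scl, 1/4)$-balanced, whereas $(1/\scl, 3/4)$-balanced is what \lemref{prob} gives under the quoted definition of a CR scheme). The gap is in your monotonicity argument. You claim that shrinking $\RSet$ while keeping $\vertex$ can only shrink the set of accepted predecessors of $\vertex$; this is false for the greedy alteration rule as literally described in \secref{algorithm}. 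Concretely, order the vertices $a,b,c,d$ and take hyperedges $\{a,b\}$ and $\{b,c,d\}$, both of capacity $1$. With $R_2 = \{a,b,d\}$ the greedy rule accepts $a$, rejects $b$, and accepts $d$; with $R_1 = \{b,d\} \subset R_2$ it accepts $b$ and then rejects $d$. So $d \in R_1 \subseteq R_2$ yet $d$ is accepted under $R_2$ and rejected under $R_1$ --- the scheme is not monotone in the sense required by the CR framework.

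The repair --- which is what \lemref{prob} already implicitly analyzes --- is to use the \emph{conservative} alteration rule: reject $\vertex_i$ whenever some hyperedge $\hedge \ni \vertex_i$ satisfies $\cardin{ \brc{ \vertex_j \in \hedge \cap \RSet \sep{ j \leq i }} } > \capacityX{\hedge}$, regardless of which earlier vertices were actually accepted. This still yields a feasible set (if $\hedge$ were over-saturated in the output, the last accepted vertex of $\hedge$ would have failed this test), the rejection predicate depends only on $\RSet \cap \brc{\vertex_1,\dots,\vertex_i}$ and is monotone increasing in that set, so the scheme is genuinely monotone; and the bound $\Prob{\vertex_i \in \OSet \sep{ \vertex_i \in \RSet}} \geq 3/4$ in \lemref{prob} is precisely a bound on this conservative rule, not on the greedy one. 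With that substitution, the remainder of your proof goes through as written. For what it's worth, the paper itself only asserts monotonicity of the scheme without proof, so this is not a place where you diverge from the paper's argument --- it is a place where the argument needs the conservative rule spelled out.
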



\section{Applications}%
\seclab{applications}
\apndlab{applications}

Using our main result (\thmref{p:hypergraph}), we get several
approximation algorithms for the packing problems mentioned in the
introduction. We present some of these results here.

\subsection{Packing regions with low union complexity}

Let $\ObjSet$ be a set of $n$ weighted regions in the plane, and let
the maximum union complexity of $m \leq n$ objects of $\ObjSet$ be
$\Union{m} = m\ts\union{m}$.  We assume that (i) $\Union{n}/n=
\union{n}$ is a non-decreasing function, and (ii) there exists a
constant $c$, such that $\Union{ x r} \leq c \, \Union{r}$, for any
$r$ and $1 \leq x \leq 2$.  We are also given a set of points
$\PntSet$, where each point $\pnt \in \PntSet$ is assigned a positive
integer $\capacityX{\pnt}$ which is the {capacity} of $\pnt$.

We are interested in solving \PackRegions (\probref{pack:regions}) for
$\ObjSet$ and $\PntSet$.  Consider the hypergraph $\hgraph$ obtained
by creating a vertex for each region and a hyperedge for each subset
of regions containing a given point of $\PntSet$. Here, $\feq{k}{t}$
is bounded by the number of faces in the arrangement of $t$ regions of
depth exactly $k+1$.  The number of such faces can be bounded by the
union complexity by a standard application of the Clarkson technique
\cite{c-arscg-88, cs-arscg-89}.

\begin{lemma}
    Consider a set of regions in the plane such that the boundary of
    every pair intersects a constant number of times.  The number of
    faces of depth at most $k+1$ in the arrangement of any subset of
    these regions of size $t$ is $O\pth{k^2 \Union{t/k}}$.
    \lemlab{clarkson}
\end{lemma}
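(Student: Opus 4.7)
The plan is to use the Clarkson-Shor random sampling technique. I would take a random sample $\RSample$ of the $t$ regions, including each region independently with probability $p = c_0/(k+1)$ for a small constant $c_0$. The expected size of $\RSample$ is $pt = \Theta(t/k)$, and the expected boundary complexity of $\bigcup \RSample$ is at most $\Ex{\Union{|\RSample|}}$. Using the Chernoff bound to show $|\RSample| \leq 2pt$ with high probability and the doubling hypothesis (ii) on $\Union{\cdot}$ to handle the tail, we get $\Ex{\Union{|\RSample|}} = O(\Union{pt}) = O(\Union{t/k})$.

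Next, I would charge each low-depth vertex of the arrangement against the union complexity of the sample. Consider any vertex $v$ of the arrangement of the full set of $t$ regions having depth $d(v) \leq k+1$ (i.e., $v$ lies in the interior of $d(v)$ regions, including the two whose boundaries meet at $v$). The vertex $v$ appears on the boundary of $\bigcup \RSample$ exactly when both of its defining regions are sampled and none of the other $d(v) - 2$ regions covering $v$ are sampled. Since boundaries meet a constant number of times per pair, the defining set has size $O(1)$, and this event occurs with probability at least
\[
p^{O(1)} (1-p)^{d(v) - O(1)} \;\geq\; p^{O(1)} (1-p)^{k} \;=\; \Omega(p^2),
\]
once $c_0$ is chosen small enough to make $(1 - c_0/(k+1))^{k}$ bounded below by a positive constant. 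Let $V_{\leq k+1}$ denote the number of such vertices. By linearity of expectation,
\[
\Omega(p^2)\cdot V_{\leq k+1} \;\leq\; \Ex{\,\text{boundary vertices of }\bigcup \RSample\,} \;=\; O\pth{\Union{t/k}},
\]
so $V_{\leq k+1} = O\pth{k^2 \Union{t/k}}$. Since the number of faces in any planar arrangement of regions (with $O(1)$ boundary crossings per pair) is proportional to the number of vertices, and since the depth of a face differs by $O(1)$ from the depth of its incident vertices, the number of faces of depth at most $k+1$ is likewise $O\pth{k^2 \Union{t/k}}$, as claimed.

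The main obstacle I expect is the technical but standard step of bounding $\Ex{\Union{|\RSample|}}$ when $|\RSample|$ is a binomial random variable — this is where hypotheses (i) and (ii) on $\Union{\cdot}$ are used, and one must be careful that the doubling constant $c$ does not accumulate as one sums over the Chernoff tail. The rest is bookkeeping: verifying the $\Omega(p^2)$ lower bound on the survival probability with the right constants, and converting the vertex bound into the stated face bound.
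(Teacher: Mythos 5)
Your proposal is correct and follows the standard Clarkson--Shor argument; the paper does not actually write out a proof of this lemma, it simply cites Clarkson's technique, and your argument supplies exactly the details that citation points to. One minor conceptual slip worth flagging, though it does not affect correctness: the defining set of an arrangement vertex has size exactly two (a vertex is where two boundary curves cross), independent of how many times a pair of boundaries may cross --- the bounded-crossing hypothesis is what makes the union complexity a meaningful finite quantity and lets you convert vertex counts to face counts, not what bounds the size of a vertex's defining set --- so the survival probability is precisely $p^2(1-p)^{c(v)}$ rather than $p^{O(1)}(1-p)^{d(v)-O(1)}$, which is consistent with the $\Omega(p^2)$ you conclude.
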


\noindent
Plugging this bound into \thmref{p:hypergraph} yields the following
result.

\begin{theorem}
    Let $\ObjSet$ be a set of $m$ weighted regions in the plane such
    that the union complexity of any $t$ of them is $\Union{t} =
    t\union{t}$. Let $\PntSet$ be a set of $n$ points in the plane,
    where there is a capacity $\capacityX{\pnt}$ associated with each
    point $\pnt \in \PntSet$.  There is a polynomial time algorithm
    that computes a subset $\OSet \subseteq \ObjSet$ of regions such
    that no point $\pnt \in \PntSet$ is contained in more than
    $\capacityX{\pnt}$ regions of $\OSet$.  Furthermore, in
    expectation, the total weight of the output set is $\Omega
    \pth{\Opt /\union{\Energy}^{1/\mc}}$, where $\Opt$ is the weight
    of the optimal solution, $\Energy$ is the energy of the \LP
    solution and $\mc$ is the minimum capacity of the given instance.
    
    Alternatively, for any integer constant $\cRelax$, one can get a
    $\pth[]{ O\pth[]{ \union{\Energy}^{1/\cRelax}}, \,
       \cRelax}$-approximation to the optimal solution for the given
    instance.
    
    \thmlab{p:regions:low:union}
\end{theorem}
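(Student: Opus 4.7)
\medskip
\noindent
\textbf{Proof proposal.}
The plan is to reduce this to the hypergraph framework of \thmref{p:hypergraph}. I would build the natural hypergraph $\hgraph = (\VSet, \HESet)$ in which the vertex set is $\VSet = \ObjSet$ (one vertex per region) and, for each point $\pnt \in \PntSet$, the hyperedge $\hedge_\pnt \in \HESet$ is the set of regions of $\ObjSet$ that contain $\pnt$, with capacity $\capacityX{\hedge_\pnt} = \capacityX{\pnt}$. The weights transfer directly. Clearly $\cardin{\HESet}\le\cardin{\PntSet}$, which is polynomial in $n = \cardin{\ObjSet}$, so the polynomial-size hypothesis is satisfied, and an optimal solution for \PackHGraph on $\hgraph$ is exactly an optimal solution for the given instance of \PackRegions.

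The key step is verifying the bounded growth property of \defref{bounded}. A hyperedge of $\InducedX{\hgraph}{\VSetA}$ of size exactly $k+1$ corresponds to a point of $\PntSet$ that lies in a face of the arrangement of $\VSetA$ of depth exactly $k+1$, so $\feq{k}{t}$ is bounded by the number of such faces in an arrangement of at most $t$ regions of $\ObjSet$. Applying \lemref{clarkson} gives
\[
\feq{k}{t} \;=\; O\!\pth{ k^2\, \Union{t/k} } \;=\; O\!\pth{ k\, t\, \union{t/k} } \;\leq\; 2^{O(k)}\, t\, \union{t},
\]
where the last inequality uses that $\union{\cdot}$ is non-decreasing. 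Thus we may set $\gamma(t) = \union{t}$, and condition (A) of \defref{bounded} holds. Condition (B) follows immediately from the assumption $\Union{xr}\le c\,\Union{r}$ for $1\le x\le 2$: indeed, $\feq{k}{xt} = O(k^2 \Union{xt/k}) \le O(k^2\, c\, \Union{t/k}) = O(\feq{k}{t})$.

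Now I would invoke \thmref{p:hypergraph} directly. It produces in polynomial time a feasible subset $\OSet \subseteq \ObjSet$ (so no point is covered beyond its capacity) of expected weight
\[
\Omega\!\pth{ \Opt \,/\, \gamma(\Energy)^{1/\mc} } \;=\; \Omega\!\pth{ \Opt \,/\, \union{\Energy}^{1/\mc} },
\]
which is the first claim. For the bi-criteria statement, I would appeal to the corollary immediately preceding the theorem (the simple capacity-relaxation trick): given an integer constant $\cRelax$, replace every capacity $\capacityX{\pnt}<\cRelax$ by $\cRelax$, so that the modified instance has minimum capacity $\mc'\ge\cRelax$. Applying \thmref{p:hypergraph} to the modified instance yields a set whose expected weight is $\Omega\pth{\Opt/\union{\Energy}^{1/\cRelax}}$ and which is feasible for the modified capacities; translated back, every point of $\PntSet$ is covered at most $\max(\cRelax,\capacityX{\pnt})$ times, i.e., this is an $\pth{O(\union{\Energy}^{1/\cRelax}), \cRelax}$-approximation.

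The main obstacle, and really the only non-mechanical piece, is the bound on faces of depth $\le k+1$ used in \lemref{clarkson}; everything else is bookkeeping to recast the geometric problem so that \thmref{p:hypergraph} and its capacity-relaxation corollary apply in black-box fashion.
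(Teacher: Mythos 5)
Your proposal follows the same route as the paper: construct the hypergraph with one vertex per region and one hyperedge per point of $\PntSet$ (the regions containing it), observe that $\feq{k}{t}$ counts faces of depth exactly $k+1$ in an arrangement of $\le t$ regions, bound this by $O(k^2\Union{t/k}) \le 2^{O(k)}\, t\,\union{t}$ via \lemref{clarkson}, apply \thmref{p:hypergraph} with $\gamma = \union{\cdot}$, and invoke the capacity-relaxation corollary for the bi-criteria claim. The only cosmetic caveat is in your check of condition~(B) of \defref{bounded}: the chain $\feq{k}{xt} = O(k^2\Union{xt/k}) \le O(k^2 c\,\Union{t/k}) = O(\feq{k}{t})$ is not literally valid, since $O(k^2\Union{t/k})$ is only an upper bound on $\feq{k}{t}$; the fix is to use the majorant $\widehat F_k(t):=c'\,k^2\,\Union{t/k}$ (which trivially satisfies both (A) and (B)) wherever \lemref{upper-bound} invokes $\feq{k}{\cdot}$, exactly as the paper implicitly does when it says ``plugging this bound into \thmref{p:hypergraph}.''
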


The following results follow from the theorem above.

\begin{corollary}
    \begin{inparaenum}[(A)]
        \item \itemlab{s:triangles}%
        The union complexity of pseudo-disks and fat triangles of
        similar size is linear; that is, $\Union{t} = O(t)$.
        Therefore we get an $O(1)$-approximation for \PackRegions if
        the regions are fat triangles of similar size, disks, or
        pseudo-disks.

        \item The union complexity of fat triangles is $O( n\log^* n)$
        \cite{abes-ibucl-11}. Therefore we get an $O(\log^*
        n)$-approximation for \PackRegions if the regions are
        (arbitrary) fat triangles.
        
        \item Consider a set of regions in the plane such that any
        pair of them intersects a constant number of times (e.g., a
        set of arbitrary triangles). In this case, $\Union{t} =
        O(t^2)$ and $\union{t} = O(t)$. Therefore, for any integer
        constant $\cRelax > 0$, we get an $\pth[]{
           O\pth{\Energy^{1/\cRelax}}, \, \cRelax}$-approximation for
        instances of \PackRegions on such regions.
    \end{inparaenum}
    
    \corlab{pseudo} \corlab{fat:triangles:into:points}
\end{corollary}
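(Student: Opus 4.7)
The plan is to derive each of the three items by plugging the appropriate union complexity bound into \thmref{p:regions:low:union}, which is already stated to give an $\Omega(\Opt/\union{\Energy}^{1/\mc})$-approximation in expectation, and an $\pth[]{O\pth[]{\union{\Energy}^{1/\cRelax}}, \cRelax}$-bi-criteria approximation for any integer constant $\cRelax$. Thus the whole corollary is essentially a bookkeeping exercise: identify $\Union{t}$ (equivalently $\union{t} = \Union{t}/t$) for the family of regions under consideration, verify that the two regularity assumptions ($\union{\cdot}$ non-decreasing and $\Union{xt}\le c\Union{t}$ for $1\le x\le 2$) hold, and substitute.

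For item \itemref{s:triangles}, I would cite the classical results that the union complexity of $n$ pseudo-disks is $O(n)$ (Kedem, Livne, Pach, Sharir) and that the union of $n$ fat triangles of roughly equal size is also $O(n)$ (Matou\v{s}ek, Pach, Sharir, Sifrony, Welzl). In both cases $\union{t} = O(1)$, so the exponent $1/\mc$ is irrelevant and \thmref{p:regions:low:union} yields an $O(1)$-approximation regardless of the minimum capacity $\mc \ge 1$. For the second item, the bound $\Union{n} = O(n \log^* n)$ for arbitrary fat triangles is exactly the result of \cite{abes-ibucl-11} (also \cite{eas-ibuft-11}), so $\union{\Energy} = O(\log^* \Energy) = O(\log^* n)$; since $\mc \ge 1$, plugging in gives an $O(\log^* n)$-approximation.

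For the third item, a set of regions in which each pair of boundaries meets $O(1)$ times has combinatorial complexity $O(t^2)$ on any subset of size $t$ (each pair contributes a constant number of boundary crossings), so $\Union{t} = O(t^2)$ and $\union{t} = O(t)$. Both regularity conditions are trivially satisfied by these polynomial bounds. Since the minimum capacity can be as low as $1$, the straight approximation ratio $\union{\Energy}^{1/\mc}$ could be as bad as $\Energy$; however, the bi-criteria variant of \thmref{p:regions:low:union} allows one to trade feasibility for quality, giving an $\pth[]{O\pth[]{\union{\Energy}^{1/\cRelax}}, \cRelax} = \pth[]{O\pth[]{\Energy^{1/\cRelax}}, \cRelax}$-approximation for any integer $\cRelax > 0$, as claimed.

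The only mildly non-routine step is verifying, in each case, that the function $\union{\cdot}$ satisfies the two technical conditions required by \thmref{p:regions:low:union} (monotonicity and the doubling bound); these hold trivially for the constant, $\log^*$, and linear bounds above. All the heavy machinery has been done in \thmref{p:hypergraph} and in the derivation of \thmref{p:regions:low:union} via the Clarkson-style bound of \lemref{clarkson}, so nothing new needs to be proved for the corollary.
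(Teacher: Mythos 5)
Your proposal is correct and matches the paper's intent exactly: the paper states the corollary immediately after \thmref{p:regions:low:union} with only the remark that it ``follows from the theorem above,'' and the intended argument is precisely the bookkeeping you describe — plug in $\union{t}=O(1)$ for pseudo-disks and similar-size fat triangles, $\union{t}=O(\log^* t)$ for arbitrary fat triangles, and $\union{t}=O(t)$ for $O(1)$-intersecting regions, checking the (trivially satisfied) monotonicity and doubling conditions on $\union{\cdot}$, and invoking the bi-criteria form for the third item since $\mc$ can be $1$.
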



\subsection{Packing halfspaces, rays and disks}
\seclab{halfRayDisk}

\begin{problem}
    \begin{inparaenum}[(A)]
        \item {\rm{\PackHalfspaces}}: Given a weighted set of
        halfspaces $\HalfspaceSet$ and a set of points $\PntSet$ with
        capacities in $\Re^3$, find a maximum weight subset $\OSet$ of
        $\HalfspaceSet$ so that, for each point $\pnt$, the number of
        halfspaces of $\OSet$ that contains $\pnt$ is at most
        $\capacityX{\pnt}$.
        
        \item {\rm{\PackRaysInPlanes}}: Given a weighted set of
        vertical rays $\RaySet$ and a set of planes $\PlaneSet$ with
        capacities in $\Re^3$, find a maximum weight subset $\OSet$ of
        $\RaySet$ so that, for each plane $\plane$, the number of rays
        of $\OSet$ that intersect $\plane$ is at most
        $\capacityX{\plane}$.
        
        \item {\rm{\PackPointsInDisks}}: Given a set $\ObjSet$ of
        disks with capacities and a weighted set $\PntSet$ of points,
        find a maximum weight subset $\OSet$ of the points so that
        each disk $r \in \ObjSet$ contains at most $\capacityX{r}$
        points of $\OSet$.
    \end{inparaenum}
    
    \problab{prob:planes-points}
\end{problem}

Since the union complexity of halfspaces in three dimensions is
linear, we get the following from \thmref{p:hypergraph} (and the 3d
analogue of \lemref{clarkson}).

\begin{corollary}
    One can compute, in polynomial time, a constant factor
    approximation to the optimal solution of the \PackHalfspaces
    problem.
    
    \corlab{pack:halfspaces}
\end{corollary}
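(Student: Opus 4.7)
The plan is to instantiate the hypergraph framework of \thmref{p:hypergraph} on the natural hypergraph associated with \PackHalfspaces and verify that it has the bounded growth property with a constant-valued $\gamma$.

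First, I would build the hypergraph $\hgraph=(\VSet,\HESet)$ by taking each halfspace $\plane \in \HalfspaceSet$ as a vertex (inheriting its weight) and, for each input point $\pnt \in \PntSet$, creating the hyperedge $\hedge_\pnt = \brc{\plane \in \HalfspaceSet \sep{\pnt \in \plane}}$ of capacity $\capacityX{\pnt}$. A set of halfspaces satisfies all point-capacity constraints of \PackHalfspaces exactly when it satisfies all hyperedge-capacity constraints of the associated \PackHGraph instance, so the two problems are equivalent.

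Next, I would bound $\feq{k}{t}$, the maximum, over induced subhypergraphs on at most $t$ vertices, of the number of hyperedges of size exactly $k+1$. Fix a subset $\VSetA \subseteq \VSet$ of $t$ halfspaces. A hyperedge of size $k+1$ in $\InducedX{\hgraph}{\VSetA}$ corresponds to a point of $\PntSet$ that lies in exactly $k+1$ halfspaces of $\VSetA$, so it lies in some cell of depth $k+1$ in the arrangement $\mathcal{A}(\VSetA)$, and distinct hyperedges come from distinct cells. Thus $\feq{k}{t}$ is at most the number of depth-$(k+1)$ cells in an arrangement of $t$ halfspaces in $\Re^3$. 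The three-dimensional analogue of \lemref{clarkson} (i.e., the Clarkson--Shor technique applied to the linear union complexity of halfspaces in $\Re^3$) gives this quantity as $O\pth{k^2 \cdot \Union{t/k}} = O\pth{k^2 \cdot (t/k)} = O(kt)$. Hence $\feq{k}{t} \leq 2^{O(k)} \cdot t \cdot \gamma(t)$ with $\gamma(t)$ equal to an absolute constant; both conditions of \defref{bounded} are trivially satisfied (constant functions are non-decreasing and satisfy the doubling bound).

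Finally, I would solve the \HyperLP relaxation and invoke \thmref{p:hypergraph}. Since $\gamma(\Energy) = O(1)$ and $\mc \geq 1$, the approximation factor $\gamma(\Energy)^{1/\mc}$ is $O(1)$, yielding a polynomial-time constant-factor approximation. The only genuine content here is verifying the $O(kt)$ bound on depth-$(k+1)$ cells, and that follows immediately from linear union complexity of halfspaces in $\Re^3$ plus the Clarkson--Shor machinery already used in \lemref{clarkson}; everything else is bookkeeping inside the framework.
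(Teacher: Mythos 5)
Your proposal is correct and follows the paper's approach exactly: build the hypergraph with halfspaces as vertices, bound the number of depth-$(k+1)$ cells via the $3$d analogue of \lemref{clarkson} using linear union complexity of halfspaces in $\Re^3$, and invoke \thmref{p:hypergraph} with $\gamma = O(1)$. The paper states this in one sentence; you have simply unpacked the same argument.
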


Standard point/plane duality implies that the same result holds for
the dual problem.  Namely, a point $(a,b,c)$ gets mapped to the plane
$z=ax+by-c$ and a plane $z=ax+by+c$ gets mapped to the point
$(a,b,-c)$.  Also, a point lies below a given plane if and only if the
dual point of the plane lies below the dual plane of the point. As
such, the dual of an instance of \PackHalfspaces is an instance of
\PackRaysInPlanes (and vice versa).

Thus, \corref{pack:halfspaces} implies the following.

\begin{corollary}
    One can compute, in polynomial time, a constant factor
    approximation to the optimal solution of the \PackRaysInPlanes
    problem.
\end{corollary}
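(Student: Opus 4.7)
The plan is to derive this as a direct consequence of \corref{pack:halfspaces} via a polynomial-time reduction built on the point-plane duality recalled just above the statement. Given an instance $(\RaySet, \PlaneSet)$ of \PackRaysInPlanes, with weights on rays and capacities on planes, the first step is to dualize. I would split $\RaySet$ into upward and downward vertical rays; the two cases are symmetric, so consider an upward ray with apex $\pnt = (a,b,c)$. Such a ray intersects a plane $\plane : z = \alpha x + \beta y + \gamma$ if and only if $c \leq \alpha a + \beta b + \gamma$, that is, iff $\pnt$ lies on or below $\plane$. Under the duality that sends a point $(a,b,c)$ to the plane $z = ax + by - c$ and a plane $z = \alpha x + \beta y + \gamma$ to the point $(\alpha, \beta, -\gamma)$, this is exactly the statement that the dual point of $\plane$ lies in the closed upper halfspace bounded by the dual plane of $\pnt$.

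Next, I would transfer the combinatorial data. Each upward ray becomes an upper halfspace carrying the weight of the ray; each plane becomes a point carrying the capacity of the plane; downward rays are handled identically but produce lower halfspaces. Pooling these halfspaces into a single set $\HalfspaceSet$ and the dual points into a single point set, the feasibility constraint ``at most $\capacityX{\plane}$ rays of the solution intersect $\plane$'' translates, under the duality, into the \PackHalfspaces constraint ``the dual point of $\plane$ is covered by at most $\capacityX{\plane}$ of the chosen halfspaces.'' The reduction is clearly polynomial time and preserves weights bijectively.

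Finally, I would invoke \corref{pack:halfspaces} on the constructed instance to obtain a constant-factor-approximate subset of $\HalfspaceSet$, and then pull this subset back through the correspondence to a subset of $\RaySet$. The weight and approximation ratio carry over exactly, and feasibility in the ray-plane instance follows from feasibility in the halfspace-point instance via the duality. The nearest this comes to an obstacle is the boundary case when a ray's apex lies exactly on a plane, but a standard general-position perturbation of the input, or consistently treating the dual halfspaces as closed on one side and open on the other, handles this without affecting the approximation guarantee.
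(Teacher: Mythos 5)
Your proposal follows exactly the paper's route: dualize rays to halfspaces and planes to points via the stated point--plane duality, transfer weights and capacities accordingly, invoke \corref{pack:halfspaces}, and pull the solution back. One small sign slip: an upward vertical ray with apex $(a,b,c)$ meets the plane $z=\alpha x+\beta y+\gamma$ iff $c\le \alpha a+\beta b+\gamma$, and under the given duality this says the dual point $(\alpha,\beta,-\gamma)$ lies in the closed \emph{lower} (not upper) halfspace bounded by the dual plane $z=ax+by-c$ --- a harmless mislabeling since the reduction and the induced incidence counts are unchanged.
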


Finally, observe that an instance of \PackPointsInDisks can be lifted
into an instance of \PackRaysInPlanes, by the standard lifting $f(x,y)
= (x,y,x^2+y^2)$, which maps points and disks in the plane to
halfspaces and points in three dimensions \cite{bcko-cgaa-08}.

\begin{corollary}
    One can compute, in polynomial time, a constant factor
    approximation to the optimal solution to the \PackPointsInDisks
    problem.
\end{corollary}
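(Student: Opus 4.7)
The plan is to carry out the standard paraboloid lifting to reduce \PackPointsInDisks directly to the previously-solved \PackRaysInPlanes problem, and then invoke the prior corollary as a black box.

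First, I would apply the lifting map $f(x,y) = (x,y,\,x^2+y^2)$ to each point $\pnt \in \PntSet$, placing it on the paraboloid in $\Re^3$, and carry along its weight. Expanding the disk inequality $(x-a)^2 + (y-b)^2 \leq r^2$ shows that $\pnt$ lies in a disk $\disk \in \ObjSet$ with center $(a,b)$ and radius $r$ if and only if $f(\pnt)$ lies (weakly) below the plane $\plane_{\disk}: z = 2ax + 2by - a^2 - b^2 + r^2$. So I map each disk $\disk$ to this plane $\plane_{\disk}$, and assign $\plane_{\disk}$ the capacity $\capacityX{\disk}$.

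Next, I would replace each lifted point $f(\pnt)$ by the vertical ray $\Ray_\pnt$ emanating upward from $f(\pnt)$, inheriting the weight $\weight{\pnt}$. Since $\Ray_\pnt$ intersects $\plane_{\disk}$ exactly when $f(\pnt)$ is below $\plane_\disk$, i.e.\ exactly when $\pnt \in \disk$, a subset $\OSet \subseteq \PntSet$ satisfies all disk-capacity constraints if and only if the corresponding set of rays satisfies all plane-capacity constraints in the constructed instance of \PackRaysInPlanes, and the two objectives coincide. Running the polynomial-time constant-factor approximation for \PackRaysInPlanes established in the preceding corollary on this constructed instance, and reporting the corresponding points of $\PntSet$, therefore yields the desired constant-factor approximation.

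There is no real obstacle here: the argument is a routine use of the paraboloid lifting together with point/plane duality, both of which are standard. The only thing to verify carefully is the orientation of the ray versus the sign of the halfspace inequality, which comes out correctly as above.
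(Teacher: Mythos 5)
Your proposal is correct and follows essentially the same route as the paper: the standard paraboloid lifting $f(x,y)=(x,y,x^2+y^2)$ sends weighted points to weighted points on the paraboloid and capacity-carrying disks to capacity-carrying lower halfspaces, and replacing each lifted point by an upward vertical ray makes containment-in-disk coincide with ray-meets-plane, yielding exactly an instance of \PackRaysInPlanes to which the preceding corollary applies as a black box. The paper states this reduction in one terse sentence; you have simply spelled out the algebra (which is right, including the halfspace orientation).
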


\subsection{Axis Parallel Rectangles/Boxes}

\subsubsection{Packing rectangles (2d)}

\begin{problem}{\rm{(\PackRectsInPnts.)}}
    Given a weighted set $\RectSet$ of axis-parallel rectangles in the
    plane, and a point set $\PntSet$ with capacities, find a maximum
    weight subset $\OSet \subseteq \RectSet$, such that, for any $\pnt
    \in \PntSet$, the number of rectangles of $\OSet$ containing
    $\pnt$ is at most $\capacityX{\pnt}$.
\end{problem}

Note that the union complexity of a set of rectangles can be
quadratic.  Hence we cannot simply use \thmref{p:regions:low:union} to
get a meaningful approximation.  However, by using the standard
approach for approximating the independent set of rectangles, one can
get a reasonable approximation as the following lemma testifies.

\begin{lemma}
    Given an instance $(\RectSet, \PntSet)$ of \PackRectsInPnts with
    $m$ rectangles, one can compute, in polynomial time, a subset
    $\OSet \subseteq \RectSet$ of total weight $\Omega( \Opt /\log m)$
    such that no capacity constraint of $\PntSet$ is violated, where
    $\Opt$ is the weight of the optimal solution.
    
    \lemlab{p:rect:in:pnts}
\end{lemma}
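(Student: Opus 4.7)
The plan is a standard divide-and-conquer reduction to the low union complexity case, mirroring the known $O(\log m)$ approximation for the independent set of axis-parallel rectangles, combined with our approximation for regions of linear union complexity.

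First, pick a vertical line $\ell$ whose $x$-coordinate equals the median of the left-edge $x$-coordinates of the rectangles in $\RectSet$, and partition $\RectSet$ into $\RectSet_L$ (rectangles entirely left of $\ell$), $\RectSet_R$ (entirely right of $\ell$), and $\RectSet_M$ (rectangles crossing $\ell$). By the median choice, $\cardin{\RectSet_L}, \cardin{\RectSet_R} \leq m/2$. The key geometric fact is that any set of axis-parallel rectangles that all cross a common vertical line has linear union complexity: the upper boundary of the union is the upper envelope of the top edges, the lower boundary is the lower envelope of the bottom edges, and each has complexity $O(\cardin{\RectSet_M})$ since horizontal segments have linear envelope complexity. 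Therefore $\union{t} = O(1)$ for the family $\RectSet_M$, and \thmref{p:regions:low:union} (as in \corref{pseudo}\itemref{s:triangles}) yields in polynomial time a feasible packing $A_M \subseteq \RectSet_M$ for $\PntSet$ with $w(A_M) \geq \Opt(\RectSet_M, \PntSet)/c$ for some absolute constant $c$.

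Next, recursively compute feasible packings $A_L$ for $(\RectSet_L, \PntSet)$ and $A_R$ for $(\RectSet_R, \PntSet)$, and output whichever of $A_L \cup A_R$ or $A_M$ has the larger total weight. The union $A_L \cup A_R$ is a feasible packing for the original instance because any rectangle in $\RectSet_L$ contains only points with $x$-coordinate less than $\ell$, any rectangle in $\RectSet_R$ contains only points with $x$-coordinate greater than $\ell$, and thus the two sets impose capacity constraints on disjoint subsets of $\PntSet$.

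For the analysis, decompose an optimal solution $\OSet^{*}$ into $\OSet^{*} \cap \RectSet_L$, $\OSet^{*} \cap \RectSet_R$, and $\OSet^{*} \cap \RectSet_M$; each piece is feasible for its respective sub-instance, giving
\[
\Opt \;\leq\; \Opt(\RectSet_L, \PntSet) + \Opt(\RectSet_R, \PntSet) + \Opt(\RectSet_M, \PntSet).
\]
Letting $T(m)$ denote the worst-case approximation factor of the algorithm on $m$ rectangles, the inductive hypothesis yields $w(A_L) + w(A_R) \geq (\Opt(\RectSet_L, \PntSet) + \Opt(\RectSet_R, \PntSet))/T(m/2)$, while $w(A_M) \geq \Opt(\RectSet_M, \PntSet)/c$. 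Taking the better of the two candidate outputs gives weight at least $\Opt/(2\max(T(m/2), c))$, so $T(m) \leq 2\max(T(m/2), c)$, which solves to $T(m) = O(\log m)$.

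No substantial obstacle is anticipated; the only nontrivial ingredient is the linear union complexity of rectangles crossing a common vertical line (justified by the envelope argument above), and the decoupling of capacity constraints across $\ell$ for the non-crossing rectangles, which makes $A_L \cup A_R$ automatically feasible.
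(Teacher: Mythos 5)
Your algorithm is essentially the paper's: split the rectangles recursively at median vertical lines building an interval tree, exploit the (true) fact that axis-parallel rectangles crossing a common vertical line have linear union complexity to get a constant-factor approximation for each node via \thmref{p:regions:low:union}, and combine across tree levels. The paper phrases the combination as ``enumerate the $O(\log m)$ levels, take the union of the node solutions at each level, return the best level,'' while you phrase it as the recursive $\max\bigl(A_L \cup A_R,\; A_M\bigr)$; these output the same thing (the recursive choice dominates any fixed level).

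There is, however, a genuine error in your analysis. The recurrence $T(m) \leq 2\max\bigl(T(m/2), c\bigr)$ does \emph{not} solve to $O(\log m)$. Unfolding with $T(1) = O(1)$ and $c \geq 1$ gives $T(2) \leq 2c$, $T(4) \leq 4c$, \ldots, $T(2^k) \leq 2^k c$, i.e., $T(m) = \Theta(m)$: the factor of $2$ compounds at each of the $\log m$ levels of recursion. You cannot afford a $50/50$ case split at every level. The correct bookkeeping either (i) argues as the paper does---restrict the optimal solution $\OSet^*$ to each level of the tree, note that some level carries weight at least $\Opt/\log m$, observe that rectangles at distinct nodes of that level are disjoint so the per-level union is feasible, and hence the best level gives weight $\Omega(\Opt/\log m)$---or (ii) carries an induction on the depth $d$ of the subtree with the sharper split: either $\Opt(\RectSet_M, \PntSet) \geq \Opt/d$, in which case $w(A_M) \geq \Opt/(cd)$, or $\Opt(\RectSet_L, \PntSet) + \Opt(\RectSet_R, \PntSet) \geq \bigl(1 - \tfrac{1}{d}\bigr)\Opt$, in which case the inductive hypothesis with factor $c(d-1)$ on the children yields $w(A_L) + w(A_R) \geq \Opt/(cd)$. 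Both give $T(m) \leq c d = O(\log m)$. The point is that the threshold must shrink with the depth; a constant-fraction split per level loses a constant factor per level.

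One further minor point: the sentence ``the upper boundary of the union is the upper envelope of the top edges, the lower boundary is the lower envelope of the bottom edges'' is not a correct description of the union boundary. If two rectangles crossing $\ell$ have disjoint $y$-ranges, a top edge strictly below the upper envelope of top edges is still on the union boundary. The clean argument is that on each side of $\ell$ the union at height $y$ is a single interval $[x_\ell, f(y)]$ where $f(y)$ is the maximum $x$-extent among rectangles whose $y$-range contains $y$; thus $f$ is the upper envelope of $m$ horizontal segments and has $O(m)$ breakpoints, and symmetrically on the other side.
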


\begin{proof}
    It is straightforward to verify that a set of rectangles that
    intersect a common line have linear union complexity. Therefore it
    follows from \thmref{p:regions:low:union} that we can get a
    constant factor approximation for such instances. Given an
    arbitrary set of axis-parallel rectangles, we can reduce it to the
    case in which all rectangles intersect a common line as follows.
    
    We construct an interval tree on $\RectSet$. Let $\rect \in
    \RectSet$ be the median rectangle of $\RectSet$ when sorted by
    left edges.  Let $\Line$ denote the vertical line which passes
    through the left edge of $\rect$, and let $\RectSet_\Line$ denote
    the set of rectangles it intersects.  We associate
    $\RectSet_\Line$ with the root of our tree, and then recursively
    build left and right subtrees for the rectangles in
    $\RectSet\setminus \RectSet_\Line$ that lie to the left or right
    of $\Line$, respectively.  The recursion bottoms out once every
    rectangle has been stabbed by a line.  Clearly the depth of the
    tree is $O(\log m)$, since each time we choose the median line and
    only recursively continue on those rectangles that do not
    intersect it. Therefore there exists a level of the range tree
    that has a solution of weight $\Omega(\Opt / \log{m})$. The
    algorithm now considers each level of the tree separately, and for
    each node at the given level, it constructs an approximate
    solution for the rectangles associated with the node using the
    constant factor approximation algorithm guaranteed by
    \thmref{p:regions:low:union}. Next, the algorithm considers the
    union of all these solutions to form the solution for this level.
    Since two rectangles associated with two different nodes at the
    same depth in the range tree do not intersect, the resulting set
    is a valid solution.
    
    The algorithm returns the best solution found among all the
    levels. Clearly, its weight is $\Omega(\Opt / \log{m})$.
\end{proof}

\subsubsection{Packing axis-parallel boxes (3d)}

The union complexity of axis-parallel boxes in $\Re^3$ that contain a
common point is also linear, and therefore a similar approach as above
will enable us to solve the following problem.

\begin{problem}{\rm{(\PackBoxesInPnts.)}}
    Given a weighted set $\BoxSet$ of axis-parallel boxes in $\Re^3$,
    and a point set $\PntSet$ with capacities, find a maximum weight
    subset $\OSet \subseteq \BoxSet$, such that, for any $\pnt \in
    \PntSet$, the number of boxes of $\OSet$ containing $\pnt$ is at
    most $\capacityX{\pnt}$.
    
    \problab{pack:boxes}
\end{problem}

\begin{lemma}
    Given an instance $(\BoxSet, \PntSet)$ of \PackBoxesInPnts with
    $m$ boxes, one can compute, in polynomial time, a subset $\OSet
    \subseteq \RectSet$ of total weight $\Omega( \Opt /\log^3 m)$ such
    that no capacity constraint of $\PntSet$ is violated, where $\Opt$
    is the weight of the optimal solution.
    
    \lemlab{p:boxes:in:pnts}
\end{lemma}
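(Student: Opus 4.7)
The plan is to iterate the interval tree construction from the proof of \lemref{p:rect:in:pnts}, once per coordinate axis, until every surviving subinstance consists of boxes sharing a common point. First build an interval tree on the $x$-projections of $\BoxSet$ exactly as in 2d: at each internal node, choose the median vertical plane orthogonal to the $x$-axis, associate with that node the boxes this plane stabs, and recurse on the boxes lying strictly to its left and strictly to its right. The tree has depth $O(\log m)$, and any two boxes associated with distinct nodes at the same level have disjoint $x$-projections, hence are disjoint in $\Re^3$. In particular, no point of $\PntSet$ lies in two such boxes, so feasible solutions computed independently at the nodes of a common level can be concatenated into a single feasible solution. By pigeonhole, trying all levels and returning the best loses only an $O(\log m)$ factor, and reduces the problem to subinstances in which every box is stabbed by a common vertical plane $x = x_0$.

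On each such subinstance, repeat the construction using a second interval tree whose splitting planes are orthogonal to the $y$-axis. Again, two boxes associated with distinct same-level nodes have disjoint $y$-projections and so are disjoint in $\Re^3$, and the ``stabbed by $x = x_0$'' property is preserved in each subproblem since it depends only on $x$-coordinates. This loses a second $O(\log m)$ factor and reduces the problem to subinstances in which every box is stabbed by a common vertical line $\{(x_0, y_0, z) : z \in \Re\}$. A third application of the same idea, now with median planes orthogonal to the $z$-axis, loses a third $O(\log m)$ factor and leaves subinstances in which every box contains a common point $(x_0, y_0, z_0)$.

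Since the union complexity of axis-parallel boxes in $\Re^3$ that share a common point is linear, \thmref{p:regions:low:union} applied with $\union{t} = O(1)$ yields a constant-factor approximation for each such leaf subinstance in polynomial time. Composing the three logarithmic losses with this $O(1)$ base case gives an $O(\log^3 m)$-approximation overall, and the whole pipeline is polynomial because each of the three interval trees has polynomial size and is constructed in polynomial time.

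The main obstacle, compared with the 2d case, is the bookkeeping needed to verify that the three coordinate-based decompositions compose cleanly: at every recursive step, the subproblems assigned to distinct same-level nodes must be genuinely independent (so that their solutions can be concatenated without creating a hidden capacity violation), and the ``shared slab/plane/line'' invariants inherited from the outer recursions must persist into the inner recursions. Both properties fall out of the fact that each tree operates on a single coordinate's projections, which are preserved by the stabbing conditions introduced at the earlier levels; no deeper geometric argument is needed beyond the linear union complexity of boxes through a common point.
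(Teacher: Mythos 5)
Your proof is correct and takes essentially the same approach as the paper: a three-layer interval tree over the $x$-, $y$-, and $z$-projections, an $O(\log m)$ pigeonhole loss per layer, and a constant-factor base case for boxes that contain a common point, whose union complexity is linear. The only cosmetic differences are that you phrase the decomposition top-down while the paper assembles the solution bottom-up through a nested data structure, and that you cite \thmref{p:regions:low:union} (stated for planar regions) where the cleaner reference, used by the paper, is \thmref{p:hypergraph} together with a 3d analogue of \lemref{clarkson}.
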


\begin{proof}
    We build a multi-layer interval tree on the boxes. On the top
    layer, we build a balanced tree on the $x$-axis projection of the
    boxes, where a node $v_{x'}$ stores all boxes intersecting the
    plane $x=x'$. Next, we build for each such node a secondary
    interval tree on the $y$-axis projections, and for each node on
    this secondary data-structure we build a third layer
    data-structure on the $z$-axis projections. All the boxes are
    stored in the nodes of the third layer data-structure.
    
    First, observe that the boxes stored in a node on the third layer,
    $v_{x'y'z'}$, all contain the point $(x', y', z')$. Now, the union
    complexity of axis parallel boxes all sharing a common point is
    linear. As such, we can apply \thmref{p:hypergraph} to compute a
    packing that does not violate the capacities and is a constant
    factor approximation to the optimal (on this restricted set of
    boxes).  Now, for each third layer tree, we find the level that
    contains the best possible combined solution (taking the union of
    the solutions at a level is valid since there is no point in
    common to two boxes that are stored in two different nodes at the
    same level). This assigns values for each node in the secondary
    tree. Again, we choose for each secondary tree the level with the
    maximum total weight solution. We assign this value to the
    corresponding node in the first layer data-structure. Again, we
    choose the level with the highest possible value. This corresponds
    to a valid solution that complies with the capacity constraints.
    
    As for the quality of approximation, observe that for a tree at a
    given layer, at least a logarithmic factor of the remaining weight
    of the optimal solution is contained in some level of the tree.
    As such, each time we go down one layer in this data-structure we
    lose at most a logarithmic factor of the optimal solution, and
    hence the quality of approximation of this algorithm is
    $\Omega(\log^3 m)$.
\end{proof}

\begin{remark}
    \begin{inparaenum}[(A)]
        \item It is natural to ask if this result can be extended to
        higher dimensions. However, it is easy to see that the union
        complexity of $n$ axis-parallel boxes in four dimensions that
        all contain a common point can be quadratic. As such, this
        approach would fail miserably. We leave the problem of getting
        a better approximation for this case as an open problem for
        further research.
        
        \item The continuous version seems to be considerably
        easier. For the weighted case (with unit capacities; that is,
        the independent set variant) an $O(\log^{d-1} m / \log \log
        m)$ approximation is known \cite{ch-aamis-09} by solving the
        two dimensional case, and then using the above interval tree
        technique to apply it for higher dimensions. For the
        unweighted continuous case, an $O(\log \log m \log^{d-2} m )$
        approximation is known \cite{cc-misr-09}.
        
        The discrete version is different than the continuous version
        because, for example, if considering boxes in $\Re^3$ that all
        intersect the $xy$-plane, the induced two dimensional instance
        fails to encode the capacity constraints, as they rises from
        points in three dimensions that do not lie on this plane
        (while in the continuous case, it is enough to solve the
        induced problem in this plane).
    \end{inparaenum}
\end{remark}

\subsubsection{Packing points into rectangles}

\begin{problem}{\rm{(\PackPntsInRects.)}}
    Given a weighted set $\PntSet$ of points and a set $\RectSet$ of
    axis-parallel rectangles with capacities in the plane, find a
    maximum weight subset $\OSet \subseteq \PntSet$, such that, for
    any $\rect \in \RectSet$, the number of points of $\OSet$
    contained in $\rect$ is at most $\capacityX{\rect}$.
    \problab{pnts:in:rects}
\end{problem}

We first observe that the hypergraph that arises from a instance
$\hgraph = (\PntSet,\RectSet)$ of \PackPntsInRects, might not have the
bounded growth property for any reasonable growth function.  To see
this consider the following example.

\parpic[r]{%
   \begin{minipage}{0.4\linewidth}%
       \includegraphics[width=.9\linewidth]{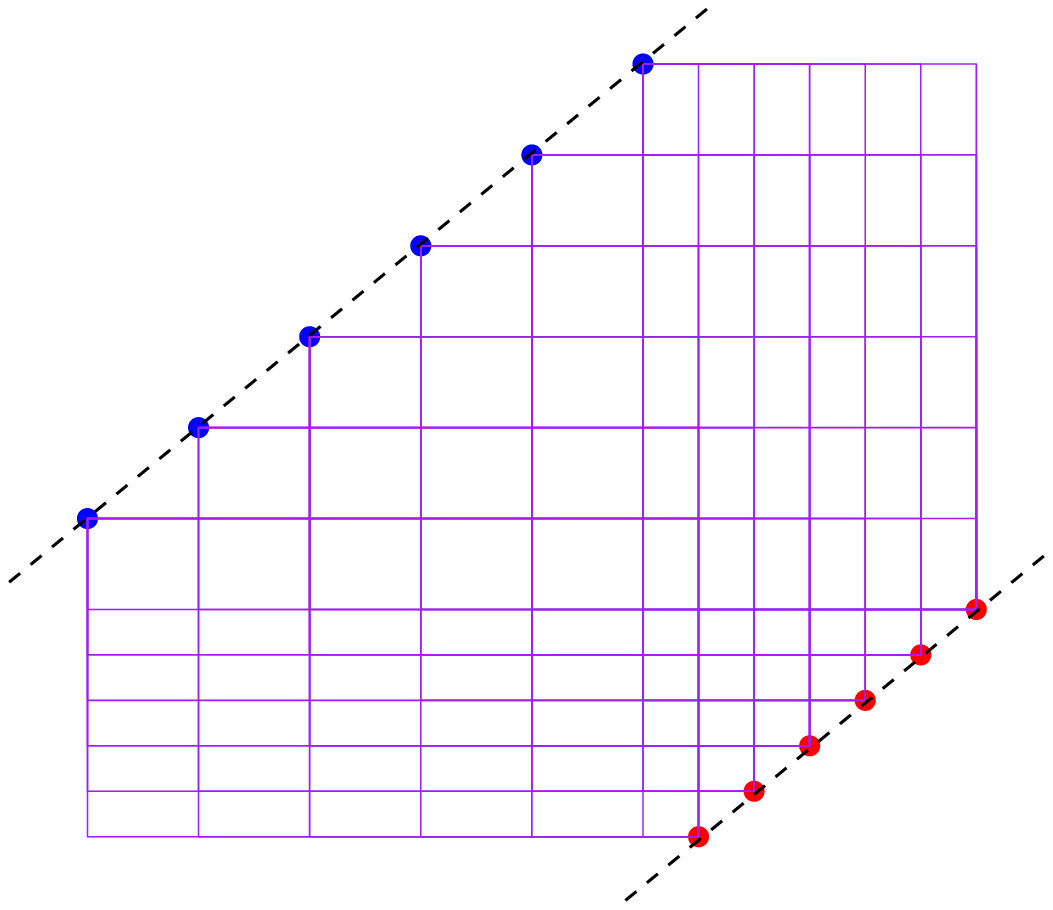}%
   \end{minipage}}

Consider two parallel lines in the plane with positive slope.  Place
$n/2$ points on each line such that all the points on the top line lie
above and to the left of all the points on the bottom line.  Let the
set of rectangles for this instance of \PackPntsInRects be all the
rectangles which have a point on the top line as their upper left
corner and a point on the bottom line as their lower right corner.  In
this case any subset of $O(t)$ points from the top line and $O(t)$
points from the bottom line induce a set of $O(t^2)$ hyperedges, each
of size 2.  Therefore, $\feq{1}{t} = \Omega\pth{t^2}$, and hence
\thmref{p:hypergraph} only gives an $O(\Energy)$ approximation.

Since we cannot hope to apply our main result to the case of
\PackPntsInRects, we will instead seek a bi-criterion approximation.
Our algorithm here is inspired by the work of Ezra \etal
\cite{aes-ssena-10} on $\eps$-nets for rectangles.  Before tackling
this problem, we will first consider an easier variant, which will be
useful later in obtaining a bi-criterion approximation.  In the
following, we call a set of rectangles such that all their (say)
bottom edges lies on a common line a \emphi{skyline}.

\begin{problem}{\rm{(\PackPntsInSkyline.)}}
    Given a weighted set $\PntSet$ of points and a set $\RectSet$ of
    skyline rectangles with capacities in the plane, find a maximum
    weight subset $\OSet \subseteq \PntSet$, such that, for any $\rect
    \in \RectSet$, the number of points of $\OSet$ contained in
    $\rect$ is at most $\capacityX{\rect}$.
\end{problem}

\begin{lemma}
    Let $\PntSet$ be a set of $n$ points in the plane all placed above
    the $x$-axis. Let $\feq{k}{n}$ be the maximum number of different
    subsets of $\PntSet$ of size $k$ that are realized by intersecting
    $\PntSet$ with a rectangle whose bottom edge lies on the
    $x$-axis. We have that $\feq{k}{n} = O(n k^2)$.
    
    \lemlab{few:light:rectangles}
\end{lemma}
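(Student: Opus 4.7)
The plan is to establish the stronger bound $\feq{k}{n} = O(nk)$, which trivially gives the stated $O(nk^2)$. The strategy is to charge each realized $k$-subset to its topmost point, and then show that only $O(k)$ subsets can be charged to any given point. Assume the points of $\PntSet$ are in general position, and for a realized subset $S$ of size $k$ let $p_T \in S$ denote the point of $S$ with maximum $y$-coordinate. If $R = [a,b] \times [0,h]$ is any rectangle realizing $S$, then $h$ is at least the $y$-coordinate of $p_T$, and hence every point of $\PntSet$ whose $y$-coordinate is at most that of $p_T$ and whose $x$-coordinate lies in $[a,b]$ is forced to lie in $R$, and therefore in $S$.

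This yields the key contiguity property: if we list the points of $\PntSet$ whose $y$-coordinate is at most that of $p_T$ in increasing order of $x$-coordinate as $q_1,\ldots,q_m$, then $S$ is an interval of $k$ consecutive $q_j$'s, and this interval must contain the point $q_t = p_T$. The number of such intervals is at most $k$, since the starting index may only be one of $t-k+1, t-k+2, \ldots, t$ (clipped to the range $\{1,\ldots,m-k+1\}$). Summing over the at most $n$ possible choices of the topmost point $p_T$ gives $\feq{k}{n} \leq nk$, as desired.

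The only substantive step in the argument is the contiguity property, and it is immediate once one exploits that the realizing rectangle has its bottom side pinned to the $x$-axis (so the vertical slab $[a,b] \times (-\infty, y(p_T)]$ is fully contained in $R$); the rest is elementary counting. Thus the real work of the lemma is in recognizing how the skyline structure trivializes the arrangement-complexity question that would be harder for general rectangles.
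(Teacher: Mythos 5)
Your proof is correct, and it is built on the same charging scheme the paper uses: fix the topmost point $p_T$ of a realized size-$k$ subset $S$ and count how many subsets can be charged to it. The paper phrases this via a canonical rectangle shrunk to pass through $p_T$ on top and through a point on each of its left and right edges, and then (loosely) counts pairs $(\ileft, \iright)$ with $\ileft, \iright \le k$ independently, getting $O(nk^2)$. Your contiguity observation --- that $S$ is a contiguous run of length $k$ in the $x$-sorted order of the points with $y$-coordinate at most $y(p_T)$, and that run must contain $p_T$ itself --- is the same geometric fact, but you then note that once the topmost point is fixed the constraint $|S|=k$ ties the left and right extents together, leaving at most $k$ choices rather than $k^2$. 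This yields the sharper $O(nk)$ bound. The improvement is genuine but inconsequential for the paper's downstream use, which only requires $\feq{k}{t} = 2^{O(k)}\, t\, \gamma(t)$, so either bound suffices; still, the tighter bound is worth having and your derivation is cleaner than the paper's, which over-counts by ignoring the size constraint.
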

\begin{proof}
    Consider a rectangle $\rect$ with its bottom edge lying on the
    $x$-axis, and which contains $k$ points of $\PntSet$. Lower its
    top edge till it passes through a point of $\PntSet$, and let
    $\pnt$ denote this point. Similarly, move its left and right edges
    till they pass through points of $\PntSet$. Let $\rect'$ be this
    new canonical rectangle.  Now, let $\ileft$ (resp. $\iright$) be
    the number of points of $\PntSet$ inside $\rect'$ that are to the
    left (resp. right) of $\pnt$. Clearly, $\pth[]{\pnt, \ileft,
       \iright}$ uniquely identifies this canonical rectangle. This
    implies the claim as $\pnt \in \PntSet$, $\ileft \leq k$ and
    $\iright \leq k$, and hence the numbers of such triples is $O\pth{
       \cardin{\PntSet} k^2}$.
\end{proof}

\begin{lemma}
    Given an instance of \PackPntsInSkyline, one can compute, in
    polynomial time, an $O( 1 )$-approximation to the optimal
    solution.
\end{lemma}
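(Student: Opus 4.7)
The plan is to apply our general hypergraph packing result, \thmref{p:hypergraph}, to the natural hypergraph associated with a \PackPntsInSkyline instance. Specifically, take the vertex set to be $\PntSet$ with the given weights, and for each skyline rectangle $\rect \in \RectSet$ introduce a hyperedge equal to $\rect \cap \PntSet$ with the inherited capacity $\capacityX{\rect}$. Points that lie strictly below the common base line may be discarded as they are never covered, and hyperedges that are equal as subsets of $\PntSet$ may be collapsed by keeping only the minimum capacity; neither preprocessing step changes the optimum.

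The key step is to verify that this hypergraph has the bounded growth property of \defref{bounded} with $\gamma(t) = O(1)$. Given any $X \subseteq \PntSet$ with $|X| \leq t$, the hyperedges of the sub-hypergraph induced by $X$ are the distinct sets of the form $\rect \cap X$. Applying \lemref{few:light:rectangles} to $X$ regarded as a set of $t$ points above the base line shows that the number of distinct $(k+1)$-subsets of $X$ realizable by a skyline rectangle is $O(t(k+1)^2)$. Therefore $\feq{k}{t} = O(t(k+1)^2) = 2^{O(k)} \cdot t$, which matches the required form with $\gamma(t) = O(1)$; the second condition of \defref{bounded}, namely $\feq{k}{xt} \leq c \, \feq{k}{t}$ for $1 \leq x \leq 2$, follows at once from the linear dependence on $t$. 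Polynomial size of $\HESet$ follows by summing the lemma's bound over $k \leq n$, yielding $|\HESet| = O(n^4)$.

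Plugging $\gamma(t) = O(1)$ into \thmref{p:hypergraph} produces in polynomial time a valid solution of expected weight $\Omega\pth{\Opt / \gamma(\Energy)^{1/\mc}} = \Omega(\Opt)$, and hence an $O(1)$-approximation. The main subtlety I foresee is the correct interpretation of $\feq{k}{t}$: it is the number of \emph{distinct} $(k+1)$-sized hyperedges of the induced sub-hypergraph, equivalently the number of distinct $(k+1)$-subsets of $X$ that arise as $\rect \cap X$ for some skyline rectangle $\rect$, and it is precisely this quantity that \lemref{few:light:rectangles} bounds once it is applied to $X$ in place of $\PntSet$.
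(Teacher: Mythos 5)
Your proposal is correct and follows essentially the same route as the paper: build the natural hypergraph on $\PntSet$, invoke \lemref{few:light:rectangles} to conclude bounded growth with $\gamma(t)=O(1)$, and then apply \thmref{p:hypergraph}. The extra details you supply (polynomial size of $\HESet$, the $1\leq x\leq 2$ scaling condition, and attention to the $k$ versus $k{+}1$ indexing) are implicit in the paper's one-line proof but do not change the argument.
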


\begin{proof}
    Consider the associated hypergraph $\hgraph = (\VSet,\HESet)$.  By
    \lemref{few:light:rectangles}, this hypergraph has the bounded
    growth property with $\feq{k}{t} = t O(k^2)$ (here $\gamma(t) =
    1$). Therefore, the algorithm of \thmref{p:hypergraph} provides
    the required approximation.
\end{proof}

\begin{lemma}
    Given a set $\PntSet$ of $n$ points in the plane, and a parameter
    $k$, one can compute a set $\RectSetA$ of $O\pth{ k^2 n\log n }$
    axis-parallel rectangles, such that for any axis-parallel
    rectangle $\rect$, if $\cardin{\rect \cap \PntSet} \leq k$, then
    there exists two rectangles $\rect_1, \rect_2 \in \RectSetA$ such
    that $\pth[]{\rect_1 \cup \rect_2} \cap \PntSet = \rect \cap
    \PntSet$.
    
    Furthermore, consider the graph where two points of $\PntSet$ are
    connected if they belong to the same rectangle in
    $\RectSetA$. Then the number of edges in this graph is $O( nk \log
    n)$.
    
    \lemlab{rect:set}
\end{lemma}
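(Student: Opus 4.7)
The plan is to build a one-dimensional interval tree $T$ on the $x$-coordinates of $\PntSet$: each internal node $v$ is associated with a vertical splitting line $\ell_v$ through the median $x$-coordinate of its point set $\PntSet_v$, and $\ell_v$ partitions $\PntSet_v$ into $\PntSet_v^{-}$ (the points with $x\leq \ell_v$) and $\PntSet_v^{+}$.  At each node $v$, I would precompute two canonical families: $\mathcal{L}_v$, the set of canonical rectangles whose right edge lies on $\ell_v$ and which contain at most $k$ points of $\PntSet_v^{-}$; and $\mathcal{R}_v$, defined symmetrically on the right side.  The output set $\RectSetA$ is then the union of $\mathcal{L}_v\cup\mathcal{R}_v$ taken over all nodes $v$ of $T$.

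For the covering property, given any rectangle $\rect$ with $\cardin{\rect \cap \PntSet} \leq k$, I would identify the deepest node $v$ of $T$ whose $x$-range still straddles $\rect$; by construction $\ell_v$ crosses $\rect$, so the pieces $\rect_1 = \rect \cap \brc{x \leq \ell_v}$ and $\rect_2 = \rect \cap \brc{x \geq \ell_v}$ each have at most $k$ points, and so are each realized (as a point-intersection with $\PntSet$) by some canonical rectangle in $\mathcal{L}_v$ and $\mathcal{R}_v$, respectively.  The size bound follows from a $90$-degree-rotated analogue of \lemref{few:light:rectangles} applied to $\PntSet_v^{-}$ anchored at $\ell_v$: parameterizing each canonical rectangle by the triple $\pth{p, i_{\text{top}}, i_{\text{bot}}}$ (the point on the left edge, together with the number of points above it and below it inside the rectangle) gives $\cardin{\mathcal{L}_v} = O\pth{\cardin{\PntSet_v^{-}} k^2}$, and similarly for $\mathcal{R}_v$.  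Since the sets $\PntSet_v$ at a common level of $T$ are disjoint, each level contributes $O(nk^2)$ rectangles, and $T$ has $O(\log n)$ levels, giving the claimed $O(nk^2 \log n)$ bound.

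For the edge count, observe that a pair $\brc{q,r} \subseteq \PntSet_v^{-}$ co-occurs in some rectangle of $\mathcal{L}_v$ iff the smallest axis-aligned rectangle enclosing $\brc{q,r}$ whose right edge is anchored at $\ell_v$ contains at most $k$ points of $\PntSet_v^{-}$.  I would then invoke the standard ``$k$-th order maxima'' bound: for each fixed $q$, the set of valid $r$ forms a staircase of size $O(k)$ in each of the two quadrant directions from $q$ toward $\ell_v$, so the number of such $r$ is $O(k)$.  Summing over $q$ gives $O\pth{\cardin{\PntSet_v^{-}} k}$ pairs per node from $\mathcal{L}_v$, and symmetrically from $\mathcal{R}_v$; disjointness per level and summation over $O(\log n)$ levels yields the claimed $O(nk \log n)$ bound on the edges.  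The main obstacle will be making this anchored staircase count precise: the classical $k$-shallow-rectangular-pair bound is symmetric in the two endpoints, whereas here the neighborhood is biased toward $\ell_v$, so the standard staircase argument needs a direct adaptation rather than a black-box citation.
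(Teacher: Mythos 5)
Your construction is a transpose and re-packaging of the paper's: the paper recursively splits by a horizontal median line and at each recursion level builds canonical skyline rectangles (one edge pinned to the split line, extending upward or downward), whereas you build an explicit interval tree on the $x$-coordinates and anchor the right (resp.\ left) edge of the canonical rectangles at each node's vertical split line $\ell_v$. Up to a $90$-degree rotation and naming, these are the same decomposition; the covering argument, the size bound (a re-anchored re-parameterization of \lemref{few:light:rectangles}), and the per-level charging all coincide with the paper's.

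The one worry you raise at the end---that the anchored pair count ``needs a direct adaptation rather than a black-box citation''---is not an obstacle; the anchoring makes the count \emph{easier}, not harder, and is exactly what the paper exploits. Once one edge is pinned to $\ell_v$, fix $q$ and one of the two quadrants toward $\ell_v$, so the candidate $r$ has $x_r > x_q$; the minimal anchored rectangle enclosing $\{q,r\}$ is then $[x_q,\ell_v]\times[y_q,y_r]$ (or its vertical reflection). Its left, right, and one horizontal edge are determined by $q$ and $\ell_v$ alone, so it grows monotonically with the single parameter $\cardin{y_r-y_q}$, and the valid $r$ are precisely the points inside the largest such rectangle that is still $k$-light: at most $k-1$ per quadrant. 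This is precisely the paper's argument---it fixes the higher endpoint $\pnt$, which then sits at a corner of the minimal skyline rectangle, and observes there are at most $k$ admissible $\pntA$ per orientation---so there is no two-dimensional staircase, and no deviation from the paper, to control. Charging each pair to its endpoint farther from $\ell_v$, as you do, closes the count at $O(k\cardin{\PntSet_v^-})$ per side per node with no further work.
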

\begin{proof}
    Find a horizontal line $\Line$ that splits $\PntSet$ equally, and
    compute all the skyline rectangles that contain at most $k$ points
    of $\PntSet$ (that is, compute both the rectangle above and below
    the line).  By \lemref{few:light:rectangles}, the number of such
    rectangles is $O( n k^2)$. Now, recursively compute the rectangle
    set for the points above $\Line$, and for the points below
    $\Line$. Clearly, the number of rectangles generated is $O(k^2 n
    \log n)$, and let $\RectSetA$ denote the resulting set of
    rectangles.
    
    Now, consider any axis-parallel rectangle $\rect$ such that
    $\cardin{\rect \cap \PntSet}\leq k$. If it is does not intersect
    $\Line$ then by induction it has the desired property. Otherwise,
    if $\rect$ intersects $\Line$, then it can be decomposed into two
    skyline rectangles, each one of them contains at most $k$ points
    of $\PntSet$. By construction, for each of these rectangles there
    is a rectangle in $\RectSetA$ that contains exactly the same set
    of points.
    
    As for the second claim, we apply a similar argument.  Consider an
    edge $\pnt \pntA$ in this graph that arise because of a top
    skyline rectangle of $\Line$. Furthermore, assume that $\pnt$ is
    higher than $\pntA$ and to its right. Clearly, there are at most
    $k$ such edges emanating from $\pnt$, as the skyline rectangle
    having $\pnt$ as its top right corner and having its left edge
    through $\pntA$ contains at most $k$ points, and each such
    rectangle corresponds to a unique edge. As such, we get that the
    number of edges in the graph is $E(n) = O(nk) + 2T(n/2) = O(n k
    \log n)$.
\end{proof}

\begin{remark}
    A slightly more careful analysis shows that the number of
    rectangles in the set computed by \lemref{rect:set} that contain
    exactly $k$ points is $O(n k \log n)$. This will not be needed for
    our analysis.
\end{remark}

\begin{remark}
    Consider an instance $\hgraph = (\PntSet,\RectSet)$ of
    \PackPntsInRects.  Let $\hgraph' = (\PntSet,\RectSetA)$ be a
    modified instance of \PackPntsInRects where $\RectSetA$ is
    obtained from $\RectSet$ by replacing each rectangle by two new
    rectangles whose union covers that same set of points.
    \lemref{rect:set} guarantees that this can be done such that
    $\cardin{\RectSetA} = O\pth{n^3\log n}$.  One might be tempted to
    believe that we can plug $\hgraph'$ into \thmref{p:hypergraph} in
    order to get a bi-criteria approximation for $\hgraph$.
    Unfortunately, this does not work since (as the following example
    shows) the hypergraph does not have the bounded growth property
    for any meaningful growth function.
    
    \parpic[r]{%
       \begin{minipage}{0.45\linewidth}%
           \includegraphics[width=.95\linewidth]{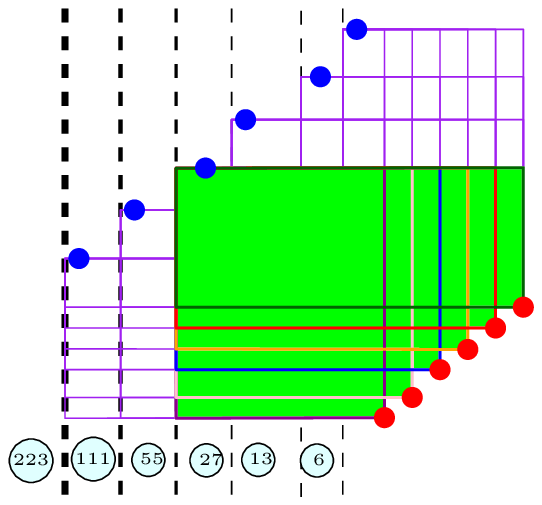}%
       \end{minipage}}
    
    Consider two parallel lines in the plane with positive slope.
    Place $\Theta(\log n)$ points of $\PntSet$ on each line (or close
    to the line) such that the points on (or close to) the top line
    all lie above and to the left of those on the bottom line.  The
    remaining points of $\PntSet$ will all lie below the points on the
    diagonals.  Let $T$ be the interval tree of $\PntSet$ (using
    vertical split lines).  Specifically, the remaining points of
    $\PntSet$ will be placed such that each point on the top line lies
    in a different level of $T$, and all the points on the bottom line
    lie in the same node as the rightmost point on the top line.  More
    specifically, the leftmost point on the top line will correspond
    to the root and the points in order from the left to right on the
    top line will correspond to continually walking down in the tree.
    (A $\textcircled{u}$ in the figure represents a cluster of $u$
    points close together.)  Now let $X$ be the subset of $\PntSet$
    which consists of the two set of $\Theta(\log n)$ points on the
    diagonal lines.  Consider the intersection sub-hypergraph induced
    by $X$.  Suppose that $\RectSet$ has rectangle for every pair of
    points in $\PntSet$ that can be obtained as the intersection of a
    rectangle with $\PntSet$.  Then any pair of points from the top
    and bottom diagonals will correspond to a hyperedge in this
    induced sub-hypergraph.  Therefore, $\feq{1}{\log n} =
    \Omega\pth{\log^2 n}$, and hence \thmref{p:hypergraph} only gives
    an $O(\Energy)$ approximation.
\end{remark}

Since (as the above remark demonstrates) we cannot directly apply
\lemref{rect:set}, our approach will be more roundabout.  We first
show how to solve the independent set variant of our problem (i.e.,
unit capacities). Next, we slice the rectangles of the given instance
with non-uniform capacities case into subrectangles with unit
capacities, and plug it into the above algorithm to get a meaningful
approximation.

\begin{lemma}
    Given an instance of $\hgraph = (\VSet,\RectSet)$ of
    \PackPntsInRects with unit capacities, one can compute a subset
    $\VSetA \subseteq \VSet$, such that the total weight of $\VSetA$
    is $\Omega(\Opt/ \log \Energy)$ and each rectangle of $\RectSet$
    contains at most $2$ points of $\PntSet$, where $n =
    \cardin{\PntSet}$.
    
    \lemlab{first}
\end{lemma}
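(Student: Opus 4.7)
The plan is to reduce the lemma to an application of the hypergraph framework \thmref{p:hypergraph} on a hypergraph built from canonical rectangles, combined with sparsification. First, solve the LP relaxation of \PackPntsInRects to obtain a fractional solution $x$ of value at least $\Opt$, and invoke \lemref{sparsify} to obtain a sparsified fractional solution $\tilde x$ supported on $n' = O(\Energy \log \Energy)$ points, retaining a constant fraction of the LP objective and preserving the energy up to constants (using that axis-parallel rectangles have constant \VC dimension).

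Next, apply \lemref{rect:set} with $k = 2$ on the support of $\tilde x$ to obtain a family $\RectSetA$ of $O(n' \log n')$ canonical axis-parallel rectangles such that every rectangle of $\RectSet$ containing at most two support points is the union of two members of $\RectSetA$. Let $\hgraph'$ be the hypergraph whose vertices are the support points and whose hyperedges are the canonical rectangles of $\RectSetA$, each with unit capacity. Using \lemref{few:light:rectangles} at each level of the interval tree underlying the recursive construction of \lemref{rect:set}, I would bound the number of $(j+1)$-subsets of any set of $t$ support points realized by a canonical skyline rectangle at that level by $O(t j^2)$, and summing over the $O(\log \Energy)$ levels yields $\feq{j}{t} = 2^{O(j)} t \cdot O(\log \Energy)$. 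Thus $\hgraph'$ satisfies the bounded growth property (\defref{bounded}) with $\gamma(t) = O(\log \Energy)$.

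Then apply \thmref{p:hypergraph} to $\hgraph'$ with $\tilde x$ (after a constant rescaling so that it is feasible for the canonical rectangle LP) to obtain a subset $\VSetA$ of weight $\Omega(\Opt/\log \Energy)$ in which every canonical rectangle contains at most one point of $\VSetA$. Consequently, every original rectangle of $\RectSet$ containing at most two support points contains at most two points of $\VSetA$, because it is the union of two canonical rectangles, each holding at most one point.

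The main obstacle will be ``heavy'' original rectangles, namely those containing three or more support points, which are not covered by \lemref{rect:set} for $k = 2$. For such a rectangle $\rect$, the LP constraint $\sum_{\pnt \in \rect} x_\pnt \leq 1$ ensures that the expected number of support points of $\rect$ included in the random sample produced by \thmref{p:hypergraph} is only $O(1/\log \Energy)$, so violations are individually rare. I plan to handle any surviving violations by a post-processing alteration step that removes points from every heavy rectangle with three or more points of $\VSetA$ until only two remain, and to bound the expected weight removed by a Clarkson-type counting argument on triples of selected support points sharing a common original rectangle, showing that the damage is a lower-order term relative to $\Opt/\log \Energy$.
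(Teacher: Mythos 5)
Your proposal departs from the paper's proof in two coupled ways, and one of them leaves a real gap.

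\textbf{Where you agree with the paper.} Sparsifying via \lemref{sparsify}, replicating so that each surviving point gets an \LP value of exactly $1/M$ with $M = O(\log\Energy)$, and then invoking \lemref{rect:set} to build canonical rectangles is exactly the paper's starting point.

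\textbf{Where you differ.} The paper applies \lemref{rect:set} with $k=M$, so that \emph{every} rectangle of $\RectSet$ (each has at most $M$ support points by the unit-capacity \LP constraint) is already the union of two canonical rectangles. There are no leftover ``heavy'' rectangles. The paper then deliberately avoids the hypergraph framework: the remark preceding \lemref{first} gives an explicit example showing that the hypergraph $(\PntSet,\RectSetA)$ does not have a useful bounded-growth function when canonical rectangles may contain many points, so \thmref{p:hypergraph} would only yield an $O(\Energy)$ bound. Instead the paper forms a conflict \emph{graph} on $\PntSet$ (two points adjacent iff they share both a rectangle of $\RectSet$ and a canonical rectangle of $\RectSetA$), uses the second claim of \lemref{rect:set} to bound the \emph{total} number of edges by $O(n' M \log n') = O(\Energy\log^3\Energy)$, and applies \Turan's theorem. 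That argument needs only a global average-degree bound, not hereditary bounded growth, which is precisely why it succeeds where the hypergraph route does not.

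\textbf{The gap.} You instead apply \lemref{rect:set} with $k=2$. With that choice the canonical rectangles have at most two points each, and your computation that $\feq{1}{t} = O(t\log n') = O(t\log\Energy)$ for the induced sub-hypergraph is essentially right (the $k=2$ restriction makes the number of canonical rectangles meeting a subset in exactly two points comparable to the number of realized $2$-subsets, and each point of a $t$-set lives in $O(\log n')$ nodes of the interval tree). But this only controls the rectangles with at most two support points; any rectangle with between $3$ and $M$ support points is entirely uncontrolled by the hypergraph packing step. Your proposed fix --- a post-hoc alteration bounded via expected triples of selected points in a common heavy rectangle --- is where the argument breaks. The number of distinct heavy rectangles is only bounded by $\cardin{\RectSet}$, which can be polynomial in $n$ with no near-linear structural bound on how many point-triples are covered, unlike the near-linear pair bound of \lemref{rect:set}. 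A per-rectangle Markov or Clarkson count of the form $\binom{M}{3} / \scl^3$ therefore sums to something on the order of $\cardin{\RectSet}/\log^3\Energy$, which is far larger than $\Opt/\log\Energy$ in general. Moreover, the output of \thmref{p:hypergraph} is already the result of a selection-then-alteration process; reasoning about triples inside that output is not the clean independent-sample computation you are implicitly assuming. You would need a genuinely new structural fact --- effectively a ``few triples'' analogue of the second claim of \lemref{rect:set} --- and I do not see one. The paper's $k=M$ choice together with the conflict-graph \Turan\ argument circumvents the entire issue.
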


\begin{proof}
    We first use \lemref{sparsify} to sparsify the given instance. We
    now have a set of $\PntSet \subseteq \VSet$ of $t = \Theta(
    \Energy \log \Energy)$ points, and an associated fractional
    solution, such that none of the constraints are violated. The
    value of the fractional solution on $\InducedX{\hgraph}{\PntSet}$
    is $\Omega(\Opt)$, and as such we restrict our search for a
    solution to $\PntSet$.
    
    Furthermore, we can assume that the value assigned to each point
    of $\PntSet$ by this fractional solution is exactly $1/M$ (we
    replicate a point $i$ times if it is assigned value $i/M$), where
    $M = O( \log \Energy)$. Note, that none of the rectangles of
    $\RectSet$ contains more than $M$ points of $\PntSet$. In
    particular, by \lemref{rect:set}, one can build a set of
    rectangles $\RectSetA$ of size $O(M^2 t \log t)$, such that every
    rectangle of $\RectSet$ can be covered by the union of two
    rectangles of $\RectSetA$; formally, for every $\rect \in
    \RectSet$ there exists $\rect_1, \rect_2 \in \RectSetA$ such that
    $\rect \cap \PntSet = \pth[]{\rect_1 \cup \rect_2} \cap
    \PntSet$. We build a conflict graph $\graph$ over $\PntSet$
    connecting two points if (i) they are both contained in a
    rectangle of $\RectSet$, and (ii) there is a rectangle of
    $\RectSetA$ that contains them both. By \lemref{rect:set} this
    graph has at most $O(M t \log t) = O( \Energy \log^3 \Energy )$
    edges and $t = \Theta( \Energy \log \Energy)$ vertices.
    
    We further add edges to $\graph$ making a clique out of each group
    of duplicated points that arose from a single given point of
    $\PntSet$ (this is needed since when duplicating the points we
    perturbed them in order to maintain the implicit general position
    assumptions of \lemref{rect:set}, and one needs to guarantee that
    at most one of these copies is picked to the independent set).
    Now for a point $\pnt \in \PntSet$ with \LP value $x_\pnt$, the
    number of duplicated points is $x_\pnt M$.  Hence the number of
    edges added for these cliques is
    \[
    \sum_{\pnt \in \PntSet} \binom{x_\pnt M}{2}%
    \leq%
    \sum_{\pnt \in \PntSet} x_\pnt^2M^2%
    \leq%
    M^2\sum_{\pnt \in \PntSet}x_\pnt%
    \leq%
    M^2\Energy = O\pth{ \Energy \log^2 \Energy },
    \]
    and hence the number of edges in $\graph$ overall is $O(\Energy
    \log^3 \Energy)$.
    
    It is easy to verify that $\graph$ has average degree $O( \log^2
    \Energy)$, and the total weight of the vertices is $\Theta( \Opt
    \log \Energy)$, as such, by \Turan's theorem, one can compute an
    independent set of vertices in this graph of weight $\Omega(
    w(\PntSet)/ (\text{average degree} + 1))$ $ = \Omega( \Opt/\log
    \Energy)$.

    Now, it is easy to verify that any rectangle in $\RectSet$
    contains at most two points of this independent set.
\end{proof}

\begin{theorem}
    Given an instance of $(\VSet,\RectSet)$ of \PackPntsInRects (with
    arbitrary capacities), one can compute in polynomial time a subset
    $\VSetA \subseteq \VSet$ that is an $\pth[]{O( \log \Energy),
       2}$-approximation to the optimal solution.
    
    
    \thmlab{pack:points:in:rects}
\end{theorem}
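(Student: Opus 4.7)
The plan is to reduce the arbitrary-capacity instance to a unit-capacity instance handled by \lemref{first}. First I would solve the \HyperLP relaxation to obtain a fractional solution $x$ of value $\OptLP \geq \Opt$. For each rectangle $\rect \in \RectSet$ with capacity $k = \capacityX{\rect} \geq 2$, I would use $x$ to slice $\rect$ horizontally into at most $k$ subrectangles (slabs) each with total LP weight at most one: sweep a horizontal line upward through $\rect$ and cut whenever the accumulated LP weight of the points in $\rect$ reaches one. Since $\sum_{\pnt \in \rect} x_\pnt \leq k$, this partition has at most $k$ slabs, each a valid subrectangle of $\rect$. I would leave the capacity-$1$ rectangles unchanged and let $\RectSet'$ denote the resulting family of unit-capacity rectangles; the LP solution $x$ remains feasible with value $\OptLP$ for $(\VSet, \RectSet')$.

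Next, I would invoke \lemref{first} on $(\VSet, \RectSet')$ to obtain a subset $\VSetA \subseteq \VSet$ of weight $\Omega(\OptLP/\log \Energy) = \Omega(\Opt/\log \Energy)$ such that each rectangle of $\RectSet'$ contains at most two points of $\VSetA$. This already yields the claimed $O(\log \Energy)$ approximation bound on the weight; moreover, for each original rectangle $\rect$ of capacity one, $\rect$ itself lies in $\RectSet'$, so $\cardin{\VSetA \cap \rect} \leq 2 = \max(\capacityX{\rect}, 2)$, which is the desired per-rectangle guarantee.

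The main obstacle will be enforcing $\cardin{\VSetA \cap \rect} \leq k$ for original rectangles with $\capacityX{\rect} = k \geq 2$; the naive bound of two points per slab gives only $2k$ per original rectangle, exceeding the target $\max(k, 2) = k$. To resolve this I would augment the conflict graph constructed inside the proof of \lemref{first} by adding, for each slab, a clique among its (duplicated) points. Then any independent set in the augmented graph selects at most one duplicate per slab, and therefore at most $k$ points per original rectangle, as required. The hard part will be verifying that these extra clique edges do not blow up the average degree of the conflict graph, so that Tur\'an's theorem still produces an independent set of weight $\Omega(\OptLP/\log \Energy)$: since after sparsification each slab contains at most $M = O(\log \Energy)$ duplicates (each of LP value $1/M$), a single slab contributes only $O(\log^2 \Energy)$ edges, and the accumulated edge count can be bounded by charging these edges to the $\Theta(\Energy)$ total LP energy and to the hierarchical structure of the slab decomposition, in the spirit of the analysis of \lemref{first}.
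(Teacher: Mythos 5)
Your overall plan — slice each rectangle into unit-capacity slabs using an LP sweep, then feed the result into \lemref{first} — is the same high-level idea as the paper's proof. The difference is in how you deal with the "$2$ points per slab" loss, and here there is a real gap.

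You slice $\rect$ into up to $k = \capacityX{\rect}$ slabs, each of LP value at most $1$, and correctly observe that \lemref{first} then leaves up to $2k$ points in $\rect$, which overshoots $\max(2,k)$. Your proposed fix is to add, inside the conflict graph of \lemref{first}, a clique over all duplicated points in each slab, so that at most one point per slab survives. This does not work for free. Inside \lemref{first}, a slab of LP value $\leq 1$ corresponds (after sparsification) to up to $M = O(\log\Energy)$ duplicate points, and by \lemref{rect:set} these are covered by the union of \emph{two} canonical rectangles of $\RectSetA$. The existing conflict graph only forces a clique within each $\RectSetA$-rectangle; the "cross" edges between the two halves of a slab are new edges you are adding. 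Each slab can contribute up to $\Theta(M^2)$ such extra edges, and the number of slabs is $\sum_{\rect}\capacityX{\rect}$, which is controlled by $\cardin{\RectSet}$ and the capacities, not by the sparsified energy $\Theta(\Energy\log\Energy)$. The average-degree bound $O(\log^2\Energy)$ in \lemref{first} is established by charging the conflict edges to the canonical family $\RectSetA$ of size $O(M^2 t\log t)$, and your new cross edges do not come from that family; nothing in your sketch caps their total count by $\tilde O(\Energy)$. So the Tur\'an step is no longer justified.

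The paper avoids this entirely by choosing the slab size so that the factor of $2$ is absorbed rather than fought. Split $\rect$ into only $\ceil{\capacityX{\rect}/3}$ pieces, each carrying LP value at most $4$ (sweep and cut whenever the accumulated fractional value exceeds $3$; since each point contributes at most $1$, the piece value stays below $4$). Treating all pieces as unit-capacity and scaling the fractional solution down by $4$ gives a feasible LP of value $\Omega(\Opt)$, so \lemref{first} still yields weight $\Omega(\Opt/\log\Energy)$. Each piece contributes at most $2$ accepted points, and $2\ceil{\capacityX{\rect}/3} \leq \max(2,\capacityX{\rect})$ for every integer $\capacityX{\rect}\geq 1$. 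No modification to the conflict graph is needed. So the correct repair is to slice \emph{more coarsely}, not to add clique edges; adding cliques would require a new degree bound that your charging argument does not deliver.
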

\begin{proof}
    Compute a fractional solution to the given instance.  Split each
    rectangle $\rect$ with capacity $\capacityX{\rect}$ into
    $\ceil{\capacityX{\rect}/3}$ rectangles, each one containing at
    most value $4$ from the fractional solution (this can be done by
    sweeping the rectangle from left to right, and splitting it
    whenever the fractional solution inside the current portion
    exceeds $3$). Consider now a unit capacity instance on the same
    point set but with these new rectangles.  We use \lemref{first} in
    order to get an $\pth[]{O( \log \Energy), 2}$-approximation for
    this new instance.
    
    We now show that this solution we obtained for the unit capacity
    instance is also an $\pth[]{O( \log \Energy), 2}$-approximation to
    the original instance.  First observe that the \LP value on this
    new instance is $\Omega\pth{\Opt}$ (where $\Opt$ is the \LP value
    of the original instance) since scaling down the fractional
    solution to original instance by a factor of 4 would be a valid
    solution to the \LP for the new instance (since these newly
    created rectangles each contained at most 4 from the fractional
    solution), and hence the weight of the approximation is
    $\Omega\pth{\Opt/\log\Energy}$.  Furthermore, we know every
    rectangle $\rect \in \RectSet$ contains at most $2
    \ceil{\capacityX{\rect}/3} \leq \max\pth{ 2, \capacityX{\rect} }$
    points from this solution, since in the new instance each
    rectangle from $\RectSet$ was replaced with
    $\ceil{\capacityX{\rect}/3}$ rectangles each of which contains at
    most two points from the computed solution. (Note that the
    inequality holds since $\capacityX{\rect}$ has integral value.)
\end{proof}


\section{Packing points into fat triangles}
\seclab{p:points:in:triangles}

In this section, we give a bi-criterion approximation for packing
points into a set of $\alpha$-fat triangles. More precisely, we
consider the following problem.

\begin{problem}{\rm{(\PackPntsInFTri.)}}
    Given a weighted set $\PntSet$ of points and a set $\TriSet$ of
    $\alpha$-fat triangles in the plane such that each triangle $\Tri$
    has a capacity $\capacityX{\Tri}$, find a maximum weight subset
    $\OSet \subseteq \PntSet$, such that, for each $\Tri \in \TriSet$,
    the number of points of $\OSet$ contained in $\Tri$ is at most
    $\capacityX{\Tri}$.  \problab{pnts:in:f:tri}
\end{problem}

The approximation algorithm uses the following building blocks:
\begin{compactenum}[$~$~~~(A)]
    \item We prove that, for a given point set, there exists a small
    number of canonical sets such that for any fat triangle that
    covers at most $k$ points, there exists a constant number of these
    canonical sets whose union covers exactly the same points.
    Showing this result is quite technical and requires non-trivial
    modifications of the work of Aronov \etal \cite{aes-ssena-10} (in
    particular, their work does not imply this result). This is
    delegated to \secref{canonical:sets}, see \thmref{canonical:fat}
    for the exact result.
    
    \item An algorithm for approximating the unit capacity case. This
    follows by an algorithm similar to the one in \lemref{first}, see
    \lemref{f:t:first} for details. Note that this uses the result
    from (A) to get the required approximation.
    
    \item A partition scheme that shows that a fat triangle (with a
    measure defined over it) can be ``partitioned'' into $O(k)$
    triangles such that any triangle in this partition has measure at
    most $1/k$; see \lemref{f:triangle:m:p}.
\end{compactenum}

\smallskip Putting these components together yields the approximation
algorithm; see \thmref{pack:points:in:triangles} for details.

\subsection{The unit capacity case}

\begin{lemma}
    Given an instance $\hgraph = (\VSet,\TriSet)$ of \PackPntsInFTri
    with unit capacities, one can compute a subset $\VSetA \subseteq
    \VSet$ such that the total weight of $\VSetA$ is $\Omega\pth{
       \Opt/ \log^6 \Energy }$ and each triangle of $\TriSet$ contains
    at most $\FTriConst$ points of $\PntSet$.
    
    \lemlab{f:t:first}
\end{lemma}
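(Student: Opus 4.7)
The approach mirrors \lemref{first} for rectangles, substituting \thmref{canonical:fat} for \lemref{rect:set}. First, apply \lemref{sparsify} to the given fractional \LP solution to obtain a sparse fractional solution supported on $t = \Theta(\Energy \log \Energy)$ points, where every non-zero value is exactly $1/M$ with $M = O(\log \Energy)$, the capacity constraints remain satisfied, and the value of the \LP solution is still $\Omega(\Opt)$. In particular, every triangle $\Tri \in \TriSet$ contains at most $M$ sparsified points, since the sum of their fractional values in $\Tri$ is at most $1$.

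Next, invoke \thmref{canonical:fat} on the sparsified point set with parameter $k = M$ to obtain a collection $\mathcal{C}$ of canonical point sets with the property that every triangle of $\TriSet$ (each covering at most $M$ sparsified points) agrees, on the sparsified point set, with the union of at most $\FTriConst$ canonical sets from $\mathcal{C}$. Form the conflict graph $\graph$ whose vertices are the sparsified points, each duplicated $x_\pnt M$ times so that every vertex carries weight $w_\pnt/M$; add a clique on every duplicate group (to prevent picking two copies of the same original point), and add an edge between any two distinct points that appear together in some canonical set of $\mathcal{C}$. The bounds provided by \thmref{canonical:fat} on $\cardin{\mathcal{C}}$ and the sizes of its members, together with the $O(\Energy \log^2 \Energy)$ clique edges contributed by the duplication, give a polylogarithmic bound on the average degree of $\graph$.

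Apply \Turan's theorem to $\graph$ to extract an independent set $\VSetA$ of weight $\Omega\pth{W/(D+1)}$, where $W = \Omega(M\Opt) = \Omega(\Opt \log \Energy)$ is the total duplicated weight and $D$ is the average degree; matching the polylog exponents yields $\Omega(\Opt/\log^6 \Energy)$. Feasibility is immediate: by construction no two points of $\VSetA$ lie jointly in any canonical set, so each canonical set contains at most one point of $\VSetA$; since every triangle of $\TriSet$ is (on the point set) the union of at most $\FTriConst$ canonical sets, every triangle contains at most $\FTriConst$ points of $\VSetA$. The main obstacle will be the polylog bookkeeping: the exponent $6$ is dictated entirely by the quantitative bounds in \thmref{canonical:fat} on both $\cardin{\mathcal{C}}$ and the number of pairs it contributes to $\graph$; all other steps (sparsification, duplication, and the Turán bound) transcribe directly from the proof of \lemref{first}.
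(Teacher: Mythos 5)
Your proposal mirrors the paper's proof step by step: sparsify via the \LP sparsification lemma, replicate points so each carries value $1/M$, invoke \thmref{canonical:fat} with parameter $k = M$ to get a canonical family of size $O(M^3 t\log^2 t)$ each of whose members has at most $M$ points, build the conflict graph (with cliques over each duplicate group plus edges between co-occurring points in a canonical set), and apply Tur\'an. One bookkeeping slip worth fixing: each duplicate should carry the \emph{original} weight $w_\pnt$ (not $w_\pnt/M$), which is what makes the total $W = \Theta(M \cdot \Opt) = \Theta(\Opt\log\Energy)$ that you then use together with the average degree $O(M^5\log^2 t) = O(\log^7\Energy)$ to derive $\Omega(\Opt/\log^6\Energy)$.
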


\begin{proof}
    We follow the proof of \lemref{first}.  We first use
    \lemref{sparsify} to sparsify the given instance. We now have a
    set of $\PntSet \subseteq \VSet$ of $t = \Theta( \Energy \log
    \Energy)$ points and a corresponding fractional solution that is
    feasible. The value of the fractional solution on
    $\hgraph_\PntSet$ is $\Omega(\Opt)$, and as such we restrict our
    search for a solution to $\PntSet$.
    
    Furthermore, we can assume that the value assigned to each point
    of $\PntSet$ by this fractional solution is exactly $1/M$ --- we
    replicate a point $i$ times if it is assigned value $i/M$ ---
    where $M = O( \log \Energy)$. Note that none of the triangles of
    $\TriSet$ contains more than $M$ points of $\PntSet$. In
    particular, by \thmref{canonical:fat}, one can construct a set
    $\FamilyA$ of regions of size $O\pth{ M^3 t\log^2 t }$ such that,
    for every triangle of $\Tri \in \TriSet$, there exists a subset
    $\{\hedgeA_1, \ldots, \hedgeA_k\} \subseteq \FamilyA$ of at most 9
    regions (i.e. $k\leq 9$) such that $\PntSet \cap \Tri = \PntSet
    \cap \pth{\cup_{i=1}^k \hedgeA_i}$.  We build a conflict graph
    $\graph$ over $\PntSet$ connecting two points if (i) they are both
    contained in a triangle of $\TriSet$, and (ii) there is a set of
    $\FamilyA$ that contains both of them. Since the number of sets in
    $\FamilyA$ is $O\pth{ M^3 t\log^2 t }$, and each such set has size
    at most $M$, it follows that the number of edges in the resulting
    graph $\graph$ is $O\pth{ M^5 t\log^2 t }$, and the number of
    vertices is $t = \Theta( \Energy \log \Energy)$ vertices.  As in
    the proof of \lemref{first}, we also add edges between replicated
    points (since these edges do not affect our analysis, we ignore
    them for the sake of simplicity of exposition).
    
    The graph $\graph$ has average degree $O\pth{ M^5 \log^2 t } =
    O\pth{ \log^7 \Energy}$, and the total weight of the vertices is
    $\Theta( \Opt \log \Energy)$. Therefore, by \Turan's theorem, one
    can compute an independent set of vertices in this graph of weight
    $\Omega( w(\PntSet)/ (\text{average degree} + 1))$ $ = \Omega(
    \Opt/\log^6 \Energy)$.
    
    Finally, it is easy to verify that any triangle in $\TriSet$
    contains at most $\FTriConst$ points of this independent set.
\end{proof}

\subsection{Covering a measure on a fat triangle}

At this point we would like to use \lemref{f:t:first} in order to get
a bi-criteria approximation for the case in which the capacities are
arbitrary, as we did in \thmref{pack:points:in:rects}.  However, doing
so directly proves more challenging for fat triangles than axis
parallel rectangles.  This is because our general procedure requires
that, given an object $x$ of a given type such that the total
fractional value of the points in $x$ is non-zero, we need to be able
to decompose $x$ into $O(\capacityX{x})$ smaller objects of the same
type such that, for each smaller object, the fractional value of the
points in the object is only a constant.  This can easily be done for
axis parallel rectangles by using vertical splitting lines (as was
done in \thmref{pack:points:in:rects}), but it is more challenging for
fat triangles. However, the following lemma shows that such a
decomposition is still possible for fat triangles.

\newcommand{\constTriCoverM}{18}

\begin{lemma}
    Let $\mu$ be a measure defined over the plane, and consider a fat
    triangle $\Tri$. Then, for any integer $k$, one can cover $\Tri$
    by at most $\constTriCoverM k$ fat triangles, such that the
    measure of each of these triangles is at most $\mu(\Tri)/k$.
    
    \lemlab{f:triangle:m:p}
\end{lemma}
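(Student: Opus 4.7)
The plan is to prove the lemma by constructing the covering via an adaptive recursive subdivision of $\Tri$ that is sensitive to the measure $\mu$.

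\textbf{Step 1 (Canonicalization).} First I would reduce to the case of a canonical shape. Any $\alpha$-fat triangle can be covered by a constant number $c_1 = c_1(\alpha)$ of equilateral (or right isoceles) $\alpha'$-fat triangles, where $\alpha'$ depends only on $\alpha$: e.g., place an affinely-regularized copy of $\Tri$ and observe that the target fat shape tiles the plane (or any containing region) by a constant number of rotated/reflected copies. This reduction incurs only a constant overhead, so I may assume $\Tri$ is equilateral and will charge any multiplicative slack into the constant $18$.

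\textbf{Step 2 (Subdivision).} Apply the standard midpoint subdivision recursively: at each step, a triangle whose measure exceeds $\mu(\Tri)/k$ is split into $4$ congruent similar sub-triangles by connecting the midpoints of its sides. Each sub-triangle is equilateral and similar to its parent, hence remains fat with the same aspect ratio. The recursion terminates when every sub-triangle has measure at most $\mu(\Tri)/k$, and the leaves of the resulting subdivision tree cover $\Tri$ by construction.

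\textbf{Step 3 (Counting the leaves).} The goal is to show that the total number of leaves is at most $18k$. I would use the following observations. (a) Every internal node of the subdivision tree has measure strictly greater than $\mu(\Tri)/k$. (b) Internal nodes at the same depth are pairwise disjoint sub-triangles of $\Tri$, so their total measure is at most $\mu(\Tri)$, and hence there are at most $k$ internal nodes per depth. (c) In a $4$-ary tree the number of leaves is $3\cdot(\text{\# internal nodes})+1$. Combining (a)--(c) with a bound on the depth of the subdivision tree gives the desired $O(k)$ bound; the specific constant $18$ then follows from the $c_1$-overhead of Step 1 together with the $4$-ary branching in Step 2.

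\textbf{Main obstacle.} The hard part is controlling the depth of the recursion in the presence of non-uniform $\mu$. For an atom (point mass) of weight larger than $\mu(\Tri)/k$, the naive recursion never terminates: every descendant cell that contains the atom remains heavy. In the intended application the measure comes from the fractional \LP solution $\sum_{\pnt} x_\pnt \delta_\pnt$ with $x_\pnt \leq 1$, and the lemma is invoked when $\mu(\Tri)/k \geq 1$, so no single atom is heavy and the depth is bounded. For a robust argument that works in general, as a pre-processing step I would peel off each heavy atom into its own dedicated fat triangle (contributing only one output piece per heavy atom, and at most $k$ such atoms since each has weight $>\mu(\Tri)/k$), and then run the subdivision on the residual light measure, whose atoms are all at most $\mu(\Tri)/k$. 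This gives a bounded depth and closes the counting argument.
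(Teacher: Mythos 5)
Your Step~3 has a genuine gap, and it is exactly the gap that forces the paper to take a more elaborate route. You correctly observe that (a) every internal node has measure $> \mu(\Tri)/k$, (b) internal nodes at any fixed depth are pairwise disjoint, hence at most $k$ of them per level, and (c) in a full $4$-ary tree the leaves are $3\cdot(\text{internal})+1$. But what you actually get from (a)--(c) is $\#\text{leaves} \leq 3kd+1$ where $d$ is the depth of the subdivision tree, and $d$ is \emph{not} bounded in terms of $k$ --- it depends on the spread of the measure. A long chain of internal nodes, each of which has one heavy child and three light (leaf) children, contributes $\Theta(d)$ leaves while using only one internal node per level; so ``all leaves of the midpoint quadtree'' can be far more than $O(k)$ even when no atom is heavy. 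The heavy-atom peeling you propose addresses termination but not this blowup.

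The paper avoids this precisely by \emph{not} taking all leaves. It builds the same midpoint-subdivision tree, but then makes a greedy selection $S$ of at most $k+1$ nodes (repeatedly pick the lowest node of measure $\geq \mu(\Tri)/k$, zero out its measure, and continue; finally add the root), augments $S$ with all pairwise LCAs to get $S'$ with $|S'| \leq 2(k+1)$ --- this is exactly the compression step that decouples the count from the depth --- and then takes the four children of every node of $S'$. The resulting $\leq 8k$ arrangement faces each have measure $\leq \mu(\Tri)/k$ and are each either a similar triangle or a ``triangular annulus''; each annulus is then covered by at most $6$ fat triangles, giving the $n + 5n' \leq 18k$ bound. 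The compression-by-LCA and the annulus-covering argument are the two key ideas missing from your write-up; without them, the leaf-count bound you are aiming for is false. (Your Step~1 canonicalization is harmless but unnecessary: midpoint subdivision already produces congruent similar copies of any starting triangle, so no reduction to equilateral is needed.)
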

\begin{proof}
    To simplify the presentation we assume that $\mu(\Tri) = 1$.  We
    recursively build a tree on $\Tri$ by partitioning the original
    triangle $\Tri$ into $4$ similar triangles as shown in
    \figref{t:partition}. Each node of this tree corresponds to a
    triangle from this recursive construction. We stop the recursive
    partition for a node $v$ as soon as the measure of the triangle
    $\Tri_v$ associated with it is at most $1/k$.
    
    \parpic[r]{%
       \begin{minipage}{0.2\linewidth}%
           \includegraphics{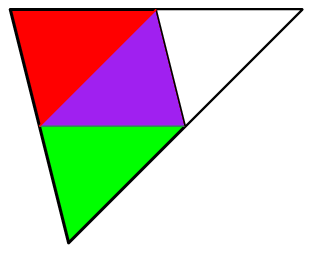}%
           \vspace{-1.0cm}%
           \caption{\hspace*{-1.3cm}}%
           \figlab{t:partition}
       \end{minipage}}
    
    Once we have the tree, we select a set $S$ of nodes of the tree as
    follows. We find the lowest node $v$ in the tree such that the
    measure of its corresponding triangle is at least $1/k$. We add
    the node $v$ to $S$ and we treat the measure inside the triangle
    corresponding to $v$ as being $0$. We repeat this process until
    the measure left uncovered is smaller than $1/k$, at which point
    we take the lowest node covering the remaining measure and we add
    it to $S$. We also add the root of the tree to $S$. Note that the
    set $S$ contains at most $k + 1$ nodes: since the initial measure
    is one, we added at most $k$ nodes to $S$ that are not the root.
    \parpic[r]{%
       \begin{minipage}{0.25\linewidth}%
           \includegraphics{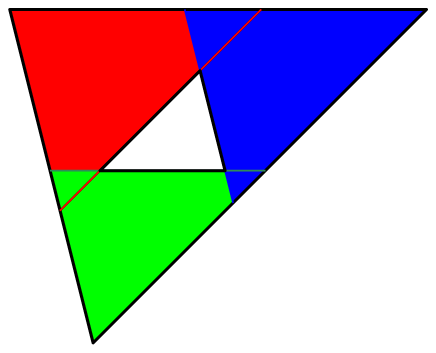}%
           \vspace{-1.cm} \caption{\hspace{-1.7cm}}%
           \figlab{triangle:hole}
       \end{minipage}}
    
    We also add all the nodes in this tree that are the least common
    ancestor (\LCA) of a pair of nodes in $S$. Let $S'$ be the
    resulting set of nodes. Now we can show that, for any tree $T$ and
    any subset $R$ of nodes of $T$, the set of all \LCA nodes of all
    of the pairs of nodes in $R$ has size at most $|R| -
    1$\footnote{Let $u, v$ be the pair of nodes in $R$ whose \LCA has
       maximum depth. Let $z$ be the \LCA of $u$ and $v$, and let $R'
       = R - \{u, v\} \cup \{z\}$. The pairs in $R$ and the pairs in
       $R'$ have the same set of \LCA nodes and thus the number of
       \LCA nodes of a set of size $r$ satisfies the recurrence $f(r)
       \leq f(r - 1) + 1$ and $f(2) = 1$.}.  Therefore there are at
    most $|S| - 1$ nodes in $S' - S$ and thus the size of $S'$ is at
    most $2|S| \leq 2(k + 1)$.
    Let $\TriSet$ be the set of triangles induced by the triangles
    corresponding to the children of the nodes of $S'$ (i.e., every
    node of $S'$ gives rise to four triangles).  Consider the
    partition of the original triangle formed by $\TriSet$. It is easy
    to verify that every face in this arrangement has measure at most
    $1/k$, and the number of faces of this arrangement, denoted by
    $n$, is at most $8k$ (observe that since the triangles arise out
    of a recursive partition, a pair of such triangles is either
    disjoint or contained in each other). Furthermore, each face of
    this arrangement is either a triangle (which is a scaled and
    rotated copy of the original triangle), or the difference of two
    triangles where one contains the other (having this property is
    why we took the children of every \LCA). We will refer to a face
    which is the difference of two triangles as an annulus face.  Let
    $n'$ be the number of annulus faces in this arrangement.  Clearly,
    $n' \leq \cardin{S'}=2k$, as one can charge an annulus face to the
    node of $S'$ that induced the hole in this face.
    
    \parpic[r]{%
       \begin{minipage}{0.25\linewidth}%
           \includegraphics{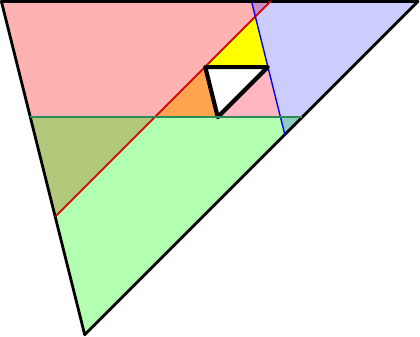}%
           \vspace{-1.01cm}%
           \caption{\hspace{-1.7cm}}%
           \figlab{triangle:hole:2}
       \end{minipage}}
    
    We claim that an annulus face can be covered by the union of six
    translated and rotated copies of the original triangle.  The easy
    case, is when the hole is a scaled and translated reflection of
    the outside triangle, see \figref{triangle:hole}, where three
    triangles are sufficient. The other case is when the hole is a
    scaled translated copy of the outer triangle, which by attaching
    to the hole three translated and rotated copies of the hole, gets
    reduced to the other case, see \figref{triangle:hole:2}. (The
    property used here implicitly is that the outer triangle of the
    annulus can be partitioned into translated and rotated copies of
    the hole triangle, as the hole arises out of a recursive partition
    of the outer triangle.)
    
    As such, there are $n-n'$ triangular faces in this arrangement and
    $n'$ annulus faces. Thus, the original triangle can be covered by
    $(n - n') + 6n' = n + 5n' \leq 8k + 10 k \leq \constTriCoverM k$
    translated and scaled copies of the original triangle that cover
    it completely, and no triangle in this collection has measure that
    exceeds $1/k$.
\end{proof}

\begin{remark}
    Given a weighted set of $n$ points defining the measure inside the
    given fat triangle, the cover of \lemref{f:triangle:m:p} can be
    computed in $O(n \log n)$ time. This requires using known
    techniques used in constructing compressed quadtrees, see
    \cite{h-gaa-11} for details.
    
    \remlab{algorithmic:f:t:p}
\end{remark}

\subsection{The result}

\begin{theorem}
    Given an instance of $(\VSet,\TriSet)$ of \PackPntsInFTri (with
    arbitrary capacities), one can compute, in polynomial time, a
    subset $\VSetA \subseteq \VSet$ that is $\pth[]{O\pth{ \log^6
          \Energy}, \FTriConst }$-approximation to the optimal
    solution.
    
    
    \thmlab{pack:points:in:triangles}
\end{theorem}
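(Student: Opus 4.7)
The plan is to reduce the arbitrary-capacity case to the unit-capacity case already handled by \lemref{f:t:first}, mirroring the rectangle reduction in \thmref{pack:points:in:rects}. First, I would solve the \HyperLP relaxation to obtain a feasible fractional solution $x^*$ of value $\Opt$, and define a measure $\mu$ on the plane by placing mass $x^*_\pnt$ at each point $\pnt \in \PntSet$. Note that feasibility of $x^*$ means $\mu(\Tri) \leq \capacityX{\Tri}$ for every $\Tri \in \TriSet$.

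Next, I would decompose the triangles into a unit-capacity sub-instance. Fix a sufficiently large constant $V$ (concretely $V = 2\,\constTriCoverM\,\FTriConst$). For each $\Tri \in \TriSet$ with $\mu(\Tri) > V$, apply \lemref{f:triangle:m:p} with $k = \ceil{\mu(\Tri)/V}$ to cover $\Tri$ by at most $\constTriCoverM k = O(\mu(\Tri))$ fat triangles, each carrying $\mu$-measure at most $\mu(\Tri)/k \leq V$; triangles with $\mu(\Tri) \leq V$ are kept as is. Let $\TriSetA$ denote the resulting collection. Then $x^*/V$ is a feasible solution for the unit-capacity instance $(\VSet, \TriSetA)$, since every triangle in $\TriSetA$ has $\mu$-measure at most $V$, and its objective value is $\Opt/V = \Omega(\Opt)$.

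Now apply \lemref{f:t:first} to $(\VSet, \TriSetA)$ to obtain a subset $\VSetA \subseteq \VSet$ of weight $\Omega(\Opt/\log^6 \Energy)$ such that every triangle in $\TriSetA$ contains at most $\FTriConst$ points of $\VSetA$. It remains to verify that $\VSetA$ is an $\pth[]{O(\log^6\Energy),\FTriConst}$-approximation for the original instance. If $\Tri \in \TriSet$ was not split, then $\Tri \in \TriSetA$ and so $|\VSetA \cap \Tri| \leq \FTriConst$. If $\Tri$ was split into $\constTriCoverM k$ subtriangles whose union covers it, then a union bound gives $|\VSetA \cap \Tri| \leq \FTriConst \cdot \constTriCoverM k$; using $k \leq 2\mu(\Tri)/V$ (valid because $\mu(\Tri) > V$) together with the choice of $V$ yields $|\VSetA \cap \Tri| \leq \mu(\Tri) \leq \capacityX{\Tri}$. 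In either case $|\VSetA \cap \Tri| \leq \max(\FTriConst, \capacityX{\Tri})$, which is the desired bi-criteria guarantee.

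The main technical subtlety is balancing the two regimes: for small-load triangles the slack is absorbed by the bi-criteria constant $\FTriConst$, while for large-load triangles the original capacity is restored via the linear-in-$\mu(\Tri)$ covering from \lemref{f:triangle:m:p}. The key nontrivial ingredient that enables this reduction is \lemref{f:triangle:m:p} itself, whose constant $\constTriCoverM$ directly dictates the choice of $V$ and hence the validity of the bi-criteria bound; polynomial running time follows from \remref{algorithmic:f:t:p} and the fact that the number of triangles in $\TriSetA$ is polynomial in the input size.
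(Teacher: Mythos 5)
Your proposal is correct and follows essentially the same route as the paper: both arguments reduce to the unit-capacity case via \lemref{f:triangle:m:p}, apply \lemref{f:t:first}, and then verify the capacity bound on each original triangle by a union bound over its covering sub-triangles. The only (inessential) difference is that you choose the split parameter as $k = \ceil{\mu(\Tri)/V}$ based on the fractional mass $\mu(\Tri)$, whereas the paper uses $k = \ceil{\capacityX{\Tri}/c}$ with $c = 4\constTriCoverM\FTriConst$ based directly on the integer capacity; since $\mu(\Tri) \leq \capacityX{\Tri}$ by LP feasibility, both choices yield the same final bound $|\VSetA\cap\Tri| \leq \max(\FTriConst,\capacityX{\Tri})$.
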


\begin{proof}
    Compute a fractional solution to the given instance.  For any
    triangle $\Tri$ in the plane, we denote by $\EnergyX{\Tri} =
    \sum_{\pnt \in \VSet \cap \Tri} x_\pnt$ the total mass of the
    fractional solution inside $\Tri$.  Next, we get a constant
    capacity instance out of $(\VSet,\TriSet)$ by replacing each
    triangle of $\TriSet$ by a ``few'' triangles covering it, such
    that the total mass of the fractional solution inside each of
    these new triangles is at most $c = 4 \cdot \constTriCoverM \cdot
    \FTriConst$. Formally, consider a triangle $\Tri \in \TriSet$, and
    let $k = \ceil{\capacityX{\Tri}/c}$.  If $\capacityX{\Tri} \leq c$
    then there is nothing to do (as $\EnergyX{\Tri} \leq
    \capacityX{\Tri} \leq c$), so we assume that $\capacityX{\Tri} >
    c$.  Applying the algorithmic version of \lemref{f:triangle:m:p},
    see \remref{algorithmic:f:t:p}, we cover $\Tri$ with at most
    \[
    \constTriCoverM k%
    =%
    \constTriCoverM \ceil{\frac{\capacityX{\Tri}}{c}}%
    =%
    \constTriCoverM \ceil{\frac{\capacityX{\Tri}}{ 4 \cdot
          \constTriCoverM \cdot \FTriConst}}%
    \leq%
    \ceil{\frac{\capacityX{\Tri}}{2\cdot \FTriConst}}
    \]
    triangles, where the total mass of the fractional solution inside
    each of them is at most $\EnergyX{\Tri}/k \leq \capacityX{\Tri}/
    \ceil{\capacityX{\Tri}/c} \leq c$.
    
    
    Now, consider the generated instance with these new triangles,
    where each such triangle has capacity one. To this end, scale down
    the solution of the \LP by a factor of $c$. Clearly, we now have a
    uniform capacity instance with an associated (valid) fractional
    solution having value $\Omega(\Opt)$ (where $\Opt$ is the optimal
    \LP value for the original instance). Furthermore, any solution to
    this unit capacity instance, would correspond to a solution to the
    original instance (since we covered each original triangle with at
    most $18k\leq \ceil{\frac{\capacityX{\Tri}}{2\cdot
          \FTriConst}}\leq \capacityX{\Tri}$ new unit capacity
    triangles).  Plugging this instance into \lemref{f:t:first} yields
    the required approximation. Specifically, every triangle $\Tri \in
    \TriSet$ contains at most $ \FTriConst \ceil{\capacityX{\Tri}/ ( 2
       \cdot \FTriConst )} \leq \max\pth{ \FTriConst, \capacityX{\Tri}
    }$ points of the computed set of points.
\end{proof}

\subsection{Canonical decomposition for fat triangles}
\seclab{canonical:sets}

In this section, we show that given a set $\PntSet$ of $n$ points in
the plane, and a parameter $k$, one can compute a set $\TriSetA$ of
$O\pth{ k^3 n\log^2 n }$ regions, such that for any $\alpha$-fat
triangle $\Tri$, if $\cardin{\Tri \cap \PntSet} \leq k$, then there
exists (at most) $\FTriConst$ regions in $\TriSetA$ whose union has
the same intersection with $\PntSet$ as $\Tri$ does.

Our construction follows closely the argumentation of Aronov \etal
\cite{aes-ssena-10}. However, our construction is (somewhat) different
and (arguably) simpler since we are considering a ``dual'' problem to
theirs. In particular, since modifying Aronov \etal
\cite{aes-ssena-10} to get our result is not obvious, we present it
here in detail.

\subsubsection{Initial setup}

To construct the set of regions, $\TriSetA$, we will use an approach
similar to that of \lemref{rect:set}.  As was observed in
\cite{aes-ssena-10}, we can restrict our attention to axis aligned
right triangles whose hypotenuse differs by no more than say one
degree from $-45^\circ$, as measured from the positive $x$-axis
(i.e. it is near isosceles and faces to the right).  In the following,
let $\Tri$ be an arbitrary such triangle that contains at most $k$
points.

\parpic[r]{%
   \begin{minipage}{0.3\linewidth}%
       \includegraphics[width=0.98\linewidth]{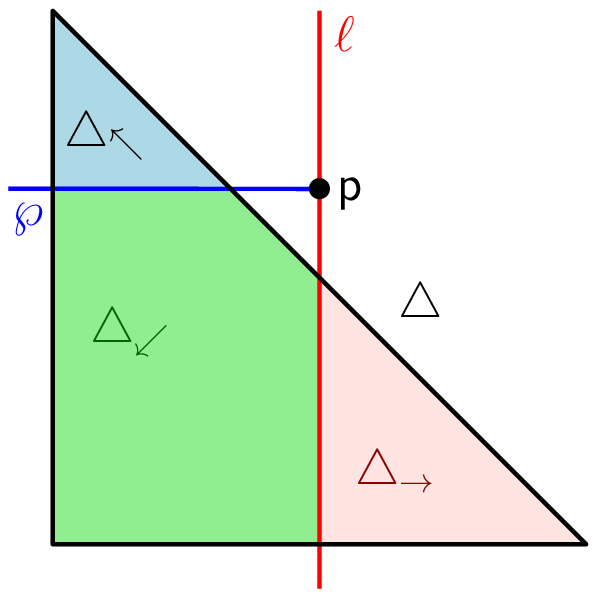}%
       \caption{Decomposing $\Tri$.}  \figlab{triangle:cut}
   \end{minipage}}

We first construct a two level interval tree on $\PntSet$, where the
first level partitions the points based on their $x$-coordinate, and
the second level based on their $y$-coordinate (and the splitting line
for each node goes through the median point).  Let $v$ be the highest
node in the first level of the interval tree whose corresponding split
line, $\Line$, intersects $\Tri$.  Let $\TriL$ and $\TriR$ denote the
portion of $\Tri$ to the left or right of $\Line$, respectively. Also,
let $u$ be the highest node in the second level tree rooted at the
left child of $v$ whose corresponding split line, $\LineA$, intersects
$\TriL$, and let $\TriLT$ and $\TriLB$ denote the portion of $\TriL$
above or below $\LineA$, respectively.  (Note that we may assume that
there exists split lines $\Line$ and $\LineA$ that intersect $\Tri$
and $\TriL$, respectively, since such regions that contain no points
can be skipped).  In the following, let $\pnt$ denote the point of
intersection between $\Line$ and $\LineA$. See \figref{triangle:cut}.

We now construct sets of canonical regions, $\TriSetR$, $\TriSetLT$,
and $\TriSetLB$ such that for any choice of $\Tri$ there exists
constant number of regions $\regionA_1, \ldots, \regionA_m$ in
$\TriSetR \cup \TriSetLT \cup \TriSetLB$, such that $\Tri \cap \PntSet
= \bigcup_i \pth{ \regionA_i\cap\PntSet }$, and $m\leq \FTriConst$.

We achieve this by showing that in each case (i.e., $\TriLT, \TriLB$
and $\TriR$) the region $\regionA$ under consideration can be
transformed into a polygonal region with a constant number of points
of $\PntSet$ (or orientations) defining its bounding edges, and whose
intersection with $\PntSet$ is the same as $\regionA$\footnote{For
   each point on the bounding edges, we will need to specify whether
   it is inside or outside the canonical region.  This can be encoded
   by a string of length $c$, where $c$ is some constant bounding the
   number of defining boundary points.  Hence we can specify the
   inclusion or exclusion of the boundary points while only increasing
   the number of canonical regions by a factor of $2^c = O(1)$, and
   hence we will not need to worry about such issues.}.  We then show
that the number of such regions needed for a particular choice $\Line$
and $\LineA$ is $O(nk^3)$.

In the following, let $\PntSet_v$ be the subset of points of $\PntSet$
stored in the subtree rooted $v$, and let $\PntSet_{u,v}$ be the set
of points stored in the subtree rooted at $u$.

\subsubsection{Handling the right portion of the triangle ($\TriR$)}


\parpic[r]{%
   \begin{minipage}{0.3\linewidth}%
       \includegraphics[width=0.98\linewidth]{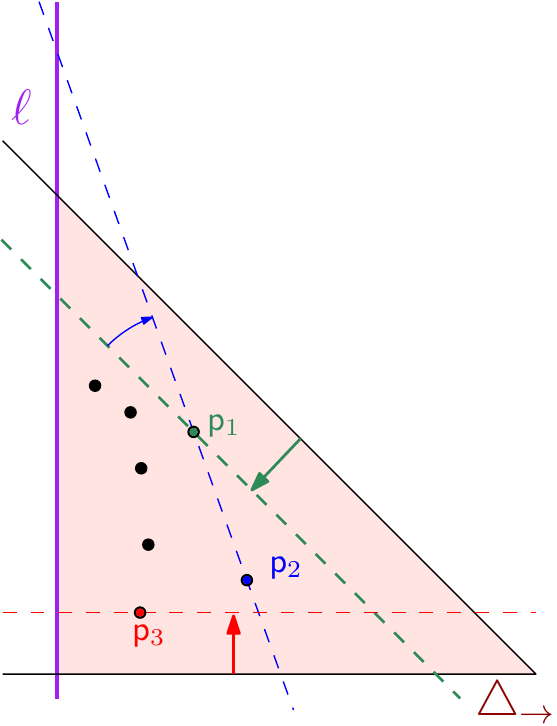}%
       \vspace{-0.3cm}%
       \caption{Handling $\TriR$.}%
       \figlab{triangle:right}%
   \end{minipage}}%

For any $\Tri$, we know that $\TriR$ will be a homothet of $\Tri$,
whose vertical edge lies on $\Line$, see \figref{triangle:cut}.  We
now transform $\TriR$ uniquely such that two points of $\PntSet$ lie
on its hypotenuse (or one point and the hypotenuse is at an angle of
$-46^\circ$) and one point of $\PntSet$ lies on its bottom edge.
Start by translating the hypotenuse towards the lower left corner of
$\TriR$ (while clipping it to $\TriR$) until it hits a point,
$\pnt_1$.  Next rotate the hypotenuse clockwise around $\pnt_1$ until
it hits a second point $\pnt_2$, or its orientation is $-46^\circ$ (as
we rotate we modify its length so that one endpoint of the hypotenuse
stays on $\Line$ and the other on the base of $\TriR$).  Next
translate the base of $\TriR$ straight upwards (while clipping it and
the hypotenuse as to maintain a right triangle) until it hits a third
point $\pnt_3$ (which may be the same as the rightmost point out of
$\pnt_1$ and $\pnt_2$).  Observe that the resulting region has the
same intersection with $\PntSet$ as $\TriR$ (except maybe for the
points on the boundary). See \figref{triangle:right}.

\parpic[r]{%
   \begin{minipage}{0.3\linewidth}%
       \includegraphics[width=0.98\linewidth]{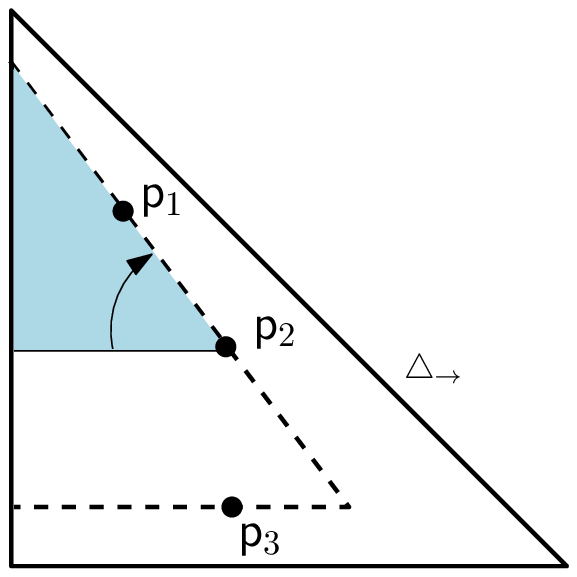}%
       \vspace{-0.3cm} \caption{}%
       \figlab{t:r:b}
   \end{minipage}}

We now bound the number of such resulting regions.  Assume that
$\pnt_2$ lies to the right of $\pnt_1$ (the other case is handled
similarly).  There are $n_v = \cardin{\PntSet_v}$ possible choices for
$\pnt_2$.  Now consider the horizontal line segment that connects
$\Line$ and $\pnt_2$.  Rotate this segment clockwise around $\pnt_2$
(while increasing its length so that the other endpoint stays on
$\Line$) until it hits $\pnt_1$, see \figref{t:r:b}.  We know that all
the points we hit in this sweeping process lie in the computed region,
and hence we can only have swept over $k$ points before reaching
$\pnt_1$ (i.e. given $\pnt_2$ there are at most $k$ choices for
$\pnt_1$.  If $\pnt_2$ does not exist we start with the triangle
formed by $\Line$ and a horizontal and $-46^\circ$ line through
$\pnt_1$).  Now imagine translating the horizontal segment connecting
$\pnt_2$ and $\Line$ straight downward till we hit $\pnt_3$ (while
increasing its length so that its right endpoint stays on the
hypotenuse defined by $\pnt_1$ and $\pnt_2$).  Again we know that all
the points we hit in this sweeping process must be in our canonical
region, and hence we can only have swept over $k$ points before
reaching $\pnt_3$ (i.e. given $\pnt_2$ and $\pnt_1$, there are at most
$k$ choices for $\pnt_3$).

Hence there are $O\pth{ n_v k^2 }$ such canonical regions for the node
$v$.  Since $\PntSet_{v_i}\cap \PntSet_{v_j} =\emptyset$ for any $v_i,
v_j$ at the same level in the top layer tree, summing across a given
level gives $O(nk^2)$ canonical regions, where $n = \cardin{\PntSet}$.
Thus, summing over all nodes in the top layer tree gives $O\pth{ n k^2
   \log n }$ such canonical regions overall.

%

\subsubsection{Handling the top left portion of the triangle
   ($\TriLT$)}

Here we must consider two cases, based on the possible locations of
$\pnt$.  If $\pnt \notin \TriLT$ (see \figref{triangle:cut}), we have
a homothet of $\Tri$ whose bottom edge lies on $\LineA$, and therefore
we can argue as in the $\TriR$ case, that this gives rise to $O\pth{
   n_{u,v} k^2 }$ different canonical regions, where $n_{u,v} =
\cardin{\PntSet_{u,v}}$.  Summing over all possible nodes $u$ and $v$
gives $O\pth{ n k^2 \log^2 n}$ such canonical regions overall.

\parpic[r]{%
   \begin{minipage}{0.25\linewidth}%
       \includegraphics[width=0.98\linewidth]{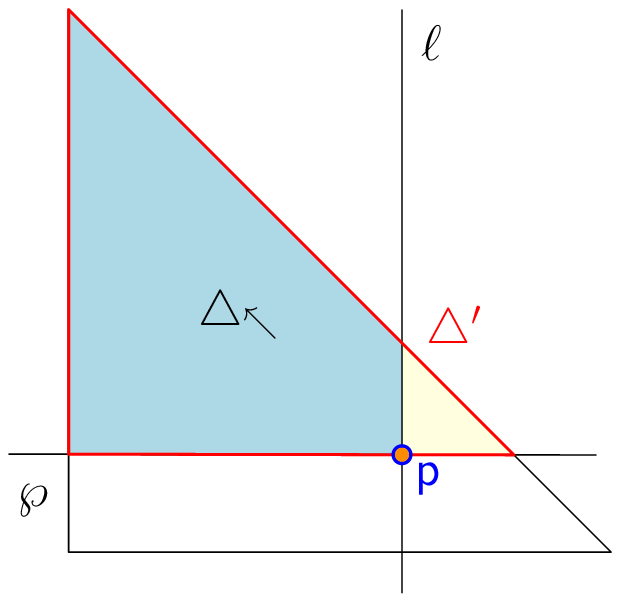}%
       \vspace{-0.3cm} \caption{}%
       \figlab{t:a:b}
   \end{minipage}}

Now suppose that $\pnt \in \TriLT$, see \figref{t:a:b}.  In this case,
we can extend $\TriLT$ to get a homothet of $\Tri$ whose right side
was cut off by $l$ in order to get $\TriLT$ ($\Tri'$ in
\figref{t:a:b}).  Clearly, we have that $\PntSet_{u,v} \cap \TriLT =
\PntSet_{u,v} \cap \Tri'$. We can now generate a canonical region for
$\Tri'$ in a similar fashion as the $\TriR$ case, since it is just a
homothet of $\Tri$ with its base lying on $\LineA$, and then we can
cut off the portion to the right of $l$.  This would imply that we can
generate $O\pth{n_{u,v} k^2}$ such canonical regions for the nodes $u$
and $v$, and so overall there are $O\pth{ n k^2 \log^2 n}$ such
canonical regions.

\subsubsection{Handling the bottom left portion of the triangle
   ($\TriLB$)}

\vspace{-1cm}
\parpic[r]{%
   \begin{minipage}{0.2\linewidth}%
       \includegraphics{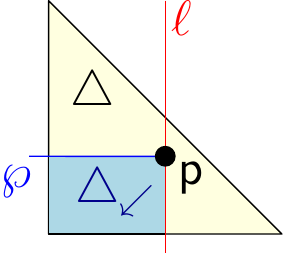}%
       \vspace{-0.3cm} \caption{}%
       \figlab{t:b:a} 
       \vspace{-0.51cm}
   \end{minipage}}

\vspace{1cm}

Again we consider two cases, based on the possible locations of
$\pnt$.  If $\pnt \in \TriLB$ (see \figref{t:b:a}), then $\TriLB$ is
an axis parallel rectangle such that one of its sides lies on $\Line$
(and another side lies on $\LineA$).  Hence by the proof of
\lemref{rect:set}, in this case $\TriLB$ gives rise to $O\pth{ k^2 n
   \log n}$ canonical regions overall.

\parpic[r]{%
   \begin{minipage}{0.25\linewidth}%
       \includegraphics{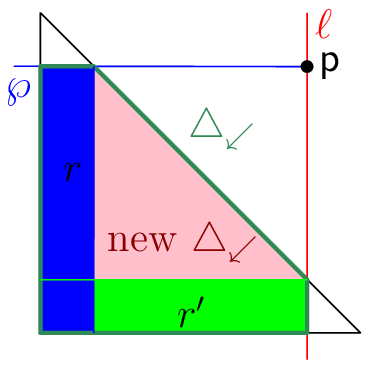}%
       \vspace{-0.6cm} \caption{}%
       \figlab{remove} 
   \end{minipage}}

Now we consider (what is by far) the hardest case, when $\pnt \notin
\TriLB$.  In order to handle this case we will need to break up
$\TriLB$ as follows.
Observe that $\TriLB$ is a rectangular region whose upper right corner
was cut off by the hypotenuse of $\Tri$.  First, we reduce $\TriLB$
into a homothet of $\Tri$, by removing rectangles $r$ and $r'$ from
the left and bottom parts of $\TriLB$, respectively (see
\figref{remove}).  This can be done since we already observed that by
the proof of \lemref{rect:set} we can construct a set of $O\pth{ nk^2
   \log^2 n}$ canonical rectangles such that any rectangle (with a
side on one of the split lines) has the same intersection with
$\PntSet$ as one of the canonical rectangles. For simplicity we
continue to refer to the remaining part of $\TriLB$ as just $\TriLB$.
\remove{
   \begin{figure}
       \TwoFigures{scale=.6}{figs/t_b_a}{(a)}
       {scale=.6}{figs/t_b_b}{(b)}
       \TwoFigures{scale=.6}{figs/remove}{(c)}
       {scale=.6}{figs/decomposition}{(d)}
       \caption{(a) $\TriLB$ contains $\pnt$.  (b) $\TriLB$ does not
          contain $\pnt$.  (c) The removed regions $r$ and $r'$.  (d)
          The decomposition of $\TriLB$ into $t_b^+$, $t_b^0$, and
          $t_b^-$.}
       
       \figlab{t_b}
   \end{figure}
}

\parpic[l]{%
   \begin{minipage}{0.24\linewidth}%
       \includegraphics[scale=0.9]{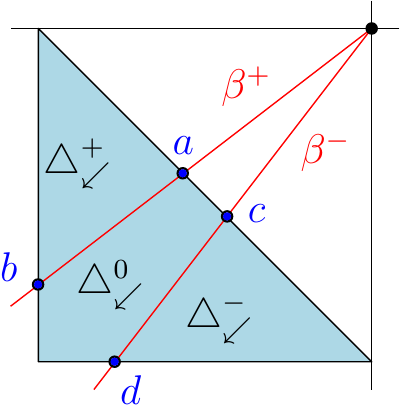}%
       \vspace{-0.4cm}%
       \caption{}%
       \figlab{decomp} 
   \end{minipage}}

We now break up $\TriLB$ into three regions.  Let $\beta^+$ and
$\beta^-$ denote the rays emanating from $\pnt$ at angles $-140^\circ$
and $-130^\circ$, respectively (again, as measured clockwise from the
positive $x$-axis).  These two lines split $\TriLB$ into three
regions, which we will denote in their counterclockwise order as
$\TriLB^+$, $\TriLB^0$ and $\TriLB^-$ (see \figref{decomp}).  Let $a$
and $b$ denote the intersection of $\beta^+$ with the hypotenuse and
left edge of $\TriLB$, respectively.  Similarly, let $c$ and $d$
denote the intersection of $\beta^-$ with the hypotenuse and bottom
edge of $\TriLB$, respectively.

\vspace{0.3cm}
\ParPicBeforeParagraph{%
   \parpic[r]{%
      \begin{minipage}{0.35\linewidth}%
          \includegraphics[width=0.95\linewidth]{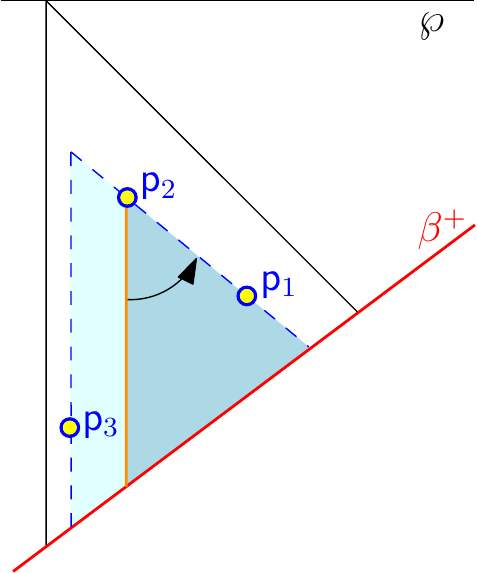}%
          \vspace{-1cm}%
          \caption{}%
          \figlab{t:bplus}%
          \vspace{0.6cm}
      \end{minipage}} }
\vspace{0.3cm}

\paragraph{Handling the top and bottom parts of $\TriLB$ (i.e.,
   $\TriLB^+$ and $\TriLB^-$).}

We now construct the canonical regions for $\TriLB^+$.  The
construction is nearly identical to that for $\TriR$ and is included
for the sake of completeness.  The construction for $\TriLB^-$ is
omitted as it is symmetric to the $\TriLB^+$ case.

Start by translating the part of the boundary that intersects the
hypotenuse of $\Tri$ towards the lower left corner of $\TriLB^+$
(while clipping it to $\TriLB^+$) until it hits a point, $\pnt_1$.
Next rotate this edge counterclockwise around $\pnt_1$ until it hits a
second point $\pnt_2$, or its orientation is $-44^\circ$ (as we rotate
we modify its length so that one endpoint stays on $\beta^+$ and the
other on the boundary $\TriLB^+$).  Next translate the vertical edge
of $\TriLB^+$ to the right (while clipping it to $\TriLB^+$) until it
hits a third point, $\pnt_3$.

As for the number of such resulting regions, assume that $\pnt_2$ lies
to the left of $\pnt_1$ (the other case is handled similarly).  There
are $n_{u,v}=\cardin{\PntSet_{u,v}}$ possible choices for $\pnt_2$.
Now consider the vertical line segment that connects $\beta^+$ and
$\pnt_2$.  Imagine rotating this segment counterclockwise around
$\pnt_2$ (while increasing its length so that the other endpoint stays
on $\beta^+$) until it hits $\pnt_1$.  We know that all the points we
hit in this sweeping process must be in our canonical region, and
hence we can only have swept over $k$ points before reaching $\pnt_1$
(if $\pnt_2$ does not exist we start with the triangle formed by
$\beta^+$ and a vertical and $-44^\circ$ line through $\pnt_1$).  Now
imagine translating the vertical segment connecting $\pnt_2$ and
$\beta^+$ to the left until we hit $\pnt_3$ (while increasing its
length so that its top endpoint stays on the line defined by $\pnt_1$
and $\pnt_2$ and its bottom endpoint on $\beta^+$).
Again we know that all the points we hit in this sweeping process must
be in our canonical region, and hence we can only have swept over $k$
points before reaching $\pnt_3$.  Hence there are $O\pth{ n_{u,v} k^2
}$ such canonical regions for a pair nodes $u$ and $v$.  Thus overall
there are $O\pth{ n k^2 \log^2 n }$ such canonical regions.

\parpic[r]{%
   \begin{minipage}{0.5\linewidth}%
       \includegraphics{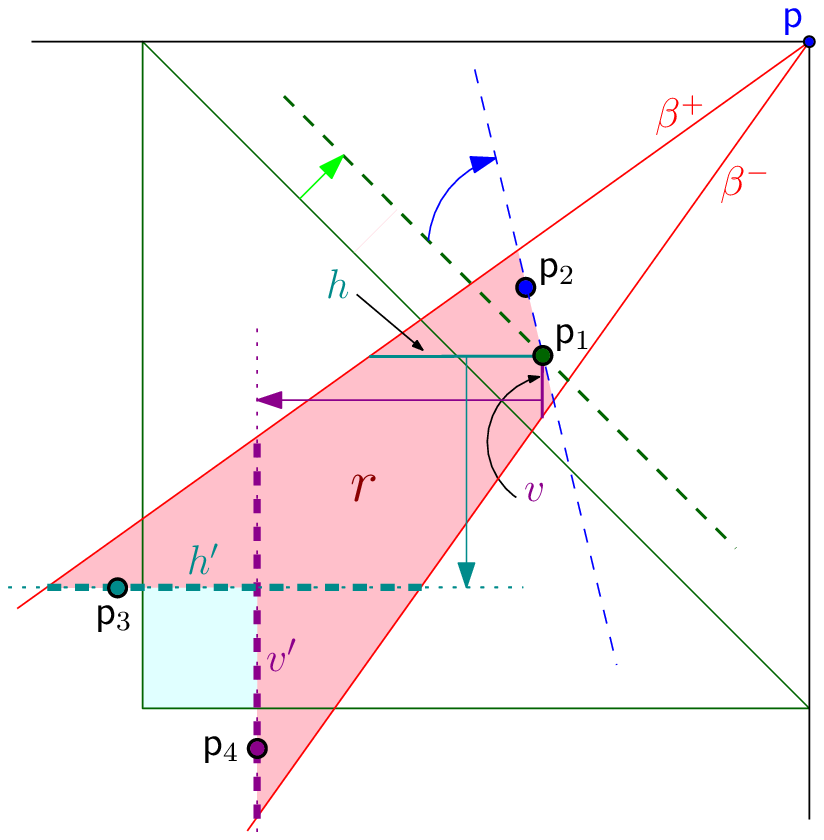}%
       \vspace{-0.8cm} %
       \caption{}%
       \figlab{triangle:middle}
   \end{minipage}}%

\paragraph{Handling the middle part of $\TriLB$ (i.e., $\TriLB^0$).}

Let $\PntSet_{u,v}^0$ denote the subset of $\PntSet_{u,v}$ that lies
in between $\beta^+$ and $\beta^-$.  Let the intersection of the
hypotenuse of $\Tri$ with $\TriLB^0$ be called the hypotenuse.

Translate the hypotenuse towards $\pnt$ (while clipping it to
$\TriLB^0$) until it hits a point $\pnt_1$.  Then rotate the
hypotenuse clockwise around $\pnt_1$ until it hits a point $\pnt_2$,
or it becomes vertical.  Without loss of generality, assume that
$\pnt_1$ lies to the right of $\pnt_2$.  Let the horizontal
(resp. vertical) line connecting $\pnt_1$ and $\beta^+$
(resp. $\beta^-$) be called $h$ (resp. $\nu$).  Translate $h$
downwards (resp. $\nu$ to the left), while enlarging it so that one
endpoint stays on $\beta^+$ (resp. $\beta^-$), until either it hits
the lowest (resp. furthest to the left) point of $\TriLB^0$ or a point
outside of $\TriLB^0$.  Let this point be denoted $\pnt_3$
(resp. $\pnt_4$), and let $h'$ (resp. $\nu '$) be the final
translation of $h$ (resp. $\nu$). See \figref{triangle:middle}.

Consider the region, $\regionA$, bounded by the portion of $h'$ to the
left of $\pnt_4$, the portion of $\nu '$ below $\pnt_3$ , $\beta^+$,
$\beta^-$, and the line going through $\pnt_1$ and $\pnt_2$ (this is
the red shaded region in \figref{triangle:middle}).  First observe
that if both $\pnt_3$ and $\pnt_4$ lie outside of $\TriLB^0$ then
$\regionA$ will not cover all the points in $\TriLB^0\cap P_{u,v}^0$.
Namely, the points lying in the rectangle defined by $h'$, $\nu '$,
and the vertical and horizontal edges of $\TriLB^0$ might not be
covered by $r$ (see \figref{triangle:middle}).  However, we already
constructed a set of $O\pth{ k^2n\log n }$ canonical rectangles, which
we know contains two canonical rectangles that cover these points, and
as such we do not have to worry about covering these points.  Clearly,
all the points of $\TriLB^0\cap P_{u,v}^0$ either lie in this
rectangle or in $\regionA$.  Next observe that there are no points of
$\PntSet_{u,v}^0$ that lie in $r$ that are not in $\TriLB\cap
P_{u,v}^0$.  This follows from the easily proven fact (i.e. tedious
but straightforward arguments) that since the hypotenuse was within
one degree of $-45^\circ$, that $h\cap \beta^+$ lies to the right of
$b$ and $\nu\cap \beta^-$ lies above $d$.

We now bound the number of canonical regions of type $r$.  There are
$\cardin{P_{u,v}^0}$ possible choices for $\pnt_1$ (which again we
assume is to the right of $\pnt_2$).  Now consider rotating $h$
clockwise around $\pnt_1$ until we hit $\pnt_2$.  We know from above
that all the points we sweep past in this process must be contained in
$\TriLB^0\cap P_{u,v}^0$ and so given $\pnt_1$ there are at most $k$
possible choices for $\pnt_2$.  Now consider translating $h$ downward
(resp. $\nu$ to the left) until we hit $\pnt_3$ (resp. $\pnt_4$).
Again, from above we know that all the points we sweep over in this
process must be contained in $\TriLB^0\cap P_{u,v}^0$ and so given
$\pnt_1$ and $\pnt_2$, there are at most $k$ possible choices for
$\pnt_3$ (resp. $\pnt_4$).  Hence there are $O\pth{ k^3
   \cardin{P_{u,v}^0} }$ such canonical regions for a given pair of
nodes $u$ and $v$, and so overall there are $O\pth{ k^3n\log^2 n }$
such canonical regions.

\subsubsection{Putting things together}
Summing the above bounds over all choices of the nodes $u$ and $v$
results overall in $O\pth{ k^3 n\log^2 n }$ canonical regions.
Furthermore, for any choice of $\Tri$, we showed above that there
exists a set of at most $\FTriConst$ of theses canonical regions whose
(union of) intersections with $\PntSet$ is the same as that of $\Tri$.
We thus get the following result.

\begin{theorem}
    Given a set $\PntSet$ of $n$ points in the plane, and parameters
    $k$ and $\alpha > 0$, one can compute a set $\TriSetA$ of $O\pth{
       k^3 n\log^2 n }$ regions, such that for any $\alpha$-fat
    triangle $\Tri$, if $\cardin{\Tri \cap \PntSet} \leq k$, then
    there exists (at most) $\FTriConst$ regions in $\TriSetA$ whose
    union has the same intersection with $\PntSet$ as $\Tri$ does.
    
    \thmlab{canonical:fat}
\end{theorem}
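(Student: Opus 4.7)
The plan is to mirror the approach used for canonical rectangles in \lemref{rect:set}: reduce to a canonical family of triangle shapes, decompose any such triangle through an interval tree on $\PntSet$ into a constant number of simple pieces, and canonicalize each piece by a sweeping argument that exploits $|\Tri \cap \PntSet| \leq k$. Following Aronov \etal, it suffices to handle axis-aligned right triangles whose hypotenuse is within one degree of orientation $-45^\circ$; the general $\alpha$-fat case then follows by taking a constant (depending on $\alpha$) number of rotated copies of the canonical family. From here on, every triangle is of this canonical shape.

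First I would build a two-level interval tree over $\PntSet$: the top level splits by median $x$-coordinate, and each secondary tree splits by median $y$-coordinate. Given a query triangle $\Tri$, let $\Line$ be the split line of the shallowest top-level node $v$ that meets $\Tri$, and let $\LineA$ be the split line of the shallowest secondary node $u$ (in the tree attached to the left child of $v$) that meets the portion of $\Tri$ to the left of $\Line$. Writing $\pnt = \Line \cap \LineA$, these two cuts partition $\Tri$ into $\TriR$, $\TriLT$ and $\TriLB$, each supported on the points of a single subtree. For $\TriR$ and for $\TriLT$ (in either sub-case depending on whether $\pnt$ lies inside $\TriLT$) the piece is essentially a homothet of $\Tri$ glued to a split line, so I can canonicalize it by selecting a defining point on each free edge. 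The first defining point ranges over $O(n_v)$ or $O(n_{u,v})$ choices, but each subsequent defining point contributes only a factor of $k$, since any point that the canonicalizing sweep passes over necessarily lies in $\Tri \cap \PntSet$. Summed over all nodes of the interval tree, these two pieces contribute $O(k^2 n \log^2 n)$ canonical regions.

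The main obstacle is $\TriLB$ in the sub-case $\pnt \notin \TriLB$, where $\TriLB$ is a rectangle whose upper-right corner has been clipped by the hypotenuse of $\Tri$ and where no edge of $\TriLB$ is naturally pinned by a point of $\PntSet$. I would first peel two axis-aligned strips off the left and bottom sides of $\TriLB$ --- each handled by a canonical rectangle from \lemref{rect:set} --- to reduce it to a homothet of $\Tri$, and then split the residue into top, middle and bottom sub-regions using two rays from $\pnt$ at fixed angles slightly below $-135^\circ$. The top and bottom sub-regions admit the same three-point pinning argument as before. The middle sub-region is the tightest case: it must be pinned by two points on the hypotenuse and then shrunk horizontally toward $\LineA$ and vertically toward $\Line$ until each of those edges touches a further defining point, and each of these three sweeps can pass over up to $k$ points of $\PntSet$. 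This triple sweeping is what produces the extra factor of $k$, giving $O(k^3 n \log^2 n)$ canonical regions overall and dominating the total count in the theorem.

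Finally, letting $\TriSetA$ be the union of all canonical regions constructed above together with the canonical rectangles from \lemref{rect:set}, any query $\alpha$-fat triangle $\Tri$ with at most $k$ points is reproduced on $\PntSet$ by combining one canonical region for $\TriR$, one for $\TriLT$, and at most seven for $\TriLB$ (namely the two peeled rectangles, one region each for the top, middle and bottom sub-pieces, and up to two extra canonical rectangles to cover the small rectangular corner left uncovered by the middle sub-region). The total is at most $\FTriConst = 9$ canonical regions, proving the claim.
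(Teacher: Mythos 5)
Your proposal reproduces the paper's argument in its essential entirety: the reduction (via Aronov \etal) to nearly-isosceles axis-aligned right triangles, the two-level interval tree producing the three pieces $\TriR$, $\TriLT$, $\TriLB$, the case split on whether $\pnt$ lies in the piece, the peeling of two canonical rectangles off $\TriLB$, the subdivision of the residue by two rays from $\pnt$ into top/middle/bottom sub-pieces, the observation that the middle piece forces four defining points (hence the extra $k$ and the $O(k^3 n\log^2 n)$ total), and the final count $1+1+7=\FTriConst$. The only nit is that the two rays from $\pnt$ should bracket $-135^\circ$ (the paper uses $-130^\circ$ and $-140^\circ$) rather than both lying slightly below it, since the argument that $h\cap\beta^+$ and $\nu\cap\beta^-$ stay inside $\TriLB$ relies on the hypotenuse being roughly perpendicular to both rays --- but this is a detail, and the structure and counting of your proof match the paper's.
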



\section{\PTAS for Unweighted Disks and Points}
\seclab{PTAS}

In this section, we consider instances of the \PackRegions problem in
which the regions are disks with unit weights and all points have unit
capacities. We now outline a \PTAS for such instances based on the
local search technique. The algorithm and proof are an extension of
those of Chan and Har-Peled \cite{ch-aamis-11, ch-aamis-09}, and
Mustafa and Ray \cite{mr-irghs-10}.

\paragraph{The algorithm.}
Since all of the regions have unit weight, we may assume that no
region is completely contained in another. We say that a subset
$\local$ of $\ObjSet$ is \emphi{$b$-locally optimal} if $\local$ is a
pointwise independent set and one cannot obtain a larger pointwise
independent set by removing $\ell \leq b$ regions of $\local$ and
inserting $\ell + 1$ regions of $\ObjSet \setminus \local$.

Our algorithm constructs a $b$-locally optimal solution using local
search, where $b$ is some suitable constant.  We start with $\local
\leftarrow \emptyset$. We consider each subset $X \subseteq \ObjSet
\setminus \local$ of size at most $b + 1$: if $X$ is a pointwise
independent set and the set $Y\subseteq \local$ of regions pointwise
intersecting the objects of $X$ has size at most $\cardin{X}-1$, we
set $\local \leftarrow (\local \setminus Y) \cup X$. Every such swap
increases the size of $\local$ by at least one, and as such it can
happen at most $n=\cardin{\ObjSet}$ times. Therefore the running time
is bounded by $O\pth{n^{b+3} b\cardin{\PntSet}}$, since there are
$\binom{n}{b+1}$ subsets $X$ to consider and for each such subset $X$
it takes $O(n b \cardin{\PntSet})$ time to compute $Y$.

\paragraph{Analysis.}

Let $\Opt$ be the maximum pointwise independent set, and let $\local$
be the $b$-locally optimal solution returned by our algorithm.  If we
can show that the pointwise intersection graph of $\Opt\cup \local$ is
planar then the analysis in \cite{ch-aamis-11} will directly imply
that $\cardin{\local} \geq (1-O(1/\sqrt{b})) \cardin{\Opt}$.

We map the disks in $\Opt$ and $\local$ to sets of points $Q_{\Opt}$
and $Q_{\local}$ in $\Re^3$, respectively, and we map the points in
$\PntSet$ to a set of halfspaces $H_\PntSet$, by using the lifting of
disks to planes and points to rays, and then dualizing the problem
(see \secref{halfRayDisk}).  Mustafa and Ray prove that a range space
defined by a set of points and halfspaces in $\Re^3$ has the
\emphi{locality condition}, which is defined as follows.

\begin{definition}[\cite{mr-irghs-10}]
    A range space $R = (\PntSet,\EuScript{D})$ satisfies the
    \emphi{locality condition} if for any two disjoint subsets $R,B
    \subseteq \PntSet$, it is possible to construct a planar bipartite
    graph $G = (R,B,E)$ with all edges going between $R$ and $B$ such
    that for any $D\in \EuScript{D}$, if $D \cap R \neq \emptyset$ and
    $D \cap B \neq \emptyset$, then there exist two vertices $u \in D
    \cap R$ and $v \in D \cap B$ such that $(u, v) \in E$.
\end{definition}

Since $\Opt$ and $\local$ are both pointwise independent sets, we know
each point in $\PntSet$ can intersect at most one disk from $\Opt$ and
at most one disk from $\local$.  Hence each halfspace in $H_\PntSet$
can contain at most one point from $Q_\Opt$ and at most one point from
$Q_\local$.  Since points and halfspaces in $\Re^3$ have the locality
condition, setting $R=\local$ and $B=\Opt$ immediately implies that
there is a planar graph on the vertex set $\local\cup \Opt$ such that
any vertex from $\local$ and any vertex from $\Opt$ that are in the
same halfspace are adjacent.  In particular, the intersection graph is
planar.

\begin{theorem}
    Given a set of $n$ unweighted disks and a set of $m$ points in the
    plane (with unit capacities), any $b$-locally optimal pointwise
    independent set has size $\geq (1-O(1/\sqrt{b}))\Opt$, where
    $\Opt$ is the size of the maximum pointwise independent set of the
    disks.  In particular, one can compute an independent set of size
    $\geq (1-\eps)\Opt$, in time $mn^{O(1/\eps^2)}$.
    
    \thmlab{pack:uw:disks:into:u:points}
\end{theorem}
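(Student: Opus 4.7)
The plan is to reduce the analysis to the planar-local-search framework of Chan and Har-Peled \cite{ch-aamis-11}. The key technical claim is that the exchange graph between $\local$ and $\Opt$ is planar; once this is in hand, the $(1-O(1/\sqrt{b}))$ approximation ratio follows by the standard argument, and the running time follows directly from the description of the local-search procedure given just above the theorem.

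First, I would lift the problem to three dimensions by using the standard paraboloid lift $(x,y) \mapsto (x,y,x^2+y^2)$ that sends disks in the plane to halfspaces in $\Re^3$ (and points to vertical rays), and then dualize so that the disks of $\Opt \cup \local$ become two sets of points $Q_\Opt, Q_\local \subset \Re^3$ and the input points of $\PntSet$ become a set of halfspaces $H_\PntSet$, exactly as in \secapndref{halfRayDisk}. Under this transformation, a point $\pnt \in \PntSet$ is contained in two disks $\disk, \disk'$ precisely when the halfspace dual to $\pnt$ contains both of the dual points of $\disk$ and $\disk'$.

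Second, I would invoke the Mustafa--Ray locality condition for the range space of points and halfspaces in $\Re^3$. Applying it with $R = Q_\local$ and $B = Q_\Opt$ yields a planar bipartite graph $G = (\local, \Opt, E)$ with the property that whenever a halfspace in $H_\PntSet$ contains a vertex of $R$ and a vertex of $B$, those two vertices are adjacent in $G$. Since both $\local$ and $\Opt$ are pointwise independent with unit capacities, each halfspace of $H_\PntSet$ contains at most one point of $Q_\local$ and at most one point of $Q_\Opt$. Hence every pointwise conflict between a disk $\disk \in \local$ and a disk $\disk' \in \Opt$ (i.e., a common point of $\PntSet$ covered by both) produces an edge of $G$, and therefore the pointwise intersection graph between $\local$ and $\Opt$ is planar.

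Third, with planarity established, I would apply the Chan--Har-Peled local search analysis essentially verbatim: given any $b$-locally optimal pointwise independent set $\local$ whose intersection graph with $\Opt$ is planar, a planar separator argument (forbidding profitable swaps of $\leq b$ elements of $\local$ for $b+1$ elements of $\Opt$) forces $|\local| \geq (1 - O(1/\sqrt{b}))|\Opt|$. The algorithmic statement then follows by choosing $b = \Theta(1/\eps^2)$; the running time is a direct count, since there are $\binom{n}{b+1}$ candidate swap sets to try, each checked in $O(nbm)$ time, for a total of $m n^{O(1/\eps^2)}$. The main obstacle is the planarity step: pointwise intersection is not geometric intersection, so neither the disk arrangement nor any naive intersection graph is automatically planar, and the detour through the 3D lifting together with the Mustafa--Ray locality condition is essential to produce the planar exchange graph that the Chan--Har-Peled analysis demands.
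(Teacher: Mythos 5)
Your proposal is correct and follows essentially the same route as the paper's proof: lift via the paraboloid to $\Re^3$, dualize so disks become points and input points become halfspaces, invoke the Mustafa--Ray locality condition (using unit capacities plus pointwise independence to ensure each halfspace contains at most one point of each color class) to get planarity of the exchange graph, and then apply the Chan--Har-Peled planar-separator local-search analysis with $b=\Theta(1/\eps^2)$.
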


\begin{corollary}
    There is a \PTAS for instances of \PackHalfspaces in which each
    halfspace has unit weight, and each point has unit capacity.
\end{corollary}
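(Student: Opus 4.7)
The plan is to observe that the proof of \thmref{pack:uw:disks:into:u:points} works essentially verbatim when the input is halfspaces in $\Re^3$ and points in $\Re^3$ (instead of disks and points in the plane). The reason the two-dimensional proof goes through via a lifting to $\Re^3$ is precisely because the locality condition of Mustafa and Ray is established natively for points and halfspaces in $\Re^3$; if we start in $\Re^3$, the lifting step is simply not needed.

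Concretely, I would run the same local search algorithm as in the disk case: initialize $\local \leftarrow \emptyset$ and, for a suitable constant $b = \Theta(1/\eps^2)$, repeatedly perform any swap that removes at most $b$ halfspaces from $\local$ and inserts at least one more halfspace of $\HalfspaceSet \setminus \local$, while maintaining the property that every point of $\PntSet$ is contained in at most one selected halfspace. Since each swap strictly increases $|\local|$, the algorithm terminates in time $O\pth{n^{b+3} b\cardin{\PntSet}}$, yielding a $b$-locally optimal pointwise independent set.

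For the approximation bound, let $\Opt$ be an optimal pointwise independent set of halfspaces. Following Chan and Har-Peled, it suffices to exhibit a planar graph on $\Opt \cup \local$ in which every pair of halfspaces that share a point of $\PntSet$ is adjacent; their analysis then yields $|\local| \geq (1 - O(1/\sqrt{b}))|\Opt|$. To produce this graph, dualize the problem: each halfspace in $\Opt \cup \local$ becomes a point in $\Re^3$, giving disjoint point sets $Q_{\Opt}$ and $Q_{\local}$, while each point of $\PntSet$ becomes a halfspace, giving a set $H_{\PntSet}$. The pointwise independence of $\Opt$ and $\local$ implies that every halfspace in $H_{\PntSet}$ contains at most one point of $Q_{\Opt}$ and at most one of $Q_{\local}$. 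The range space of points and halfspaces in $\Re^3$ satisfies the locality condition, so there is a planar bipartite graph on $Q_{\Opt} \cup Q_{\local}$ in which any dual halfspace containing vertices from both sides has such vertices adjacent. This is exactly the planar intersection graph we need.

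The only thing to be careful about is that the locality condition is stated for points and halfspaces in $\Re^3$, which matches our setting directly; in the disk case this required an extra lifting/dualization step, but here it is immediate. Setting $b = \Theta(1/\eps^2)$ then gives a $(1-\eps)$-approximation in time $\cardin{\PntSet}\, n^{O(1/\eps^2)}$, proving the corollary. I do not expect any genuine obstacle, since all the ingredients have already been assembled in the proof of \thmref{pack:uw:disks:into:u:points}.
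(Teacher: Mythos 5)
Your proposal is correct and takes essentially the same approach as the paper: the paper's proof of \thmref{pack:uw:disks:into:u:points} itself works by lifting disks and points in the plane to points and halfspaces in $\Re^3$ (via the paraboloid map and duality), so the corollary about \PackHalfspaces is really just the native version of that argument, with the initial lifting step omitted. You have correctly identified this and reproduced the remaining steps (dualize halfspaces to points and points to halfspaces, invoke the Mustafa--Ray locality condition for points/halfspaces in $\Re^3$, conclude planarity of the conflict graph on $\Opt\cup\local$, and apply the Chan--Har-Peled local-search analysis), so no gap.
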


\begin{corollary}
    There is a \PTAS for instances of \PackRegions in which each
    region is a unit-weight disk, and each point has unit capacity.
\end{corollary}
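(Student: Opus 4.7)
The plan is to observe that this corollary is essentially the packaging of \thmref{pack:uw:disks:into:u:points} in the language of \PackRegions. First I would translate the given instance of \PackRegions into the geometric independent set problem analyzed above: with unit weights and unit point capacities, a feasible subset of disks is exactly a pointwise independent set with respect to $\PntSet$, so maximizing weight amounts to maximizing the cardinality of such a set. Then, given any constant $\eps > 0$, I would run the local search algorithm of \secapndref{PTAS} with $b = \Theta(1/\eps^2)$ and appeal to the theorem to conclude that the returned $b$-locally optimal solution $\local$ satisfies $\cardin{\local} \geq (1-\eps)\cardin{\Opt}$, while the running time $m n^{O(1/\eps^2)}$ is polynomial in $\cardin{\ObjSet}+\cardin{\PntSet}$ for fixed $\eps$.

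If instead I were proving the corollary from scratch without the theorem in hand, the plan would have four steps. Step one: write down the local search procedure that, starting from $\local = \emptyset$, repeatedly replaces up to $b$ disks of $\local$ by up to $b+1$ disks of $\ObjSet\setminus\local$ whenever such a swap preserves pointwise independence and strictly enlarges $\local$; termination in at most $n$ swaps is immediate, and the per-step cost is $O(n^{b+1} b \cardin{\PntSet})$. Step two, which is the heart of the argument, is to show that the pointwise intersection graph $G$ on vertex set $\Opt \cup \local$ (where two disks are adjacent iff some $\pnt \in \PntSet$ lies in both) is planar. Step three invokes the standard local-search-plus-planarity analysis of Chan and Har-Peled to conclude $\cardin{\local} \geq (1 - O(1/\sqrt{b}))\cardin{\Opt}$. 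Step four sets $b = \Theta(1/\eps^2)$.

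The main obstacle is step two, establishing planarity. My plan here is to follow Mustafa and Ray: lift each disk to a point in $\Re^3$ via the paraboloid lift $f(x,y) = (x,y,x^2+y^2)$ and each point of $\PntSet$ to a halfspace, as done in \secapndref{halfRayDisk}, so that a disk contains a point iff the lifted point lies in the dual halfspace. Because both $\Opt$ and $\local$ are pointwise independent, no halfspace in the dualized instance contains more than one lifted vertex from either side, which is precisely the hypothesis needed to apply the Mustafa–Ray locality condition for points and halfspaces in $\Re^3$. This produces a planar bipartite graph on $\local \cup \Opt$ containing every $(\local,\Opt)$ edge of the intersection graph; since $\Opt$ and $\local$ themselves carry no internal edges (being independent sets), the whole intersection graph is planar.

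Once planarity is in place, everything else is routine: the Chan–Har-Peled planar-exchange argument shows that no $b$-swap improvement implies $\cardin{\local} \geq (1-O(1/\sqrt{b}))\cardin{\Opt}$, and the PTAS follows by choosing $b$ appropriately. I do not expect any new difficulty beyond marshalling the lifting and the locality condition; the two technical ingredients the paper has already developed elsewhere combine to yield the corollary immediately.
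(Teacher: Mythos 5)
Your proposal is correct and matches the paper's approach: the corollary is indeed just \thmref{pack:uw:disks:into:u:points} restated in the language of \PackRegions, and your "from scratch" derivation (local search, lifting to $\Re^3$, Mustafa--Ray locality condition giving planarity of the intersection graph on $\Opt \cup \local$, then the Chan--Har-Peled planar-exchange bound) is exactly the argument the paper uses to prove that theorem.
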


\begin{corollary}
    There is a \PTAS for instances of \PackPoints in which each region
    is a unit-capacity disk, and each point has unit weight.
\end{corollary}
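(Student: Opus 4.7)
The plan is to re-use the local search algorithm and analysis from \thmref{pack:uw:disks:into:u:points}, but with the roles of disks and points interchanged. Specifically, I would maintain a subset $\local \subseteq \PntSet$ that is a feasible packing (each input disk contains at most one point of $\local$), starting from $\local \leftarrow \emptyset$, and repeatedly perform a swap whenever some $X \subseteq \PntSet \setminus \local$ with $\cardin{X} \leq b+1$ can be added after removing at most $\cardin{X}-1$ points of $\local$ while preserving feasibility. Each swap strictly increases $\cardin{\local}$, so at most $\cardin{\PntSet}$ swaps occur and the running time is polynomial for any fixed $b$. Choosing $b = \Theta(1/\eps^2)$ at the end yields the claimed $(1-\eps)$-approximation.

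The only nontrivial step is proving that any $b$-locally optimal $\local$ satisfies $\cardin{\local} \geq (1 - O(1/\sqrt{b})) \cardin{\Opt}$. As in \thmref{pack:uw:disks:into:u:points}, it suffices to show that the conflict graph on $\local \cup \Opt$ is planar, where two points of $\PntSet$ are adjacent iff some input disk contains both; the planar analysis of Chan and Har-Peled \cite{ch-aamis-11} then delivers the approximation ratio as a black box.

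To establish planarity, I would apply the standard paraboloid lifting $f(x,y) = (x,y,x^2+y^2)$, which sends each point of $\PntSet$ to a point in $\Re^3$ (on the paraboloid) and each input disk of $\ObjSet$ to a halfspace in $\Re^3$, preserving the containment relation. Because $\local$ and $\Opt$ are both feasible packings and every disk has unit capacity, each lifted halfspace contains at most one lifted point of $\local$ and at most one lifted point of $\Opt$. Invoking the locality condition of Mustafa and Ray \cite{mr-irghs-10} on the range space of points and halfspaces in $\Re^3$, with $R = \local$ and $B = \Opt$ (viewed as their lifted copies), produces a planar bipartite graph on $\local \cup \Opt$ such that any halfspace meeting both $R$ and $B$ contributes an edge between the (unique) points it contains from each side. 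This graph is precisely the conflict graph restricted to $\local \times \Opt$, so the conflict graph is planar.

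The main obstacle would be the conceptual bookkeeping around dualizing the packing relationship, but because the locality condition is an intrinsic property of the range space $(\text{points}, \text{halfspaces})$ in $\Re^3$ and is insensitive to which side of the packing is being optimized, no new combinatorial-geometry arguments are needed beyond those already present in \secapndref{PTAS}; the proof reduces to verifying that the feasibility of $\local$ and $\Opt$ caps the intersection of every halfspace with each side by one, which is immediate from the unit-capacity assumption.
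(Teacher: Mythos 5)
Your proposal is correct, and its engine is the same one the paper uses: the Mustafa--Ray locality condition for points and halfspaces in $\Re^3$, which makes the bipartite conflict graph on $\local \cup \Opt$ planar, after which the Chan--Har-Peled local-search analysis applies as a black box. The difference is in the packaging. The paper intends this corollary to follow from \thmref{pack:uw:disks:into:u:points} (and its first corollary on \PackHalfspaces) via the reduction chain developed in \secref{halfRayDisk}: an instance of \PackPointsInDisks is lifted to \PackRaysInPlanes and then dualized to \PackHalfspaces, so the \PTAS transfers mechanically. You instead run the local search directly on $\PntSet$, lifting the input points to points on the paraboloid in $\Re^3$ and the unit-capacity disks to halfspaces, and apply the locality condition with $R = \local\setminus\Opt$, $B = \Opt\setminus\local$ (the small discrepancy that the locality condition needs $R,B$ disjoint is worth stating explicitly, but it is handled exactly as in the theorem's proof). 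This avoids naming \PackRaysInPlanes and the point/plane duality at all. Both routes are valid; the paper's buys reusability of the reductions across the three corollaries following the theorem, while yours is self-contained and makes transparent that the argument is symmetric in which side of the incidence is being packed, which is precisely the observation that makes this corollary immediate.
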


\section{Hardness of approximation}
\seclab{hardness}

\subsection{Packing same size fat triangles into points}

Here we show that \PackRegions (\probref{pack:points}) does not have a
\PTAS, even if the regions have unit weight and their union complexity
is linear. We show that the problem is \APX-hard using a reduction
from the maximum bounded $3$-dimensional matching problem.  Since
maximum bounded $3$-dimensional matching is \APX-complete
\cite{k-mbtdm-91}, this will imply the claim (unless $\P=\NP$).

\begin{theorem}
    Unless $\P=\NP$ there is no \PTAS for \PackRegions
    (\probref{pack:points}) even if the regions are unweighted, in the
    plane, and have linear union complexity. In particular, this holds
    if the regions are fat triangles of similar size.  (See
    \corref{pseudo} \itemref{s:triangles} for the matching
    approximation algorithm.)
    
    \thmlab{l:b:pack:tri:into:pnts}
\end{theorem}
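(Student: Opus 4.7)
The plan is to prove this by an L-reduction from maximum bounded $3$-dimensional matching (\emph{3DM-B}), which is \APXHard by \cite{k-mbtdm-91}. Given a 3DM-B instance with pairwise-disjoint sets $X, Y, Z$ of $n$ elements each and a collection $T$ of triples, where each element participates in at most $B = O(1)$ triples (so $\cardin{T} = O(n)$), I would construct an instance $(\ObjSet, \PntSet)$ of \PackRegions as follows: for every element $e \in X \cup Y \cup Z$, introduce a unit-capacity point $p_e$, and for every triple $t = (x, y, z) \in T$, introduce a unit-weight, $\alpha$-fat triangle $\Delta_t$ (for some fixed $\alpha$) of roughly unit size, with the crucial property that $\Delta_t \cap \brc{ p_e \sep{ e \in X \cup Y \cup Z}} = \brc{p_x, p_y, p_z}$.

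Granted this geometric construction, the reduction's correctness is immediate. A subset $\OSet \subseteq \ObjSet$ is a feasible packing if and only if the corresponding collection of triples forms a valid matching in 3DM-B: each element $e$ is present in at most one selected triple exactly because $p_e$ has capacity $1$ and lies in exactly the triangles of triples containing $e$. Consequently $\Opt(\text{3DM-B}) = \Opt(\PackRegions)$ on the two instances, and since both $\cardin{\PntSet}, \cardin{\ObjSet} = O(n)$, this is a strict linear-size reduction, transferring \APX-hardness and ruling out a \PTAS unless $\P = \NP$. Because similarly-sized fat triangles have linear union complexity (\corref{pseudo}\itemref{s:triangles}), such an instance lies in the promised region-class, so the theorem would follow; this also establishes that \corref{pseudo} is essentially tight.

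The main obstacle is therefore the geometric realization. My plan is to place the elements of $X$, $Y$, and $Z$ in three tight ``clusters'' located at the corners of a large equilateral macro-triangle; inside each cluster, the $n$ element-points are placed along a short strictly-convex arc (e.g., a tiny piece of a circle or parabola). Since the three clusters are separated by distances much larger than their internal diameters, \emph{any} triangle with one vertex in each cluster is automatically near-equilateral (hence $\alpha$-fat for a universal $\alpha$) and has size within a constant factor of the macro-triangle, giving fatness and similar size uniformly over all triples. The delicate point is to ensure that $\Delta_t$ picks out exactly $\{p_x,p_y,p_z\}$ and no unintended neighbors: here I would exploit the strict convexity of each arc together with a slight ``inward bulge'' of the triangle's edges (or an equivalent perturbation of the three vertices of $\Delta_t$ just outside the three chosen element-points), so that near each cluster the triangle hugs exactly one element. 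The bounded degree $B = O(1)$ is used to guarantee that these perturbations can be made small enough (polynomially in $n$) to preserve fatness and similar size, because only $O(n)$ triangles need to be realized this way.

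The hardest technical step will be verifying the ``exactly three points'' property globally: making precise the local perturbation argument so that no triangle $\Delta_t$ accidentally captures a fourth element-point from one of its three clusters, while simultaneously maintaining $\alpha$-fatness and uniform size across all $\cardin{T}$ triangles. I would handle this by choosing cluster radius $\ll 1$ (the macro-scale) but $\gg 1/\text{poly}(n)$ (to leave room for perturbations), and verifying by a direct geometric calculation that on a strictly convex arc, the straight segment between any two points $p_x, p_{x'}$ lies strictly on one side of any third such point, so a slightly inward-curved triangle edge can be replaced by a straight edge avoiding all non-endpoints---yielding a genuine fat triangle with the desired intersection pattern.
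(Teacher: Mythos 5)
Your approach is essentially the same as the paper's: both reduce from bounded $3$-dimensional matching (\cite{k-mbtdm-91}), associate unit-capacity points with elements and unit-weight fat triangles with triples, and arrange the points in three tight clusters roughly $120^\circ$ apart so that every triple-triangle is nearly equilateral, of similar size, hence $\alpha$-fat and of linear union complexity. The only difference is in the final realization step, where you overcomplicate: the paper simply places \emph{all} representative points on a single circle and takes each triangle's vertices to be exactly the three chosen representative points, so convexity of the circle immediately forces $\Delta_t$ to contain no other representative point --- no perturbation, inward bulge, or cluster-radius balancing is needed; your strictly-convex-arc idea already contains this observation if the arcs are oriented as portions of one circle and the triangle vertices are taken at the points themselves.
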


\begin{proof}
    Let $T\subseteq A\times B\times C$ be the input triples for an
    instance of maximum bounded $3$-dimensional matching, where $A$,
    $B$, and $C$ are disjoint subsets of some ground set $X$ (for
    simplicity we assume $X=A\cup B\cup C$).  For each element $x\in
    X$ we make a representative point $v_x$ and place it arbitrarily
    on the unit circle in the plane and give it unit capacity.  Let
    $V_A$, $V_B$, and $V_C$ be the sets of representatives for $A$,
    $B$, and $C$ (respectively).  A triple in $T$ thus corresponds to
    a triangle with one vertex in each of $V_A$, $V_B$, and $V_C$.
    Clearly, finding a maximum packing of these triangles into these
    points is an instance of \PackRegions.  Moreover, a maximum
    packing here corresponds to a maximum set of triangles (triples)
    such that each point (element of $X$) is covered by at most one
    triangle. Therefore a \PTAS for this problem translates to a \PTAS
    for the maximum bounded $3$-dimensional matching problem. (Note
    that this does not imply that there is no \PTAS for other specific
    types of regions.)
    
    \parpic[r]{\includegraphics[scale=.75]{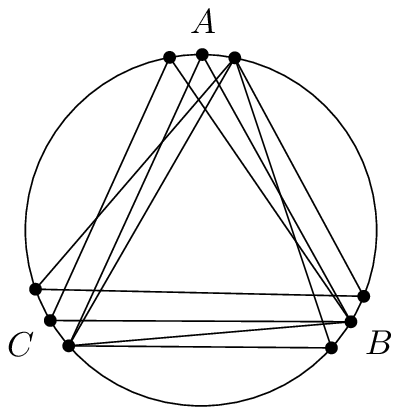}}
    Now we show that we can make the triangles fat and of similar
    size, and hence there is no \PTAS even in the case of linear union
    complexity.  Let the range of a set of representative points be
    the angle around the circle between the farthest two points of the
    set, and let the center of a set be the midpoint on the circle
    between the farthest two points of the set.  Instead of placing
    the points arbitrarily, we will place the points so that the range
    of each of $V_A$, $V_B$, or $V_C$ is less than five
    degrees. Moreover, we place the points so that the centers of
    $V_A$, $V_B$, and $V_C$ are 120 degrees apart.  In this case the
    triangles all have roughly the same size and are nearly
    equilateral.  It is known that such a set of triangles has linear
    union complexity \cite{mmpssw-ftdlm-94}. Hence, by the above
    reduction, even in this case where the regions are restricted to
    have linear union complexity (and even more specifically when they
    are restricted to be fat triangles of roughly the same size), we
    cannot get a \PTAS.
\end{proof}

\subsection{Packing points into fat triangles}

\begin{lemma}
    There is an approximation-preserving reduction from the
    \PStyle{Independent Set} problem in general graphs to the
    \PackPoints problem. In particular, for instances of the problem
    \PackPoints in which the regions are fat triangles with unit
    capacities and the points are unweighted, no approximation better
    than $\Omega( n^{1-\eps})$ is possible in polynomial time, for any
    constant $\eps > 0$, unless $\P = \NP$.
    
    \lemlab{l:b:pack:pnts:into:triangles}
\end{lemma}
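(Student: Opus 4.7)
The plan is to give an approximation-preserving reduction from the \PStyle{Independent Set} problem in arbitrary graphs to \PackPoints with unit-capacity fat triangles and unweighted points. Given $G=(V,E)$ with $V=\{v_1,\ldots,v_n\}$, I would create one point $p_i$ per vertex $v_i$ and one fat triangle $T_e$ of capacity $1$ per edge $e=\{v_i,v_j\}$, designed so that $T_e$ contains exactly the two endpoint points, i.e.\ $T_e\cap\{p_1,\ldots,p_n\}=\{p_i,p_j\}$. Granting this ``isolation'' property, the \PackPoints optimum equals the maximum independent set of $G$: any independent set $I\subseteq V$ gives a feasible packing $\{p_v : v\in I\}$ (no triangle overflows, since $I$ contains no edge), and conversely any feasible packing $S$ gives an independent set $\{v_i: p_i\in S\}$ (if both endpoints of some edge $e$ were chosen, $T_e$ would be overloaded). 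Composed with H\aa stad's $n^{1-\eps}$ inapproximability of \PStyle{Independent Set}, this transfers the stated lower bound to \PackPoints.

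The crux, and main obstacle, is realizing the isolation property using \emph{fat} triangles; with thin slivers the task would be immediate, but fatness forces every $T_e$ to have width comparable to $|p_i p_j|$. I would place the points on a strictly convex, vertically oriented curve, concretely $p_i=(-i^2,\,i)$ for $i=1,\ldots,n$, and, for each edge $e=\{v_i,v_j\}$ with $i<j$, take $T_e$ to be the equilateral triangle with $p_ip_j$ as one side and apex on the side of the chord opposite to the rest of the curve (the ``lower left''). Equilateral triangles are fat, so the only remaining obligation is to show that for $k\notin\{i,j\}$ the point $p_k$ lies outside $T_e$. Splitting on the position of $k$: if $k>j$, then $p_k$ has $y$-coordinate $k>j$, while $j$ is the largest $y$-coordinate among the three vertices of $T_e$ (the apex lies strictly below the chord by construction). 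If $k<i$, then $p_k$ has $x$-coordinate $-k^2>-i^2$, while $p_i$ is the rightmost vertex of $T_e$ (both $p_j$ and the apex have strictly smaller $x$-coordinate). Finally, if $i<k<j$, a short computation of the chord $p_ip_j$ at height $y=k$ gives $x_{\mathrm{chord}}(k)=ij-k(i+j)$, which differs from the $x$-coordinate $-k^2$ of $p_k$ by $-(k-i)(k-j)>0$; so $p_k$ lies strictly to the right of the chord, on the convex side of the curve, while $T_e$ lies on the opposite side.

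The remaining steps are routine bookkeeping. The reduction is polynomial in size, with $n$ points and at most $\binom{n}{2}$ triangles, and since the two objective values coincide, an $\alpha$-approximation algorithm for \PackPoints on the constructed instance yields an $\alpha$-approximation algorithm for \PStyle{Independent Set} on $G$; hence the $\Omega(n^{1-\eps})$ inapproximability transfers. The only genuinely non-trivial piece is the isolation verification above, and within that case (iii); cases (i) and (ii) reduce to simple coordinate comparisons, and the fatness of $T_e$ is immediate since each $T_e$ is equilateral.
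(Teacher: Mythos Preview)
Your proposal is correct and follows essentially the same approach as the paper: place the vertices on a strictly convex curve and encode each edge $\{u,v\}$ by a fat triangle sitting on the concave side of the chord $p_u p_v$, so that by convexity the triangle meets the point set exactly in $\{p_u,p_v\}$. The paper places the points on the unit circle and takes the third vertex strictly inside the disk (isolation then follows in one line from strict convexity of the disk), whereas you use a parabola and equilateral triangles and verify isolation by an explicit three-case coordinate computation; both constructions work, and yours has the minor bonus that equilateral triangles are uniformly $\tfrac{2}{\sqrt 3}$-fat without any further argument.
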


\begin{proof}
    Consider an instance of the \PStyle{Independent{ }Set} problem,
    namely a graph $\graph=(V,E)$. Let $n=\cardin{V}$. Place $n$
    distinct points on the unit circle (arbitrarily) and map every
    vertex of $V$ to a unique point of the resulting set of points
    $\PntSet$.  For every edge $uv \in E$, consider the segment
    $\pnt_u \pnt_v$, where $\pnt_u$ and $\pnt_v$ are the points
    corresponding to $u$ and $v$ in $\PntSet$. We construct a fat
    triangle containing $\pnt_u\pnt_v$ by connecting $\pnt_u$,
    $\pnt_v$, and a third vertex in the interior of the unit disk;
    this can always be done so as to achieve roughly 2-fatness. We add
    this triangle to our set of regions $\ObjSet$, and assign it
    capacity one.
    
    Clearly, solving the resulting instance $(\PntSet,\ObjSet)$ of
    \PackPoints is equivalent to solving the \PStyle{Independent Set}
    problem for $G$. The claim now follows from the hardness results
    known for the \PStyle{Independent{ }Set} problem \cite{h-chaw-99}.
\end{proof}


\InFullVer{
\section{Conclusions}
\seclab{conclusions}%

In this paper, we presented a general framework for approximating
geometric packing problems with non-uniform constraints. We then
applied this framework in a systematic fashion to get improved
algorithms for specific instances of this problem, many of which
required additional non-trivial ideas. There are several special cases
of this problem for which we currently do not know any useful
approximation; for example, the special case of packing axis-parallel
boxes into points, in which the boxes are in four dimensions is still
wide open. Making some progress on these special cases is an
interesting direction for future work.

\section*{Acknowledgments}

The authors thank Timothy Chan, Chandra Chekuri, and Esther Ezra for
several useful discussions.
}


\InFullVer{
\bibliographystyle{alpha}%
\bibliography{packing}

\appendix
}
\InConfVer{%
   \section{Additional details}
   \apndlab{additional:details}%

   \subsection{Basic tools}
   \apndlab{basic:tools}
   \BasicTools%

   \ImproveRunningTime%

}

\InConfVer{%
   
  \immediate\closeout\myoutfile. 
  \section{Proofs}
  \input{fragment/myfile.tmp}
}

\end{document}